\theoremstyle{definition}
\newtheorem{definition}{Definition}
\newtheorem{example}{Example}
\theoremstyle{plain}
\newtheorem{theorem}{Theorem}
\newtheorem{proposition}{Proposition}
\newtheorem{lemma}{Lemma}
\newtheorem{remark}{Remark}
\newtheorem{corollary}{Corollary}
\title{On the roots and minimum rank distance of skew cyclic codes}
\author{Umberto Mart{\'i}nez-Pe\~{n}as \thanks{umberto@math.aau.dk}}
\affil{Department of Mathematical Sciences, Aalborg University, Denmark}
\begin{document}

\date{}

\maketitle

\begin{abstract}
Skew cyclic codes play the same role as cyclic codes in the theory of error-correcting codes for the rank metric. In this paper, we give descriptions of these codes by root spaces, cyclotomic spaces and idempotent generators. We prove that the lattice of skew cyclic codes is anti-isomorphic to the lattice of root spaces, study these two lattices and extend the rank-BCH bound on their minimum rank distance to rank-metric versions of the van Lint-Wilson's shift and Hartmann-Tzeng bounds. Finally, we study skew cyclic codes which are linear over the base field, proving that these codes include all Hamming-metric cyclic codes, giving then a new relation between these codes and rank-metric skew cyclic codes.

\textbf{Keywords:} Cyclic codes, finite rings, Hamming distance, linearized polynomial rings, rank distance, skew cyclic codes. 

\textbf{MSC:} 15A03, 15B33, 94B15. \\
\end{abstract}

\section{Introduction} \label{introduction}

Cyclic codes play a very important role in the theory of error-correcting codes in the Hamming metric. On the other hand, error-correcting codes in the rank metric \cite{gabidulin} have been proven to be crucial in applications to network coding (see \cite{on-metrics}). Only a few families of rank-metric codes are known (for instance \cite{gabidulin,new-construction}) and only for a restricted choice of parameters. Therefore it is of interest to study new and different families of codes with good rank-metric parameters, simple algebraic descriptions and fast encoding and decoding algorithms.

Usual cyclic codes have been considered for the rank metric in \cite{oggiercyclic,rankdistancecyclic} and a new construction, the so-called rank $ q $-cyclic codes, was introduced in \cite{gabidulin} for square matrices and has been generalized in \cite{qcyclic} for other lengths. Independently, this notion has been generalized to skew or $ q^r $-cyclic codes in the work by Ulmer et al. in \cite{skewcyclic1,skewcyclic2,skewcyclic3}, where $ r $ may be different from $ 1 $.

Some Gabidulin codes consisting of square matrices are $ q $-cyclic (see \cite{gabidulin,qcyclic}), which implies that the family of $ q $-cyclic codes includes some maximum rank distance (MRD) codes. In \cite{gabidulin}, in \cite{qcyclic} and in \cite{skewcyclic1,skewcyclic2,skewcyclic3}, it is also shown (in increasing order of generality) that these codes can be represented as left ideals in a quotient ring of linearized polynomials. Therefore, this construction of rank-metric codes seems to be the appropriate extension of cyclic codes to the rank metric.

In this paper, we focus on two objectives: First, studying the minimum rank distance of skew cyclic codes by giving new lower bounds and by relating it with the Hamming metric. Secondly, studying and relating the lattices of skew cyclic codes and root spaces, which in particular allows to easily construct skew cyclic codes and compare the sharpness of the obtained bounds.

After some preliminaries in Section \ref{section preliminaries}, the results are organized as follows: In Section \ref{section roots}, we give descriptions of skew cyclic codes by root spaces and cyclotomic spaces. In Section \ref{section lattices}, we prove that the lattices of skew cyclic codes and root spaces are anti-isomorphic (isomorphic with the orders reversed), and study these lattices. In Section \ref{section bounds}, we give bounds on their minimum rank distance, extending the rank-BCH bound obtained in \cite{skewcyclic3} to rank-metric versions of the Hartmann-Tzeng bound \cite{hartmann} and the van Lint-Wilson shift bound \cite{vanlint}. Finally, in Section \ref{general cyclic}, we study skew cyclic codes that are linear over the base field, proving that classical cyclic codes equipped with the Hamming metric are a particular case of skew cyclic codes equipped with the rank metric, giving then new relations between both.

\section{Definitions and preliminaries} \label{section preliminaries}

\subsection{Finite field extensions used in this work}

Fix from now on a prime power $ q $ and positive integers $ m $, $ n $ and $ r $, and assume that $ m $ divides $ rn $. We will consider the four finite fields $ \mathbb{F}_q $, $ \mathbb{F}_{q^r} $, $ \mathbb{F}_{q^m} $ and $ \mathbb{F}_{q^{rn}} $ shown in the following graph, where $ \mathbb{F} \longrightarrow \mathbb{F}^\prime $ means that $ \mathbb{F}^\prime $ is an extension of $ \mathbb{F} $:

\begin{displaymath}
\begin{array}{rcccl}
 & & \mathbb{F}_q & & \\
  & \swarrow & & \searrow & \\
 \mathbb{F}_{q^m} & & & & \mathbb{F}_{q^r} \\
 & \searrow & & \swarrow & \\
  & & \mathbb{F}_{q^{rn}} & & \\
\end{array}
\end{displaymath}

Dimensions of vector spaces over a field $ \mathbb{F} $ will be denoted by $ \dim_{\mathbb{F}} $, or just $ \dim $ if the field is clear from the context. For a field extension $ \mathbb{F} \subseteq \mathbb{F}^\prime $ and a subset $ A \subseteq \mathbb{F}^{\prime n} $, we denote by $ \langle A \rangle_{\mathbb{F}} $ the $ \mathbb{F} $-linear vector space in $ \mathbb{F}^{\prime n} $ generated by $ A $.

\subsection{Rank-metric codes and generalized Gabidulin codes} \label{subsec rank codes}

For convenience, all coordinates from $ 0 $ to $ n - 1 $ or $ m - 1 $ will be considered as integers modulo $ n $ or $ m $, respectively. Given $ \mathbf{c} = (c_0, c_1, \ldots, c_{n-1}) \in \mathbb{F}_{q^m}^n $, its rank weight \cite{gabidulin} is defined as $ {\rm wt_R}(\mathbf{c}) = \dim_{\mathbb{F}_q}(\langle c_0, c_1, \ldots, c_{n-1} \rangle_{\mathbb{F}_q}) $. Equivalently, if $ \alpha_0, \alpha_1, \ldots, \alpha_{m-1} $ is a basis of $ \mathbb{F}_{q^m} $ over $ \mathbb{F}_q $ and $ \mathbf{c} = \sum_{i=0}^{m-1} \alpha_i \mathbf{c}_i $, where $ \mathbf{c}_i \in \mathbb{F}_q^n $, then $ {\rm wt_R}(\mathbf{c}) = \dim_{\mathbb{F}_q}(\langle \mathbf{c}_0, \mathbf{c}_1, \ldots, $ $ \mathbf{c}_{m-1} \rangle_{\mathbb{F}_q}) $.

For an $ \mathbb{F}_{q^m} $-linear code $ C \subseteq \mathbb{F}_{q^m}^n $, its minimum rank distance is $ d_R(C) = \min \{ {\rm wt_R}(\mathbf{c}) \mid \mathbf{c} \in C \setminus \{ \mathbf{0} \} \} $. We have the Singleton bound \cite{gabidulin} $ d_R(C) \leq n - \dim(C) + 1 $, and we say that $ C $ is maximum rank distance (MRD) if equality holds.

Sometimes we will use a normal basis, that is, a basis of $ \mathbb{F}_{q^m} $ (or $ \mathbb{F}_{q^n} $) over $ \mathbb{F}_q $ of the form $ \alpha, \alpha^{[1]}, \alpha^{[2]}, \ldots, \alpha^{[m-1]} $ (or $ \alpha^{[n-1]} $), for some $ \alpha \in \mathbb{F}_{q^m} $, where we use the notation $ [i] = q^i $. Normal bases exist for all values of $ m $ (or $ n $). See for instance, \cite[Theorem 3.73]{lidl}.

We will consider the following family of MRD codes, usually called Gabidulin codes. They were originally defined in \cite{gabidulin} for $ r=1 $, and generalized for any $ r $ in \cite{new-construction}. Assume that $ n \leq m $ and $ r $ and $ m $ are coprime, and take a vector $ \boldsymbol\beta = (\beta_0,\beta_1, \ldots, \beta_{n-1}) \in \mathbb{F}_{q^m}^{n} $, where $ \beta_0,\beta_1, \ldots, \beta_{n-1} $ are linearly independent over $ \mathbb{F}_q $, and an integer $ 1 \leq k \leq n $. We define the (generalized) Gabidulin code of dimension $ k $ in $ \mathbb{F}_{q^m}^n $ as the $ \mathbb{F}_{q^m} $-linear code $ {\rm Gab}_{k,r}(\boldsymbol\beta) $ with parity check matrix given by
\begin{displaymath}
\mathcal{H}_{k,r}(\boldsymbol\beta) = \left(
\begin{array}{ccccc}
\beta_0 & \beta_1 & \beta_2 & \ldots & \beta_{n-1} \\
\beta_0^{[r]} & \beta_1^{[r]} & \beta_2^{[r]} & \ldots & \beta_{n-1}^{[r]} \\
\beta_0^{[2r]} & \beta_1^{[2r]} & \beta_2^{[2r]} & \ldots & \beta_{n-1}^{[2r]} \\
\vdots & \vdots & \vdots & \ddots & \vdots \\
\beta_{0}^{[(n-k-1)r]} & \beta_{1}^{[(n-k-1)r]} & \beta_{2}^{[(n-k-1)r]} & \ldots & \beta_{n-1}^{[(n-k-1)r]} \\
\end{array} \right).
\end{displaymath}

\subsection{Linearized polynomials}

Denote by $ \mathcal{L}_{q^r}\mathbb{F}_{q^m}[x] $ the set of $ q^r $-linearized polynomials (abbreviated as $ q^r $-polynomials) over $ \mathbb{F}_{q^m} $ (see \cite{gabidulin,orespecial,ore} or \cite[Chapter 3]{lidl}), that is, the polynomials in $ x $ of the form
$$ F(x) = F_0 x + F_1 x^{[r]} + F_2 x^{[2r]} + \cdots + F_d x^{[dr]}, $$
where $ F_0, F_1, \ldots, F_d \in \mathbb{F}_{q^m} $, for $ i = 0,1,2, \ldots, d $. We will denote $ \deg_{q^r}(F(x)) = d $ if $ F_d \neq 0 $ and consider the symbolic product $ \otimes $ in $ \mathcal{L}_{q^r}\mathbb{F}_{q^m}[x] $, defined as follows
$$ F(x) \otimes G(x) = F(G(x)), $$
for any $ F(x),G(x) \in \mathcal{L}_{q^r}\mathbb{F}_{q^m}[x] $ (see \cite{gabidulin,lidl,orespecial,ore}). This product is distributive with respect to usual addition, associative, non-commutative and $ x $ is a left and right unit. Endowed with it and usual addition, $ \mathcal{L}_{q^r}\mathbb{F}_{q^m}[x] $ is a left and right Euclidean domain, that is, left and right Euclidean divisions exist (see \cite{orespecial,ore}). The term ``product'' will mean ``symbolic product'', and we will use the term ``conventional'' for the usual product.

\subsection{Skew cyclic codes: Generator and check polynomials}

\begin{definition}[\textbf{\cite{skewcyclic1,skewcyclic2,skewcyclic3,gabidulin,qcyclic}}] \label{definition qcyclic}
Let $ C \subseteq \mathbb{F}_{q^m}^n $ be an arbitrary (linear or non-linear) code. We say that it is skew cyclic or $ q^r $-cyclic if the $ q^r $-shifted vector
\begin{equation} \label{shifting operator}
\sigma_{r,n}(\mathbf{c}) = (c_{n-1}^{[r]}, c_0^{[r]}, c_1^{[r]}, \ldots, c_{n-2}^{[r]})
\end{equation}
lies in $ C $, for every $ \mathbf{c} = (c_0,c_1, \ldots, c_{n-1}) \in C $.
\end{definition}

Observe that we may assume that $ 1 \leq r \leq m $. Moreover, by taking $ r = m $, we recover the definition of cyclic codes.

Since $ m $ divides $ rn $, $ x^{[rn]} - x $ commutes with every $ q^r $-polynomial in $ \mathcal{L}_{q^r}\mathbb{F}_{q^m}[x] $ and we may consider the quotient ring $ \mathcal{L}_{q^r}\mathbb{F}_{q^m}[x] / (x^{[rn]} - x) $, isomorphic as $ \mathbb{F}_{q^m} $-linear vector space to $ \mathbb{F}_{q^m}^n $ by the map $ \gamma_r : \mathbb{F}_{q^m}^n \longrightarrow \mathcal{L}_{q^r}\mathbb{F}_{q^m}[x] / (x^{[rn]} - x) $, where
\begin{equation} \label{definition gamma}
\gamma_r (F_0, F_1, \ldots, F_{n-1}) = F_0 x + F_1 x^{[r]} + F_2 x^{[2r]} + \cdots + F_{n-1}x^{[(n-1)r]}.
\end{equation}
In the rest of the paper, given $ F(x) \in \mathcal{L}_{q^r}\mathbb{F}_{q^m}[x] $, we will use the notation $ F $ for the class of $ F(x) $ modulo $ x^{[rn]} - x $, that is, for the element $ F = F(x) + (x^{[rn]} - x) \in \mathcal{L}_{q^r}\mathbb{F}_{q^m}[x] / (x^{[rn]} - x) $. 

For $ C \subseteq \mathbb{F}_{q^m}^n $, we define $ C(x) = \gamma_r(C) $, that is, the image of $ C $ by the map $ \gamma_r $. The following characterization is obtained independently in \cite[Theorem 1]{skewcyclic1} and \cite[Lemma 3]{qcyclic}:

\begin{lemma} [\textbf{\cite{skewcyclic1,qcyclic}}] \label{char qcyclic}
A code $ C \subseteq \mathbb{F}_{q^m}^n $ is $ \mathbb{F}_{q^m} $-linear and $ q^r $-cyclic if, and only if, $ C(x) $ is a left ideal in $ \mathcal{L}_{q^r}\mathbb{F}_{q^m}[x] / (x^{[rn]} - x) $.
\end{lemma}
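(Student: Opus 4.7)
The plan is to transport the two defining properties of a skew cyclic code ($\mathbb{F}_{q^m}$-linearity and closure under the shift $\sigma_{r,n}$) through the $\mathbb{F}_{q^m}$-linear isomorphism $\gamma_r$ of (\ref{definition gamma}) into closure conditions on $C(x)$ under two particular left multiplications, and then to show that these two closure conditions together already force $C(x)$ to be a left ideal.

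First, I would compute what left multiplication by $ax$ (for $a \in \mathbb{F}_{q^m}$) and by $x^{[r]}$ do to an element $F = \gamma_r(\mathbf{c})$ of the quotient ring. By the definition of the symbolic product, $(ax) \otimes F(x) = (ax)(F(x)) = a F(x)$, so left multiplication by $ax$ is precisely scalar multiplication by $a$ on coefficients. Hence $C(x)$ is closed under left multiplication by $ax$ for every $a \in \mathbb{F}_{q^m}$ if and only if $C$ is an $\mathbb{F}_{q^m}$-subspace of $\mathbb{F}_{q^m}^n$.

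Second, writing $F(x) = \sum_{i=0}^{n-1} F_i x^{[ir]}$ I would compute $x^{[r]} \otimes F(x) = F(x)^{[r]} = \sum_{i=0}^{n-1} F_i^{[r]} x^{[(i+1)r]}$, and then use $x^{[rn]} \equiv x \pmod{x^{[rn]}-x}$ to cycle the top term back, obtaining
\begin{displaymath}
x^{[r]} \otimes F = F_{n-1}^{[r]} x + F_0^{[r]} x^{[r]} + \cdots + F_{n-2}^{[r]} x^{[(n-1)r]} = \gamma_r(\sigma_{r,n}(\mathbf{c})).
\end{displaymath}
Thus closure of $C(x)$ under left multiplication by $x^{[r]}$ is exactly closure of $C$ under $\sigma_{r,n}$.

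The third step is to observe that every element of $\mathcal{L}_{q^r}\mathbb{F}_{q^m}[x] / (x^{[rn]}-x)$ is an $\mathbb{F}_{q^m}$-linear combination of the monomials $x^{[ir]}$ with $0 \leq i \leq n-1$, and that $a_i x^{[ir]} = (a_i x) \otimes (x^{[r]})^{\otimes i}$ in the ring, so that $L_{a_i x^{[ir]}} = L_{a_i x} \circ L_{x^{[r]}}^{\,i}$, where $L_g$ denotes left multiplication by $g$. Therefore a subset of the quotient that is closed under addition, under $L_{ax}$ for every $a \in \mathbb{F}_{q^m}$, and under $L_{x^{[r]}}$, is automatically closed under $L_g$ for every $g$, hence is a left ideal. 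Combined with the two correspondences above, this yields both implications.

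The only delicate point, and the one I would verify most carefully, is the cyclic reduction in the second step, which tacitly relies on the centrality of $x^{[rn]}-x$ guaranteed by the hypothesis $m \mid rn$ (so that all $F_i \in \mathbb{F}_{q^m} \subseteq \mathbb{F}_{q^{rn}}$ are fixed by $[rn]$); everything else is a routine translation between the coordinate and the polynomial pictures of $\mathbb{F}_{q^m}^n$.
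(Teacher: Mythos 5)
Your proof is correct: the identities $(ax)\otimes F = aF$ and $x^{[r]}\otimes F = \gamma_r(\sigma_{r,n}(\mathbf{c}))$ (the latter using the reduction $x^{[rn]}\equiv x$, valid because $m\mid rn$ makes $x^{[rn]}-x$ central so the quotient is a genuine ring), together with the decomposition $L_{a_ix^{[ir]}} = L_{a_ix}\circ L_{x^{[r]}}^{\,i}$, give both implications. The paper does not prove this lemma itself but cites it from \cite{skewcyclic1,qcyclic}; your argument is the standard one used there, so there is nothing to reconcile.
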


\begin{remark}
In \cite{skewcyclic1,skewcyclic2,skewcyclic3}, and in \cite{qcyclic} for $ r=1 $, left ideals in the rings $ \mathcal{L}_{q^r}\mathbb{F}_{q}[x] / (L(x)) $ are also considered, where $ L(x) $ commutes with every other $ q^r $-polynomial in $ \mathcal{L}_{q^r}\mathbb{F}_{q^m}[x] $. We will call these codes pseudo-$ q^r $-cyclic codes. The results in this paper concerning $ q^r $-root spaces and left ideals in $ \mathcal{L}_{q^r}\mathbb{F}_{q^m}[x] / (x^{[rn]} - x) $ may be directly generalized to left ideals in $ \mathcal{L}_{q^r}\mathbb{F}_{q}[x] / (L(x)) $, if $ L(x) $ has simple roots and if we replace $ \mathbb{F}_{q^{rn}} $ by the splitting field of $ L(x) $. The results are written for $ L(x) = x^{[rn]} - x $ for simplicity.
\end{remark}

From now on, we will fix a left ideal $ C(x) $. The following theorem summarizes the main properties of the generator and check polynomials of $ C $. These were proven in \cite{skewcyclic2}, in \cite{qcyclic} for $ r=1 $, and originally in \cite{gabidulin} for $ r=1 $ and $ m=n $:

\begin{theorem} [\textbf{Generator and check polynomials \cite{skewcyclic2,gabidulin,qcyclic}}] \label{generator}
There exists a unique $ q^r $-polynomial $ G(x) = G_0x + G_1 x^{[r]} + \cdots + G_{n-k} x^{[(n-k)r]} $ over $ \mathbb{F}_{q^m} $ of degree $ q^{(n-k)r} $ that is monic and of minimal degree among the $ q^r $-polynomials whose residue class modulo $ x^{[rn]} - x $ lies in $ C(x) $. It satisfies that $ C(x) = (G) $. There exists another (unique) $ q^r $-polynomial $ H(x) = H_0 x + H_1 x^{[r]} + \cdots + H_k x^{[kr]} $  over $ \mathbb{F}_{q^m} $ such that $ x^{[rn]} - x = G(x) \otimes H(x) = H(x) \otimes G(x) $. They satisfy:
\begin{enumerate}
\item
A $ q^r $-polynomial $ F $ lies in $ C(x) $ if, and only if, $ G(x) $ divides $ F(x) $ on the right (in the ring $ \mathcal{L}_{q^r} \mathbb{F}_{q^m}[x] $).
\item
The $ q^r $-polynomials $ x \otimes G, x^{[r]} \otimes G, \ldots, x^{[(k-1)r]} \otimes G $ constitute a basis (over $ \mathbb{F}_{q^m} $) of $ C(x) $.
\item
The dimension of $ C $ (over $ \mathbb{F}_{q^m} $) is $ k = n - \deg_{q^r}(G(x)) $.
\item
$ C $ has a generator matrix (over $ \mathbb{F}_{q^m} $) given by
\begin{displaymath}
\mathcal{G} = \left(
\begin{array}{ccccccc}
G_0 & G_1 & \ldots & G_{n-k} & 0 & \ldots & 0 \\
0 & G_0^{[r]} & \ldots & G_{n-k-1}^{[r]} & G_{n-k}^{[r]} & \ldots & 0 \\
\vdots & \vdots & \ddots & \vdots & \vdots & \ddots & \vdots \\
0 & 0 & \ldots & G_0^{[(k-1)r]} & G_1^{[(k-1)r]} & \ldots & G_{n-k}^{[(k-1)r]} \\
\end{array} \right).
\end{displaymath}
Moreover, if $ C $ has another generator matrix $ \mathcal{G}^\prime $  with the same form, for the values $ G_i^\prime $, $ i = 0,1,2, \ldots, n-k $, then $ G^\prime_i = G^\prime_{n-k} G_i $, for all $ i $. 
\item
A $ q^r $-polynomial $ F $ lies in $ C(x) $ if, and only if, $ F \otimes H = 0 $.
\item
$ C $ has a parity check matrix (over $ \mathbb{F}_{q^m} $) given by
\begin{displaymath}
\mathcal{H} = \left(
\begin{array}{ccccccc}
h_k & h_{k-1} & \ldots & h_0 & 0 & \ldots & 0 \\
0 & h_k^{[r]} & \ldots & h_{1}^{[r]} & h_{0}^{[r]} & \ldots & 0 \\
\vdots & \vdots & \ddots & \vdots & \vdots & \ddots & \vdots \\
0 & 0 & \ldots & h_k^{[(n-k-1)r]} & h_{k-1}^{[(n-k-1)r]} & \ldots & h_{0}^{[(n-k-1)r]} \\
\end{array} \right),
\end{displaymath}
where $ h_i = H_i^{[(k-i)r]} $. 
\item
$ C^\perp $ is also $ q^r $-cyclic and its generator of minimal degree is $ H^\perp(x) = (h_k x + h_{k-1}x^{[r]} + \cdots + h_0 x^{[kr]}) / h_0 $.
\end{enumerate}
\end{theorem}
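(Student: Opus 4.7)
The plan is to do everything in the left Euclidean domain $R = \mathcal{L}_{q^r}\mathbb{F}_{q^m}[x]$ and transfer the results to the quotient $\bar R = R/(z)$ with $z := x^{[rn]} - x$. Let $I \subseteq R$ be the preimage of $C(x)$ under the projection; then $I$ is a left ideal containing the central element $z$, so left Euclidean division yields a unique monic $G \in I$ of minimal $q^r$-degree with $I = R \otimes G$, giving part (1) immediately. Since $z \in R \otimes G$, I can factor $z = H \otimes G$ in $R$. To obtain $G \otimes H = H \otimes G$ as well, I use the centrality of $z$: from $z \otimes G = G \otimes z$ one gets $H \otimes G \otimes G = G \otimes H \otimes G$, and cancellation of $G$ on the right (valid because $R$ is a non-commutative domain) gives the desired equality. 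This is the one step that is not completely mechanical.

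Parts (2)--(4) follow from a standard isomorphism argument: the map $f \mapsto f \otimes G$ descends to an $\mathbb{F}_{q^m}$-linear isomorphism $R/(R \otimes H) \to C(x)$, and $\{x, x^{[r]}, \ldots, x^{[(k-1)r]}\}$ is an $\mathbb{F}_{q^m}$-basis of $R/(R \otimes H)$ by left Euclidean division, with $k = n - \deg_{q^r} G$. Computing $x^{[ir]} \otimes G(x) = G(x)^{[ir]} = \sum_j G_j^{[ir]} x^{[(i+j)r]}$ reads off the $i$-th row of $\mathcal{G}$, and uniqueness of the $G_i$ up to a common scalar comes from applying part (1) to the first row of any alternative generator matrix $\mathcal{G}'$ and matching leading coefficients.

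Part (5) is symmetric: $F = f \otimes G$ gives $F \otimes H \equiv 0 \pmod{z}$, and conversely $F \otimes H = p \otimes z = p \otimes G \otimes H$ combined with right cancellation of $H$ gives $F \in R \otimes G$. For (6), expanding the coefficients of $F(H(x))$ in $\bar R$ and restricting to the indices $k \le \ell \le n-1$ (where the modular reduction by $z$ is trivial, since $0 \le i+j \le n-1+k$ forces $i+j = \ell$ rather than $\ell + n$) produces exactly the $n-k$ rows of the stated $\mathcal{H}$ after the substitution $h_i = H_i^{[(k-i)r]}$; the staircase entries $h_k^{[\ell' r]} = H_k^{[\ell' r]}$ are nonzero, so $\mathcal{H}$ has rank $n - k = \dim C^\perp$ and is indeed a parity check matrix.

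Finally for (7), a direct computation gives the near-isometry $\langle \sigma_{r,n}(\mathbf{c}), \sigma_{r,n}(\mathbf{d})\rangle = \langle \mathbf{c}, \mathbf{d}\rangle^{[r]}$; combined with $\sigma_{r,n}^n = \mathrm{id}$ (which uses $m \mid rn$) this yields $\langle \sigma_{r,n}(\mathbf{c}), \mathbf{d}\rangle = \langle \mathbf{c}, \sigma_{r,n}^{n-1}(\mathbf{d})\rangle^{[r]}$, so $C^\perp$ is $q^r$-cyclic. The rows of $\mathcal{H}$ then form a generator matrix of $C^\perp$ of exactly the shape of (4) for the polynomial $h_k x + h_{k-1} x^{[r]} + \cdots + h_0 x^{[kr]}$; its constant coefficient $h_0 = H_0^{[kr]}$ is nonzero, because comparing coefficients of $x$ in $G \otimes H = z$ yields $G_0 H_0 = -1$, so dividing through produces the monic generator $H^\perp(x)$. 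The main obstacles are proving $G \otimes H = H \otimes G$ and tracking the Frobenius twists in the coefficient expansion of (6); once these are in hand, the rest is routine given the Euclidean structure of $R$ and the centrality of $z$.
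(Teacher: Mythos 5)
The paper does not actually prove this theorem: it is imported verbatim from the cited references (Boucher--Ulmer, Gabidulin, and the $q$-cyclic paper), so there is no in-paper argument to compare yours against. Judged on its own, your proof is correct and self-contained. The key non-routine steps all check out: lifting $C(x)$ to the left ideal $I = R \otimes G$ in $R = \mathcal{L}_{q^r}\mathbb{F}_{q^m}[x]$ via division on the right by the monic minimal-degree element; deducing $G \otimes H = H \otimes G$ from the centrality of $z = x^{[rn]}-x$ together with right cancellation of $G$ (legitimate, since $\deg_{q^r}$ is additive on symbolic products, so $R$ has no zero divisors); the isomorphism $R/(R\otimes H) \cong C(x)$ induced by $f \mapsto f \otimes G$, whose kernel computation again uses right cancellation and gives items (2)--(4) at once; and the index bookkeeping in (6), where restricting to coefficients of $x^{[\ell r]}$ with $k \le \ell \le n-1$ correctly avoids wraparound modulo $z$ because $i+j \le n-1+k < \ell + n$, and the substitution $h_j = H_j^{[(k-j)r]}$ turns $\sum_j F_{\ell'+ (k-j)} H_j^{[(k+\ell'-j)r]}$ into the $\ell'$-th row of $\mathcal{H}$. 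The dimension counts closing (6) and (7) (rank of $\mathcal{H}$ equals $n-k$ because $H_k \ne 0$; $h_0 \ne 0$ because $G_0H_0 = -1$; the degree-$k$ element $h_kx + \cdots + h_0x^{[kr]}$ of $(C^\perp)(x)$ must be a scalar multiple of the minimal generator) are all valid. The only cosmetic point is the left/right terminology for the Euclidean division, which should match the paper's convention that $G$ divides $F$ \emph{on the right} when $F = Q \otimes G$.
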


The $ q^r $-polynomial $ G(x) $ will be called the minimal generator of $ C(x) $, and $ H(x) $ will be called the minimal check $ q^r $-polynomial of $ C(x) $.

\subsection{The assumptions on the lengths of skew cyclic codes}

To conclude, we see that restricting to the case where $ m $ divides $ rn $ does not leave any $ q^r $-cyclic code out of study. Assume that $ N $ is a positive integer, and take an arbitrary $ q^r $-cyclic code $ C \subseteq \mathbb{F}_{q^m}^N $. Define $ n = {\rm lcm}(m,N) $, which satisfies that $ n = sm = tN $ for positive integers $ s $ and $ t $, and define $ \psi : \mathbb{F}_{q^m}^N \longrightarrow \mathbb{F}_{q^m}^n $ by 
$$ \psi(c_0,c_1, \ldots, c_{N-1}) = (c_0,c_1, \ldots, c_{N-1} ; c_0,c_1, \ldots, c_{N-1} ; \ldots ; c_0,c_1, \ldots, c_{N-1}), $$
where we repeat the vector $ (c_0,c_1, \ldots, c_{N-1}) $ $ t $ times. It holds that $ \psi $ is $ \mathbb{F}_{q^m} $-linear, one to one and $ {\rm wt_R}(\mathbf{c}) = {\rm wt_R}(\psi(\mathbf{c})) $, for all $ \mathbf{c} \in \mathbb{F}_{q^m}^N $. Moreover, if we define $ \sigma_{r,n} $ and $ \sigma_{r,N} $ as in Definition \ref{definition qcyclic}, then $ \psi (\sigma_{r,N}(\mathbf{c})) = \sigma_{r,n}(\psi(\mathbf{c})) $, for all $ \mathbf{c} \in \mathbb{F}_{q^m}^N $, and therefore, $ C \subseteq \mathbb{F}_{q^m}^N $ is $ q^r $-cyclic if, and only if, so is $ \psi(C) $. The same holds for $ \mathbb{F}_q $-linearity and $ \mathbb{F}_{q^m} $-linearity. To sum up, every $ q^r $-cyclic code can be seen as a code in $ \mathbb{F}_{q^m}^n $, where $ m $ divides $ rn $.

\section{Root spaces and cyclotomic spaces} \label{section roots}

In this section we will describe left ideals in $ \mathcal{L}_{q^r}\mathbb{F}_{q^m}[x] / (x^{[rn]}-x) $ in terms of $ q^r $-root spaces and $ q^r $-cyclotomic spaces, which are a subfamily of the former one and will which allow to easily construct skew cyclic codes. As in the classical theory of cyclic codes, we will see that the lattice of $ q^r $-cyclic codes is anti-isomorphic (isomorphic with the orders reversed) to the lattice of $ q^r $-root spaces. In Section \ref{section bounds} we will use this $ q^r $-root space description of $ q^r $-cyclic codes to extend the rank-BCH bound in \cite[Proposition 1]{skewcyclic3} to more general bounds on the minimum rank distance of $ q^r $-cyclic codes.

\subsection{The root space associated to a skew cyclic code}

A $ q^r $-polynomial $ F(x) $ over $ \mathbb{F}_{q^m} $ defines an $ \mathbb{F}_{q^r} $-linear map $ F: \mathbb{F}_{q^{rn}} \longrightarrow \mathbb{F}_{q^{rn}} $, and in particular its set of roots or zeroes in $ \mathbb{F}_{q^{rn}} $ is an $ \mathbb{F}_{q^r} $-linear vector space. 

\begin{definition}[\textbf{Root spaces}] \label{def root space}
An $ \mathbb{F}_{q^r} $-linear subspace of $ \mathbb{F}_{q^{rn}} $ will be called a $ q^r $-root space over $ \mathbb{F}_{q^m} $ if it is the space of roots in $ \mathbb{F}_{q^{rn}} $ of some $ F(x) \in \mathcal{L}_{q^r}\mathbb{F}_{q^m}[x] $.

On the other hand, for a residue class $ F = F(x) + (x^{[rn]}-x) $, we define its root space, denoted as $ Z(F) $, as the root space in $ \mathbb{F}_{q^{rn}} $ of $ F(x) $. 

Finally, define the map $ \rho_r $ between the family of $ \mathbb{F}_{q^m} $-linear $ q^r $-cyclic codes in $ \mathbb{F}_{q^m}^n $ and the family of $ q^r $-root spaces over $ \mathbb{F}_{q^m} $ in $ \mathbb{F}_{q^{rn}} $ by $ \rho_r(C) = T $, where $ T = Z(G) $ and $ G(x) $ is the minimal generator of $ C(x) $.
\end{definition}

Observe that the second definition is consistent, since if $ F_1 = F_2 $, then $ F_1(x) - F_2(x) $ is divisible on the right by $ x^{[rn]} - x $, and hence $ F_1(x) $ and $ F_2(x) $ have the same roots in $ \mathbb{F}_{q^{rn}} $. The following lemma is a particular case of \cite[Theorem 3.50]{lidl}:

\begin{lemma} [\textbf{\cite{lidl}}] \label{root structure}
Given $ F(x) \in \mathcal{L}_{q^r}\mathbb{F}_{q^m}[x] $, assume that the set of its roots $ T $ lie in $ \mathbb{F}_{q^{rn}} $ and all roots are simple. Then
$$ deg_{q^r}(F(x)) = \dim_{\mathbb{F}_{q^r}}(T). $$
\end{lemma}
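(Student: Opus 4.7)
The plan is to compare two counts of the cardinality $|T|$: one coming from the conventional degree of $F(x)$ combined with simplicity of the roots, and one coming from the $\mathbb{F}_{q^r}$-linear structure of $T$.

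First I would record that, because $F(x) = F_0 x + F_1 x^{[r]} + \cdots + F_d x^{[dr]}$ with $F_d \neq 0$ and $d = \deg_{q^r}(F(x))$, the conventional degree of $F(x)$ as an ordinary polynomial in $x$ is exactly $q^{dr}$. By the standard fact that a nonzero polynomial over a field has at most as many roots (counted with multiplicity) as its conventional degree, and using the hypothesis that every root of $F$ is simple, the number of roots of $F$ in any extension field of $\mathbb{F}_{q^m}$ is at most $q^{dr}$, with equality precisely when all roots lie in that field. Since by assumption the root set $T$ lies in $\mathbb{F}_{q^{rn}}$, we conclude $|T| = q^{dr}$.

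Next I would invoke the $\mathbb{F}_{q^r}$-linearity observed just before the statement: $F \colon \mathbb{F}_{q^{rn}} \to \mathbb{F}_{q^{rn}}$ is an $\mathbb{F}_{q^r}$-linear map, hence $T = \ker(F)$ is an $\mathbb{F}_{q^r}$-subspace of $\mathbb{F}_{q^{rn}}$. Setting $e = \dim_{\mathbb{F}_{q^r}}(T)$, finiteness gives $|T| = (q^r)^e = q^{re}$.

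Equating the two expressions $q^{dr} = q^{re}$ immediately yields $d = e$, that is, $\deg_{q^r}(F(x)) = \dim_{\mathbb{F}_{q^r}}(T)$, as required. There is no real obstacle here: the only non-trivial ingredient is that the conventional degree of a $q^r$-polynomial of $q^r$-degree $d$ is exactly $q^{dr}$, which is immediate from the shape of its highest-order term $F_d x^{[dr]} = F_d x^{q^{dr}}$, and the fact that the $\mathbb{F}_{q^r}$-linearity of $F$ on $\mathbb{F}_{q^{rn}}$ comes from $\mathbb{F}_{q^r} \subseteq \mathbb{F}_{q^{rn}}$ together with $a^{[r]} = a$ for $a \in \mathbb{F}_{q^r}$.
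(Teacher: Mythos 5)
Your argument is correct. The paper gives no proof of this lemma, citing it as a special case of Theorem 3.50 of Lidl--Niederreiter, and your double count of $|T|$ --- as the number of distinct roots of a separable polynomial of conventional degree $q^{dr}$ all lying in $\mathbb{F}_{q^{rn}}$ on one hand, and as $(q^r)^{\dim_{\mathbb{F}_{q^r}}(T)}$ for the kernel of the $\mathbb{F}_{q^r}$-linear map $F$ on the other --- is exactly the standard argument behind that cited result.
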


The following theorem gathers the basic relations between $ C $ and $ \rho_r(C) $:

\begin{theorem} \label{roots}
Let $ T = \rho_r(C) $ as in Definition \ref{def root space}, then:
\begin{enumerate}
\item
$ G(x) = \prod_{\beta \in T} (x - \beta) $.
\item
The dimension of $ C $ over $ \mathbb{F}_{q^m} $ is $ k = n - \dim_{\mathbb{F}_{q^r}}(T) $.
\item
For a $ q^r $-polynomial $ F(x) $, it holds that $ F \in C(x) $ if, and only if, $ F(\beta) = 0 $, for all $ \beta \in T $.
\item
Let $ \beta_1, \beta_2, \ldots, \beta_{n-k} $ be a basis of $ T $ over $ \mathbb{F}_{q^r} $. Then the matrix
\begin{displaymath}
\mathcal{M}(\boldsymbol\beta) = \left(
\begin{array}{ccccc}
\beta_1 & \beta_1^{[r]} & \beta_1^{[2r]} & \ldots & \beta_1^{[(n-1)r]} \\
\beta_2 & \beta_2^{[r]} & \beta_2^{[2r]} & \ldots & \beta_2^{[(n-1)r]} \\
\vdots & \vdots & \vdots & \ddots & \vdots \\
\beta_{n-k} & \beta_{n-k}^{[r]} & \beta_{n-k}^{[2r]} & \ldots & \beta_{n-k}^{[(n-1)r]} \\
\end{array} \right)
\end{displaymath}
is a parity check matrix of $ C $ over $ \mathbb{F}_{q^{rn}} $.
\item
A $ q^r $-polynomial $ \widetilde{G} $ generates $ C(x) $ if, and only if, $ Z(\widetilde{G}) = T $, which holds if, and only if, $ G(x) = {\rm gcd} (\widetilde{G}(x) , x^{[rn]} - x) $ (on the right).
\end{enumerate}
\end{theorem}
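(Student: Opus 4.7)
For the proof, my plan is to work through the five items roughly in the order stated, building each on the earlier ones and on Theorem~\ref{generator} and Lemma~\ref{root structure}.

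\textbf{Items 1--2.} First I would note that, since $G(x) \otimes H(x) = H(x) \otimes G(x) = x^{[rn]}-x$, expanding $H \otimes G = H(G(x))$ in the usual polynomial ring shows that $G(x)$ divides $x^{[rn]} - x$ as a \emph{conventional} polynomial (each monomial $H_i G(x)^{[ir]}$ is divisible by $G(x)$). Therefore every root of $G(x)$ in an algebraic closure is a root of $x^{[rn]} - x$, hence lies in $\mathbb{F}_{q^{rn}}$, and all these roots are simple. Thus $G(x)$ and $\prod_{\beta \in T}(x-\beta)$ are two monic polynomials with the same root set of simple roots, proving (1). For (2), Theorem~\ref{generator}(3) gives $\dim C = n - \deg_{q^r}(G(x))$, while Lemma~\ref{root structure} gives $\deg_{q^r}(G(x)) = \dim_{\mathbb{F}_{q^r}}(T)$.

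\textbf{Item 3.} By Theorem~\ref{generator}(1), $F \in C(x)$ is equivalent to $F(x) = Q(x) \otimes G(x)$ for some $q^r$-polynomial $Q(x)$, which gives $F(\beta) = Q(G(\beta)) = Q(0) = 0$ for all $\beta \in T$. Conversely, apply right Euclidean division in $\mathcal{L}_{q^r}\mathbb{F}_{q^m}[x]$ to write $F(x) = Q(x) \otimes G(x) + R(x)$ with $\deg_{q^r}(R) < \deg_{q^r}(G)$. Evaluating at $\beta \in T$ yields $R(\beta) = 0$, so the $\mathbb{F}_{q^r}$-linear map $R$ vanishes on all $q^{r\dim T}$ elements of $T$; since its conventional degree is strictly less than $|T|$, we get $R = 0$ and $G(x) \mid_r F(x)$.

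\textbf{Item 4.} By (3), a codeword $\mathbf{c} = (c_0,\ldots,c_{n-1})$ lies in $C$ exactly when the associated $F(x) = c_0 x + c_1 x^{[r]} + \cdots + c_{n-1}x^{[(n-1)r]}$ vanishes on $T$, which by $\mathbb{F}_{q^r}$-linearity is equivalent to $F(\beta_j) = 0$ for $j = 1,\ldots,n-k$. Since $F(\beta_j) = \sum_i c_i \beta_j^{[ir]}$, this is precisely $\mathcal{M}(\boldsymbol\beta) \mathbf{c}^T = 0$. The containment $C \subseteq \ker(\mathcal{M}(\boldsymbol\beta))\cap \mathbb{F}_{q^m}^n$ is therefore clear, and the reverse follows because the implication $F(\beta_j)=0\ \forall j \Rightarrow F(\beta)=0\ \forall \beta \in T$ is automatic by $\mathbb{F}_{q^r}$-linearity.

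\textbf{Item 5.} The main point here, and probably the subtlest step, is that the left ideal generated by $\widetilde{G}$ in the quotient ring corresponds to the left ideal $L\widetilde{G} + L(x^{[rn]}-x)$ of $L = \mathcal{L}_{q^r}\mathbb{F}_{q^m}[x]$, which, $L$ being a left PID (see \cite{orespecial,ore}), is principal and generated by a right gcd $G'(x) = \gcd(\widetilde{G}(x), x^{[rn]}-x)$. Since $G'$ right-divides $x^{[rn]}-x$, by item (1) applied to the code $(G')$ we have $G'(x) = \prod_{\beta \in Z(G')}(x-\beta)$, and a Bezout identity $G' = P\otimes \widetilde{G} + Q\otimes (x^{[rn]}-x)$ together with $\widetilde{G} = A \otimes G'$ gives $Z(\widetilde{G}) = Z(G')$ (each inclusion by direct evaluation, using that $Z(\widetilde{G}) \subseteq \mathbb{F}_{q^{rn}}$ because $\widetilde{G}$ right-divides $x^{[rn]}-x$ once we have the Bezout relation). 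Hence $\widetilde{G}$ generates $C(x) = (G)$ iff $G' = G$, iff (by the one-to-one correspondence between monic right divisors of $x^{[rn]}-x$ and their simple-root sets in $\mathbb{F}_{q^{rn}}$) $Z(G') = Z(G) = T$, iff $Z(\widetilde{G}) = T$, which is exactly the stated chain of equivalences.
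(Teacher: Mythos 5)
Your proof is correct and follows essentially the same route as the paper's: conventional divisibility of $x^{[rn]}-x$ by $G(x)$ forcing simple roots in $\mathbb{F}_{q^{rn}}$ for items 1--2, right Euclidean division for item 3, and B\'ezout identities for the right gcd in item 5. The only differences are cosmetic: you make explicit the degree-versus-number-of-roots count that forces $R(x)=0$ in item 3 (which the paper leaves implicit), and you organize item 5 around the single identity $(\widetilde{G}) = (\mathrm{gcd}(\widetilde{G}(x),x^{[rn]}-x))$ rather than the paper's three-step cycle of implications, but the underlying argument is the same.
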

\begin{proof}
First, since $ G(x) $ divides $ x^{[rn]}-x $ symbolically on the right, it also divides it conventionally. Therefore, $ G(x) $ has simple roots because $ x^{[rn]} -x $ has simple roots, and item 1 follows.

Since the roots of $ G(x) $ are simple, item 2 follows directly from the previous lemma and Theorem \ref{generator}.

Next, if $ F \in (G) $, then $ G(x) $ divides $ F(x) $ on the right and therefore $ T \subseteq Z(F) $. On the other hand, assume that $ F(\beta) = 0 $, for all $ \beta \in T $. By the Euclidean division, we have that $ F(x) = Q(x) \otimes G(x) + R(x) $, with $ \deg(R(x)) < \deg(G(x)) $, but then $ R(\beta) = 0 $, for all $ \beta \in T $, and hence $ R(x) = 0 $. We conclude that $ F \in (G) $ and item 3 follows. Item 4 follows immediately from item 3.

Finally, assume that $ \widetilde{G} $ generates $ C(x) $. Since $ G $ divides $ \widetilde{G} $ and $ \widetilde{G} $ divides $ G $ on the right, we have that $ Z(\widetilde{G}) = T $. Now assume that $ Z(\widetilde{G}) = T $ and define $ D(x) = {\rm gcd}(\widetilde{G}(x), x^{[rn]} - x) $. We have that $ D(x) = A(x) \otimes \widetilde{G}(x) + B(x) \otimes (x^{[rn]} - x) $, for some $ q^r $-polynomials $ A(x) $ and $ B(x) $. It follows that $ T \subseteq Z(D) $, and since $ D(x) $ divides $ \widetilde{G}(x) $, it holds that $ T = Z(D) $. Finally, since $ D(x) $ divides $ x^{[rn]} - x $, every root of $ D(x) $ lies in $ \mathbb{F}_{q^{rn}} $ and is simple, which implies that $ D(x) = G(x) $. Now assume that $ G(x) = {\rm gcd}(\widetilde{G}(x), x^{[rn]} - x) $, then $ G(x) = A(x) \otimes \widetilde{G}(x) + B(x) \otimes (x^{[rn]} - x) $, for some $ q^r $-polynomials $ A(x) $ and $ B(x) $. Therefore, $ G \in (\widetilde{G}) $, and since $ G(x) $ divides $ \widetilde{G}(x) $, it holds that $ (G) = (\widetilde{G}) $, and item 5 follows.
\end{proof}

On the other hand, we have the following equivalent conditions on inclusions of $ q^r $-cyclic codes and $ q^r $-root spaces.

\begin{corollary} \label{corollary lattices}
Let $ C_1(x) = (G_1) $ and $ C_2(x) = (G_2) $ be two $ q^r $-cyclic codes with $ T_1 = Z(G_1) $ and $ T_2 = Z(G_2) $, where $ G_1(x) $ and $ G_2(x) $ are the minimal generators of $ C_1(x) $ and $ C_2(x) $, respectively. Then $ C_1(x) \subseteq C_2(x) $ if, and only if, $ G_2(x) $ divides $ G_1(x) $ on the right, and this holds if, and only if, $ T_2 \subseteq T_1 $.
\end{corollary}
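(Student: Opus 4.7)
The plan is to prove the two stated equivalences by arranging the three conditions into a cycle of implications
$$ C_1(x) \subseteq C_2(x) \;\Longrightarrow\; G_2(x) \text{ right-divides } G_1(x) \;\Longrightarrow\; T_2 \subseteq T_1 \;\Longrightarrow\; C_1(x) \subseteq C_2(x), $$
each step of which follows almost directly from results already established in Theorems \ref{generator} and \ref{roots}.

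For the first implication, if $C_1(x) \subseteq C_2(x)$ then in particular $G_1 \in C_1(x) \subseteq C_2(x) = (G_2)$, so item 1 of Theorem \ref{generator} (applied to $C_2(x)$) gives that $G_2(x)$ divides $G_1(x)$ on the right in the ring $\mathcal{L}_{q^r}\mathbb{F}_{q^m}[x]$. For the second implication, suppose $G_1(x) = Q(x) \otimes G_2(x) = Q(G_2(x))$ for some $q^r$-polynomial $Q(x)$. Then for every $\beta \in T_2$ we have $G_2(\beta) = 0$, hence $G_1(\beta) = Q(0) = 0$, so $\beta \in T_1$; this shows $T_2 \subseteq T_1$.

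For the third implication, assume $T_2 \subseteq T_1$. Since $T_1 = Z(G_1)$ by definition, it follows that $G_1(\beta) = 0$ for every $\beta \in T_2$. Applying item 3 of Theorem \ref{roots} to $C_2(x)$ (whose associated root space is $T_2$), we conclude that $G_1 \in C_2(x)$. Because $C_1(x) = (G_1)$ by item 1 of Theorem \ref{generator}, and $C_2(x)$ is a left ideal, this gives $C_1(x) = (G_1) \subseteq C_2(x)$, closing the cycle.

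No step presents a real obstacle: the whole statement is essentially a bookkeeping consequence of the correspondence $C \mapsto G \mapsto Z(G)$ encoded in the earlier theorems. The only point requiring slight care is the second implication, where one must remember that the symbolic product $\otimes$ is composition of $q^r$-polynomials as $\mathbb{F}_{q^r}$-linear maps on $\mathbb{F}_{q^{rn}}$, so that right symbolic divisibility translates cleanly into containment of zero sets.
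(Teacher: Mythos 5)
Your proof is correct and follows essentially the same route as the paper: the first two implications are handled identically (Theorem \ref{generator}, item 1, plus the observation that right symbolic divisibility gives containment of zero sets), and your closing step just applies Theorem \ref{roots}, item 3, directly to $G_1$ instead of to the remainder of a Euclidean division of $G_1(x)$ by $G_2(x)$, which is the same underlying argument since that item is itself proved by exactly that division. Organizing the three conditions into a cycle rather than two separate equivalences is a cosmetic difference only.
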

\begin{proof}
The first equivalence is clear from Theorem \ref{generator}. Now, if $ G_2(x) $ divides $ G_1(x) $ on the right, then it is obvious that $ T_2 \subseteq T_1 $. 

Finally, assume that $ T_2 \subseteq T_1 $, and perform the Euclidean division to obtain $ G_1(x) = Q(x) \otimes G_2(x) + R(x) $, with $ \deg(R(x)) < \deg(G_2(x)) $. We have that $ R(\beta) = 0 $, for every $ \beta \in T_2 $, and by the previous theorem, $ R \in (G_2) $. However, $ G_2(x) $ is the minimal generator of $ C_2(x) $, so it follows that $ R(x)= 0 $, that is, $ G_2(x) $ divides $ G_1(x) $ on the right. 
\end{proof}

The previous corollary and Theorem \ref{roots} imply that the map $ \rho_r $ is bijective:

\begin{corollary} \label{map bijective}
The map $ \rho_r $ in Definition \ref{def root space} is bijective.
\end{corollary}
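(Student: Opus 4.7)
The plan is to handle injectivity and surjectivity of $\rho_r$ as two separate arguments, each of which is essentially a one-step consequence of the theory already developed.

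Injectivity is immediate from Corollary \ref{corollary lattices}: if two $\mathbb{F}_{q^m}$-linear $q^r$-cyclic codes $C_1, C_2 \subseteq \mathbb{F}_{q^m}^n$ satisfy $\rho_r(C_1) = \rho_r(C_2) = T$, then applying the equivalence $T_2 \subseteq T_1 \Leftrightarrow C_1(x) \subseteq C_2(x)$ in both directions with $T_1 = T_2 = T$ forces $C_1(x) = C_2(x)$, and hence $C_1 = C_2$.

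For surjectivity, fix an arbitrary $q^r$-root space $T$ over $\mathbb{F}_{q^m}$ in $\mathbb{F}_{q^{rn}}$. By definition there is some $F(x) \in \mathcal{L}_{q^r}\mathbb{F}_{q^m}[x]$ with $T = Z(F)$. I would let $G(x)$ be the monic right gcd of $F(x)$ and $x^{[rn]} - x$, which exists because $\mathcal{L}_{q^r}\mathbb{F}_{q^m}[x]$ is a right Euclidean domain. Since $G(x)$ right-divides $x^{[rn]} - x$ symbolically, Theorem \ref{generator} guarantees that $(G) \subseteq \mathcal{L}_{q^r}\mathbb{F}_{q^m}[x] / (x^{[rn]}-x)$ is a left ideal, corresponding to an $\mathbb{F}_{q^m}$-linear $q^r$-cyclic code $C$. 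Item 5 of Theorem \ref{roots}, applied with $\widetilde{G} = G$, shows that the minimal generator of $C(x)$ equals $\gcd(G(x), x^{[rn]} - x)$, which is $G(x)$ itself because $G(x)$ already right-divides $x^{[rn]} - x$. Finally, a Bezout identity $G(x) = A(x) \otimes F(x) + B(x) \otimes (x^{[rn]}-x)$ combined with the fact that $G$ right-divides both $F$ and $x^{[rn]}-x$ gives $Z(G) = Z(F) \cap \mathbb{F}_{q^{rn}} = T$, so $\rho_r(C) = T$ as required.

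No step is genuinely delicate; the only point demanding a moment of care is verifying that the chosen $G(x)$ really is the \emph{minimal} generator of $(G)$ and not merely some generator, which is why the appeal to item 5 of Theorem \ref{roots} (rather than just declaring that $G$ generates $C(x)$) is the linchpin of the surjectivity argument.
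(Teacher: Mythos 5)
Your proof is correct and follows essentially the same route as the paper: injectivity is read off from Corollary \ref{corollary lattices}, and surjectivity comes down to item 5 of Theorem \ref{roots}. The only cosmetic difference is that you build the minimal generator explicitly as $\gcd(F(x), x^{[rn]}-x)$ and re-run the B\'ezout argument, whereas the paper simply takes the minimal generator of the left ideal $(F)$ and applies item 5 with $\widetilde{G} = F$ to conclude $Z(G)=Z(F)=T$ directly.
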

\begin{proof}
We first see that it is onto. Take $ T = Z(F) $ a $ q^r $-root space over $ \mathbb{F}_{q^m} $ in $ \mathbb{F}_{q^{rn}} $. By item 5 in Theorem \ref{roots}, it holds that $ Z(G) = T $ if $ G(x) $ is the minimal generator of $ C(x) = (F) $. Therefore, $ T = \rho_r(C) $. On the other hand, $ \rho_r $ is one to one by the previous corollary.
\end{proof}

In the next section we will see that the family of $ q^r $-root spaces over $ \mathbb{F}_{q^m} $ in $ \mathbb{F}_{q^{rn}} $ is a lattice with sums and additions of vector spaces, and therefore Corollary \ref{corollary lattices} together with the previous corollary mean that the map $ \rho_r $ is an anti-isomorphism of lattices (an isomorphism with the orders reversed).

On the other hand, Theorem \ref{roots} gives the following criterion to say whether an $ \mathbb{F}_{q^r} $-linear subspace $ T \subseteq \mathbb{F}_{q^{rn}} $ is a $ q^r $-root space, in terms of $ q^r $-cyclic codes:

\begin{corollary} \label{char root spaces}
Let $ T \subseteq \mathbb{F}_{q^{rn}} $ be $ \mathbb{F}_{q^r} $-linear, take one of its bases $ \beta_1, \beta_2, \ldots, \beta_{n-k} $ over $ \mathbb{F}_{q^r} $, and define $ \mathcal{M}(\boldsymbol\beta) $ as in Theorem \ref{roots}. Consider $ \widetilde{C} \subseteq \mathbb{F}_{q^{rn}}^n $, the $ \mathbb{F}_{q^{rn}} $-linear code with $ \mathcal{M}(\boldsymbol\beta) $ as parity check matrix. Then $ T $ is a $ q^r $-root space over $ \mathbb{F}_{q^m} $ if, and only if, 
\begin{equation} \label{condition char root}
\dim_{\mathbb{F}_{q^m}}(\widetilde{C} \cap \mathbb{F}_{q^m}^n) = \dim_{\mathbb{F}_{q^{rn}}}(\widetilde{C}),
\end{equation}
which holds if, and only if, $ \widetilde{C} $ has a basis of vectors in $ \mathbb{F}_{q^m}^n $.
\end{corollary}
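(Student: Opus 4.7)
The approach is to translate the matrix conditions defining $\widetilde{C}$ back into statements about $q^r$-polynomials via the map $\gamma_r$, extended to coefficients in $\mathbb{F}_{q^{rn}}$. The first observation is that a vector $\mathbf{F} = (F_0, \ldots, F_{n-1}) \in \mathbb{F}_{q^{rn}}^n$ lies in $\widetilde{C}$ if and only if the associated $q^r$-polynomial $F(x) = \sum_i F_i x^{[ir]}$ satisfies $F(\beta_j) = 0$ for all $j$, equivalently $T \subseteq Z(F)$, since the equations $\mathcal{M}(\boldsymbol\beta) \mathbf{F}^T = 0$ are exactly these vanishing conditions. A Moore-matrix argument (using that the $\beta_j$ are $\mathbb{F}_{q^r}$-linearly independent) shows that the rows of $\mathcal{M}(\boldsymbol\beta)$ are linearly independent over $\mathbb{F}_{q^{rn}}$, so $\dim_{\mathbb{F}_{q^{rn}}}(\widetilde{C}) = k$.

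For the forward implication, assume $T$ is a $q^r$-root space over $\mathbb{F}_{q^m}$, so $T = Z(F)$ for some $F(x) \in \mathcal{L}_{q^r}\mathbb{F}_{q^m}[x]$. By Corollary \ref{map bijective} there is a unique $\mathbb{F}_{q^m}$-linear $q^r$-cyclic code $C \subseteq \mathbb{F}_{q^m}^n$ with $\rho_r(C) = T$; item 3 of Theorem \ref{roots} identifies $C$ with the set of vectors in $\mathbb{F}_{q^m}^n$ whose polynomial vanishes on $T$, namely $\widetilde{C} \cap \mathbb{F}_{q^m}^n$, and item 2 gives $\dim_{\mathbb{F}_{q^m}} C = k$, so (\ref{condition char root}) holds. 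Conversely, assume (\ref{condition char root}). The key observation is that $\widetilde{C} \cap \mathbb{F}_{q^m}^n$ is automatically $q^r$-cyclic: if $F(x)$ has coefficients in $\mathbb{F}_{q^m}$ and vanishes on $T$, then so does $x^{[r]} \otimes F$, because $(x^{[r]} \otimes F)(\beta) = F(\beta)^{[r]} = 0$ for every $\beta \in T$. Hence $\widetilde{C} \cap \mathbb{F}_{q^m}^n$ is an $\mathbb{F}_{q^m}$-linear $q^r$-cyclic code of dimension $k$, and its minimal generator $G(x) \in \mathcal{L}_{q^r}\mathbb{F}_{q^m}[x]$ has $q^r$-degree $n-k$ and satisfies $T \subseteq Z(G)$ by item 3 of Theorem \ref{roots}. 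Lemma \ref{root structure} gives $\dim_{\mathbb{F}_{q^r}} Z(G) = n-k = \dim_{\mathbb{F}_{q^r}} T$, forcing $Z(G) = T$ and showing $T$ is a $q^r$-root space over $\mathbb{F}_{q^m}$.

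The second equivalence is a standard Galois descent statement: for any $\mathbb{F}_{q^{rn}}$-linear subspace $V \subseteq \mathbb{F}_{q^{rn}}^n$, one has $\dim_{\mathbb{F}_{q^m}}(V \cap \mathbb{F}_{q^m}^n) \leq \dim_{\mathbb{F}_{q^{rn}}}(V)$, with equality if and only if $V$ admits an $\mathbb{F}_{q^{rn}}$-basis of vectors in $\mathbb{F}_{q^m}^n$, and I would simply invoke it. The only delicate point in the whole argument is recognizing that $\widetilde{C} \cap \mathbb{F}_{q^m}^n$ is closed under the $q^r$-shift $\sigma_{r,n}$ regardless of whether $T$ is a $q^r$-root space; once this closure is in hand, the remaining steps are direct applications of Theorem \ref{roots}, Lemma \ref{root structure} and Corollary \ref{map bijective}.
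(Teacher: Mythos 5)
Your proof is correct and follows essentially the same route as the paper: the forward direction identifies $\widetilde{C} \cap \mathbb{F}_{q^m}^n$ with $\rho_r^{-1}(T)$ via Theorem \ref{roots}, and the converse uses that $\widetilde{C} \cap \mathbb{F}_{q^m}^n$ is $q^r$-cyclic, compares $T$ with $Z(G)$ for its minimal generator, and concludes by a dimension count. Your explicit verification that the intersection is closed under $\sigma_{r,n}$, and the Moore-matrix justification that $\dim_{\mathbb{F}_{q^{rn}}}(\widetilde{C}) = k$, only spell out steps the paper leaves implicit.
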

\begin{proof}
Assume first that $ T = Z(F) $, for some $ q^r $-polynomial $ F(x) $ over $ \mathbb{F}_{q^m} $, and define $ C(x) = (F) $. By items 4 and 5 in Theorem \ref{roots}, $ C = \widetilde{C} \cap \mathbb{F}_{q^m}^n $, and by item 2 in the same theorem, $ \dim_{\mathbb{F}_{q^m}}(C) = k = \dim_{\mathbb{F}_{q^{rn}}}(\widetilde{C}) $.

Assume now that $ \dim_{\mathbb{F}_{q^m}}(C) = \dim_{\mathbb{F}_{q^{rn}}}(\widetilde{C}) $, where $ C = \widetilde{C} \cap \mathbb{F}_{q^m}^n $. Since $ \widetilde{C} $ is $ q^r $-cyclic, it follows that $ C $ is also $ q^r $-cyclic. By definition, $ T \subseteq Z(G) $, for the minimal generator $ G(x) $ of $ C(x) $. Now, $ \dim_{\mathbb{F}_{q^m}}(C) = k $ by hypothesis, and hence $ \dim_{\mathbb{F}_{q^r}}(Z(G)) = n-k $ by item 2 in Theorem \ref{roots}. Also by hypothesis, $ \dim_{\mathbb{F}_{q^r}}(T) = n-k $, so it holds that $ T = Z(G) $. 
\end{proof}

Observe that condition (\ref{condition char root}) means that $ \widetilde{C} $ is Galois closed over $ \mathbb{F}_{q^m} $. See \cite{similarities,stichtenoth} for more details on Galois closed vector spaces. The following example shows how to use this result to see whether a given vector space is a $ q^r $-root space. 

\begin{example} \label{example qcyclotomic}
Assume that $ n = 2m $ and $ r=1 $, and take a normal basis $ \alpha $, $ \alpha^{[1]}, \ldots $, $ \alpha^{[n-1]} $ $ \in \mathbb{F}_{q^n} $ over $ \mathbb{F}_q $. Consider the ($ \mathbb{F}_q $-linear) vector subspaces $ T_1 , T_2 \subseteq \mathbb{F}_{q^n} $ generated by $ \alpha $ and $ \alpha, \alpha^{[m]} $, respectively. Define also the codes $ \widetilde{C}_1, \widetilde{C}_2 \subseteq \mathbb{F}_{q^n}^n $ with parity check matrices $ \mathcal{M}(\alpha) $ and $ \mathcal{M}(\alpha, \alpha^{[m]}) $, respectively, and define $ D_i = (\widetilde{C}_i \cap \mathbb{F}_{q^m}^n)^\perp $, $ i = 1,2 $. They satisfy $ D_i = {\rm Tr}(\widetilde{C}_i^\perp) $, $ i =1,2 $, by Delsarte's theorem \cite[Theorem 2]{delsarte}, where $ {\rm Tr} $ denotes the trace of the extension $ \mathbb{F}_{q^m} \subseteq \mathbb{F}_{q^n} $, that is, $ {\rm Tr}(x) = x + x^{[m]} $.
 
We will see that $ T_1 $ is not a $ q $-root space over $ \mathbb{F}_{q^m} $, whereas $ T_2 $ is. Moreover, we will see that $ D_1=D_2 $, which has dimension $ 2 $ over $ \mathbb{F}_{q^m} $ and which shows that condition (\ref{condition char root}) in the previous corollary is satisfied for $ T_2 $ but not for $ T_1 $. 

Since $ \dim(T_1) = 1 $, if it were a $ q $-root space, then there would exist $ b \in \mathbb{F}_{q^m} $ with $ F(\alpha)= 0 $, where $ F(x) = x^{[1]} - bx $ by Corollary \ref{map bijective}. Since $ x^{[m]} \otimes F(x) = F(x) \otimes x^{[m]} $, it holds that $ F(\alpha^{[m]}) = 0 $. This would imply that $ \alpha, \alpha^{[m]} \in T_1 $ and $ \dim(T_1) = 1 $, which is absurd.

On the other hand, we see that $ D_1 \subseteq D_2 $. Define the vectors $ \boldsymbol\alpha = (\alpha, \alpha^{[1]}, \ldots, $ $ \alpha^{[n-1]}) \in \mathbb{F}_{q^n}^n $, $ \mathbf{v}_0 = {\rm Tr}(\alpha \boldsymbol\alpha) = \alpha \boldsymbol\alpha + \alpha^{[m]} \boldsymbol\alpha^{[m]} $ and $ \mathbf{v}_1 = {\rm Tr}(\alpha^{[1]} \boldsymbol\alpha) = \alpha^{[1]} \boldsymbol\alpha + \alpha^{[1+m]} \boldsymbol\alpha^{[m]} $, which belong to $ D_1 $ and also to $ \widetilde{C}_2^\perp $. Moreover, we see that they are linearly independent over $ \mathbb{F}_{q^n} $ and, therefore, they constitute a basis of $ \widetilde{C}_2^\perp $. This means that $ D_1=D_2 $ and $ \dim_{\mathbb{F}_{q^m}}(D_2) = \dim_{\mathbb{F}_{q^n}}(\widetilde{C}_2^\perp) = 2 $.

In conclusion, condition (\ref{condition char root}) is satisfied for $ T_2 $ but not for $ T_1 $. By the previous corollary, it holds that $ T_2 $ is a $ q $-root space over $ \mathbb{F}_{q^m} $, and we have seen that $ T_1 $ is not a $ q $-root space over $ \mathbb{F}_{q^m} $.
\end{example}

\subsection{Cyclotomic spaces}

Now we turn to a special subclass of $ q^r $-root spaces in $ \mathbb{F}_{q^{rn}} $, namely the class of $ q^r $-cyclotomic spaces. These spaces will play the same role as cyclotomic sets in the classical theory of cyclic codes (see \cite[Theorem 4.4.2]{pless} and \cite[Theorem 4.4.3]{pless}), that is, they generate the lattice of $ q^r $-root spaces, and are key concepts to easily construct skew cyclic codes.

For this we need the concept of minimal $ q^r $-polynomial of an element $ \beta \in \mathbb{F}_{q^{rn}} $ over $ \mathbb{F}_{q^m} $. The following lemma and definition constitute an extension of \cite[Theorem 3.68]{lidl} and the discussion prior to it:

\begin{lemma} \label{minimal polynomial}
For any $ \beta $ in an extension field of $ \mathbb{F}_{q^{{\rm lcm}(r,m)}} $, there exists a unique monic $ q^r $-polynomial $ F(x) \in \mathcal{L}_{q^r} \mathbb{F}_{q^{m}}[x] $ of minimal degree such that $ F(\beta)= 0 $. Moreover, if $ L(\beta) = 0 $ for another $ q^r $-polynomial $ L(x) $ over $ \mathbb{F}_{q^{m}} $, then $ F(x) $ divides $ L(x) $ both conventionally and symbolically on the right.
\end{lemma}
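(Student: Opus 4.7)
The plan is to mimic the classical construction of minimal polynomials of algebraic elements, but with the symbolic product $\otimes$ playing the role of ordinary multiplication, exploiting the right Euclidean structure of $\mathcal{L}_{q^r}\mathbb{F}_{q^m}[x]$ recalled in Section \ref{section preliminaries}.

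First I would establish that some nonzero $q^r$-polynomial over $\mathbb{F}_{q^m}$ annihilates $\beta$. Since $\beta$ lies in an algebraic extension of $\mathbb{F}_{q^{{\rm lcm}(r,m)}}$, there is a positive integer $M$ with ${\rm lcm}(r,m)\mid M$ and $\beta\in\mathbb{F}_{q^M}$. Because $r\mid {\rm lcm}(r,m)\mid M$, the polynomial $x^{[M]}-x$ is a $q^r$-polynomial with coefficients in $\mathbb{F}_q\subseteq\mathbb{F}_{q^m}$, and its root set is exactly $\mathbb{F}_{q^M}$, so it vanishes at $\beta$. Picking an annihilator of minimal $q^r$-degree and rescaling by its leading coefficient yields a monic $F(x)$ of minimal degree. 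For uniqueness, if $F_1,F_2$ were two distinct monic $q^r$-polynomials over $\mathbb{F}_{q^m}$ of the same minimal degree with $F_i(\beta)=0$, then $F_1-F_2$ would be a nonzero $q^r$-polynomial over $\mathbb{F}_{q^m}$ annihilating $\beta$ of strictly smaller $q^r$-degree, contradicting minimality.

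For the divisibility claim, let $L(x)$ be any $q^r$-polynomial over $\mathbb{F}_{q^m}$ with $L(\beta)=0$ and perform the right symbolic Euclidean division in $\mathcal{L}_{q^r}\mathbb{F}_{q^m}[x]$ to obtain
\[ L(x)=Q(x)\otimes F(x)+R(x), \qquad \deg_{q^r}(R)<\deg_{q^r}(F). \]
Evaluating at $\beta$ and using $(Q\otimes F)(\beta)=Q(F(\beta))=Q(0)=0$ (which holds because $Q$ is an $\mathbb{F}_{q^r}$-linear map), we get $R(\beta)=0$, and minimality of $F$ forces $R=0$. This is exactly the symbolic right-divisibility.

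For conventional divisibility I would unfold the symbolic product explicitly: writing $Q(x)=\sum_j Q_j x^{[jr]}$, we have
\[ L(x)=Q(F(x))=\sum_j Q_j\,F(x)^{q^{jr}}, \]
and every summand on the right is an ordinary polynomial multiple of $F(x)$ since $q^{jr}\geq 1$ for all $j\geq 0$; hence $F(x)$ divides $L(x)$ in the usual polynomial ring as well. The one piece requiring a little care is precisely this last step: keeping the symbolic composition $\otimes$ distinct from conventional multiplication and noting that the exponents $q^{jr}$ appearing in the expansion of $Q(F(x))$ are ordinary powers, so that conventional divisibility drops out for free. Everything else is a direct transposition of the classical proof of existence, uniqueness, and divisibility for minimal polynomials.
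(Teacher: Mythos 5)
Your proof is correct and follows essentially the same route as the paper: exhibit $x^{[rt]}-x$ as a $q^r$-polynomial over $\mathbb{F}_{q^m}$ annihilating $\beta$ to get existence, then use right symbolic Euclidean division plus minimality to obtain the divisibility claim. The only differences are cosmetic: you prove uniqueness by subtracting two candidates rather than deducing it from mutual divisibility, and you spell out the (standard) step that symbolic right-divisibility implies conventional divisibility via $Q(F(x))=\sum_j Q_j F(x)^{q^{jr}}$, which the paper leaves implicit.
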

\begin{proof}
If $ \beta \in \mathbb{F}_{q^{rt}} $, $ t > 0 $, then the polynomial $ \widetilde{F}(x) = x^{[rt]} - x $ lies in $ \mathcal{L}_{q^r} \mathbb{F}_{q^m}[x] $ and $ \widetilde{F}(\beta) = 0 $. Therefore there exists an $ F(x) \in \mathcal{L}_{q^r} \mathbb{F}_{q^m}[x] $ monic and of minimal degree such that $ F(\beta) = 0 $. Let $ L(x) \in \mathcal{L}_{q^r} \mathbb{F}_{q^m}[x] $ be such that $ L(\beta) = 0 $, and perform the Euclidean division to obtain $ L(x) = Q(x) \otimes F(x) + R(x) $, with $ \deg(R(x)) < \deg(F(x)) $. Then $ R(\beta) = 0 $, and since $ F(x) $ is of minimal degree, we have that $ R(x) = 0 $, and therefore $ F(x) $ divides $ L(x) $ both conventionally and symbolically on the right. This also proves that $ F(x) $ is unique and we are done.
\end{proof}

\begin{definition}
For $ \beta $ in an extension field of $ \mathbb{F}_{q^{{\rm lcm}(r,m)}} $, the $ q^r $-polynomial $ F(x) $ in the previous lemma is called the minimal $ q^r $-polynomial of $ \beta $ over $ \mathbb{F}_{q^{m}} $. 
\end{definition}

Now we may define $ q^r $-cyclotomic spaces in $ \mathbb{F}_{q^{rn}} $:

\begin{definition} [\textbf{Cyclotomic spaces}]
Given $ \beta \in \mathbb{F}_{q^{rn}} $, we define its $ q^r $-cyclotomic space over $ \mathbb{F}_{q^m} $ as the $ \mathbb{F}_{q^r} $-linear vector space $ C_{q^r}(\beta) $ of roots of the minimal $ q^r $-polynomial of $ \beta $ over $ \mathbb{F}_{q^{m}} $.
\end{definition}

\begin{example}
Let the notation and assumptions be as in Example \ref{example qcyclotomic}. Since the basis $ \alpha^{[b]}, (\alpha^{[b]})^{[1]}, \ldots, (\alpha^{[b]})^{[n-1]} $ is also normal, in Example \ref{example qcyclotomic} we have proven that $ C_{q}(\alpha^{[b]}) = \langle \alpha^{[b]}, \alpha^{[b+m]} \rangle $.
\end{example}

In general, for $ r=1 $ and $ n=sm $, we have the following result:

\begin{proposition} \label{cyclotomic normal basis}
If $ \alpha, \alpha^{[1]}, \ldots, \alpha^{[n-1]} $ is a normal basis of $ \mathbb{F}_{q^n} $ over $ \mathbb{F}_q $, then it holds that $ C_{q}(\alpha^{[b]}) = \langle \alpha^{[b]}, \alpha^{[b+m]}, \ldots, \alpha^{[b+(s-1)m]} \rangle $, for every integer $ b \geq 0 $.
\end{proposition}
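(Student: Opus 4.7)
Plan: Set $V := \langle \alpha^{[b]}, \alpha^{[b+m]}, \ldots, \alpha^{[b+(s-1)m]} \rangle_{\mathbb{F}_q}$, and write $\sigma$ for the $q$-th power Frobenius. The goal is $C_q(\alpha^{[b]}) = V$. I would proceed in three steps: (i) compute $\dim_{\mathbb{F}_q} V = s$ and observe $V$ is $\sigma^m$-stable; (ii) produce a concrete $q$-polynomial over $\mathbb{F}_{q^m}$ of $q$-degree $s$ vanishing on $V$; (iii) use the minimality of $F(x)$ (the minimal $q$-polynomial of $\alpha^{[b]}$ over $\mathbb{F}_{q^m}$) together with Lemma \ref{root structure} to pinch $Z(F)$ between $V$ and a space of the same dimension.

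For (i), the $s$ generators of $V$ are distinct elements of the normal basis $\alpha, \alpha^{[1]}, \ldots, \alpha^{[n-1]}$, since the exponents $b, b+m, \ldots, b+(s-1)m$ are pairwise distinct modulo $n=sm$. Hence they are $\mathbb{F}_q$-linearly independent and $\dim_{\mathbb{F}_q} V = s$. Stability under $\sigma^m$ is immediate from $\sigma^m(\alpha^{[b+jm]}) = \alpha^{[b+(j+1)m]}$ together with $\alpha^{[b+sm]} = \alpha^{[b+n]} = \alpha^{[b]}$.

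For (ii), I would use the standard fact (see \cite[Ch.~3]{lidl}) that for any finite $\mathbb{F}_q$-subspace $V$ of an extension of $\mathbb{F}_q$, the polynomial $L_V(x) = \prod_{v \in V}(x-v)$ is a $q$-linearized polynomial of $q$-degree $\dim_{\mathbb{F}_q}V$. Its coefficients are $\mathbb{F}_q$-linear combinations of iterated products of elements of $V$, and hence are fixed by any field automorphism that permutes $V$; in our case $\sigma^m$ permutes $V$, so the coefficients of $L_V(x)$ lie in the fixed field of $\sigma^m$ in $\mathbb{F}_{q^{rn}}$, which is $\mathbb{F}_{q^m}$. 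Thus $L_V(x) \in \mathcal{L}_q\mathbb{F}_{q^m}[x]$ has $q$-degree $s$ and satisfies $L_V(\alpha^{[b]}) = 0$.

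For (iii), let $F(x)$ be the minimal $q$-polynomial of $\alpha^{[b]}$ over $\mathbb{F}_{q^m}$. By Lemma \ref{minimal polynomial}, $F(x)$ right-divides $L_V(x)$, so every root of $F$ is a simple root lying in $V \subseteq \mathbb{F}_{q^{rn}}$; by Lemma \ref{root structure}, $\dim_{\mathbb{F}_q} Z(F) = \deg_q F \leq \deg_q L_V = s$. Conversely, the same argument as in (i) applied to $Z(F)$ (which is $\sigma^m$-stable, as $F$ has coefficients in $\mathbb{F}_{q^m}$) shows that $Z(F)$ contains all iterates $\alpha^{[b+jm]}$, hence $V \subseteq Z(F)$. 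Combining the two inequalities gives $Z(F) = V$, which by definition means $C_q(\alpha^{[b]}) = V$. The only step requiring external justification is the assertion in (ii) that the coefficients of $L_V(x)$ sit in $\mathbb{F}_{q^m}$ precisely because $V$ is $\sigma^m$-invariant; this is a classical fact about linearized polynomials and will be the only non-mechanical ingredient of the proof.
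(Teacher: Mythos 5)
Your proof is correct, but the harder inclusion is handled by a genuinely different route than the paper's. Both arguments establish $\langle \alpha^{[b]}, \alpha^{[b+m]}, \ldots, \alpha^{[b+(s-1)m]}\rangle \subseteq C_q(\alpha^{[b]})$ the same way, via the commutation $x^{[m]} \otimes F(x) = F(x) \otimes x^{[m]}$ for $F(x) \in \mathcal{L}_{q}\mathbb{F}_{q^m}[x]$, i.e.\ the $\sigma^m$-stability of $Z(F)$. For the reverse inclusion the paper invokes its own Corollary \ref{char root spaces}: it shows $V$ is a $q$-root space over $\mathbb{F}_{q^m}$ by exhibiting the trace vectors $\mathbf{v}_i = {\rm Tr}(\alpha^{[i]}\boldsymbol\alpha)$ and checking that an $s\times s$ Moore-type matrix is nonsingular (as in Example \ref{example qcyclotomic}, ultimately resting on Delsarte's theorem and the parity-check description). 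You instead exhibit the annihilator directly: the subspace polynomial $L_V(x) = \prod_{v\in V}(x-v)$ is $q$-linearized of $q$-degree $s$, and since $\sigma^m$ permutes $V$ its coefficients (symmetric functions of the elements of $V$) are $\sigma^m$-fixed, hence lie in $\mathbb{F}_{q^m}$; then Lemma \ref{minimal polynomial} forces the minimal $q$-polynomial $F(x)$ of $\alpha^{[b]}$ to divide $L_V(x)$, so $Z(F)\subseteq V$. This is a clean Galois-descent argument that bypasses Delsarte's theorem, the code $\widetilde{C}$, and the determinant computation entirely, at the cost of importing the classical fact that $\prod_{v\in V}(x-v)$ is $q$-linearized (which is indeed in \cite[Chapter 3]{lidl} and is fair to cite). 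Your step (iii) is slightly redundant -- once $F(x)$ divides $L_V(x)$ you get $Z(F)\subseteq V$ directly and do not need the dimension count via Lemma \ref{root structure} -- but this is a matter of economy, not correctness. Note also that your argument shows en passant that $V = Z(L_V)$ is itself a $q$-root space over $\mathbb{F}_{q^m}$, which is exactly the content the paper extracts from Corollary \ref{char root spaces}.
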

\begin{proof}
We may assume that $ b= 0 $ without loss of generality. First of all, for every $ F(x) \in \mathcal{L}_{q^r} \mathbb{F}_{q^m}[x] $, we see that $ x^{[m]} \otimes F(x) = F(x) \otimes x^{[m]} $ and, therefore $ F(\beta)= 0 $ implies that $ F(\beta^{[m]}) = 0 $, for any $ \beta \in \mathbb{F}_{q^n} $. This means that $ \langle \alpha, \alpha^{[m]}, \ldots, \alpha^{[(s-1)m]} \rangle \subseteq C_{q}(\alpha) $.

The reversed inclusion is proven using Corollary \ref{char root spaces} as in Example \ref{example qcyclotomic}. To that end, we need to define the vectors $ \mathbf{v}_i = {\rm Tr}(\alpha^{[i]} \boldsymbol\alpha) = \sum_{j = 0}^{s-1} \alpha^{[i+jm]} \boldsymbol\alpha^{[jm]} \in \mathbb{F}_{q^m}^n $, for $ i = 0,1,2, \ldots, s-1 $, where $ \boldsymbol\alpha = (\alpha, \alpha^{[1]}, \ldots, \alpha^{[n-1]}) \in \mathbb{F}_{q^n}^n $. The vectors $ \mathbf{v}_0, \mathbf{v}_1, \ldots, \mathbf{v}_{s-1} $ are linearly independent over $ \mathbb{F}_{q^m} $, since so are the vectors $ \boldsymbol\alpha $, $ \boldsymbol\alpha^{[m]}, \ldots, $ $ \boldsymbol\alpha^{[(s-1)m]} $ and the following matrix is non-singular:
\begin{displaymath}
\left( \begin{array}{ccccc}
\alpha & \alpha^{[m]} & \alpha^{[2m]} & \ldots & \alpha^{[(s-1)m]} \\
\alpha^{[1]} & \alpha^{[1+m]} & \alpha^{[1+2m]} & \ldots & \alpha^{[1+(s-1)m]} \\
\vdots & \vdots & \vdots & \ddots & \vdots \\
\alpha^{[s-1]} & \alpha^{[s-1+m]} & \alpha^{[s-1+2m]} & \ldots & \alpha^{[s-1+(s-1)m]} \\
\end{array} \right).
\end{displaymath}
\end{proof}

Next we see that every $ q^r $-root space is a sum of $ q^r $-cyclotomic spaces. Since in the next section we will see that sums and intersections of $ q^r $-root spaces are again $ q^r $-root spaces, this means that the subclass of $ q^r $-cyclotomic spaces generates the lattice of $ q^r $-root spaces:

\begin{proposition} \label{sums cyclotomic}
Given a $ q^r $-root space $ T \subseteq \mathbb{F}_{q^{rn}} $ over $ \mathbb{F}_{q^m} $, there exist $ \beta_1, \beta_2, \ldots, \beta_u $ $ \in T $ such that $ T = C_{q^r}(\beta_1) + C_{q^r}(\beta_2) + \cdots + C_{q^r}(\beta_u) $. Moreover, if the $ q^r $-cyclotomic spaces $ C_{q^r}(\beta_i) $ over $ \mathbb{F}_{q^m} $ are minimal and $ T $ is not a sum of a strict subset of them, then the sum is direct.
\end{proposition}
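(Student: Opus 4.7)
The plan is to decompose the proof into two parts following the statement: existence of the decomposition, and directness under the minimality hypothesis.

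For the existence part, the first step is to show that for every $\beta \in T$, one has $C_{q^r}(\beta) \subseteq T$. Since $T = Z(G)$ for the minimal generator $G(x)$ of the $q^r$-cyclic code $C = \rho_r^{-1}(T)$ (by Definition \ref{def root space} and Corollary \ref{map bijective}), we have $G(\beta) = 0$, so by Lemma \ref{minimal polynomial} the minimal $q^r$-polynomial $F_\beta(x)$ of $\beta$ over $\mathbb{F}_{q^m}$ right-divides $G(x)$ conventionally. Since $G(x)$ has simple roots in $\mathbb{F}_{q^{rn}}$ (as it right-divides $x^{[rn]}-x$, by item 1 of Theorem \ref{roots}), the same holds for $F_\beta(x)$, giving $C_{q^r}(\beta) = Z(F_\beta) \subseteq Z(G) = T$. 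Now take any $\mathbb{F}_{q^r}$-basis $\beta_1,\ldots,\beta_u$ of $T$ (or more generally any $\mathbb{F}_{q^r}$-spanning set). Each $\beta_i \in C_{q^r}(\beta_i)$ and each $C_{q^r}(\beta_i) \subseteq T$, so
\begin{displaymath}
T = \langle \beta_1,\ldots,\beta_u \rangle_{\mathbb{F}_{q^r}} \subseteq \sum_{i=1}^u C_{q^r}(\beta_i) \subseteq T,
\end{displaymath}
which gives the required decomposition.

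For the directness claim, the key observation is that a minimal $q^r$-cyclotomic space $C_{q^r}(\beta)$ over $\mathbb{F}_{q^m}$ contains no proper non-zero $q^r$-root space. Indeed, if $\{0\} \neq T' \subseteq C_{q^r}(\beta)$ is a $q^r$-root space, pick $\gamma \in T' \setminus \{0\}$; applying the first step to $T'$ gives $C_{q^r}(\gamma) \subseteq T' \subseteq C_{q^r}(\beta)$, and minimality of $C_{q^r}(\beta)$ forces $C_{q^r}(\gamma) = C_{q^r}(\beta)$, hence $T' = C_{q^r}(\beta)$. Granting the fact announced in the paper that sums and intersections of $q^r$-root spaces over $\mathbb{F}_{q^m}$ are again $q^r$-root spaces, we then examine, for each fixed $i$, the space
\begin{displaymath}
C_{q^r}(\beta_i) \cap \sum_{j \neq i} C_{q^r}(\beta_j),
\end{displaymath}
which is a $q^r$-root space lying inside $C_{q^r}(\beta_i)$. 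By the observation, it is either $\{0\}$ or all of $C_{q^r}(\beta_i)$. The second alternative would imply $C_{q^r}(\beta_i) \subseteq \sum_{j \neq i} C_{q^r}(\beta_j)$ and thus $T = \sum_{j \neq i} C_{q^r}(\beta_j)$, contradicting the assumption that $T$ is not a sum of a strict subset of the $C_{q^r}(\beta_j)$. Therefore the intersection is $\{0\}$ for every $i$, which is exactly the condition that the sum be direct.

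The only subtle point is interpreting ``minimal'' correctly (as meaning no proper non-zero $q^r$-cyclotomic subspace) and bootstrapping from this to the stronger property that no proper non-zero $q^r$-\emph{root} subspace exists; this is the step where the first part of the proof is reused. The other potential obstacle is the forward reference to the closure of $q^r$-root spaces under sums and intersections, but since the proposition is explicitly stated with that fact in mind, it is safe to invoke it here.
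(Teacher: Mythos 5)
Your proof is correct and follows essentially the same route as the paper: the existence part rests on Lemma \ref{minimal polynomial} giving $C_{q^r}(\beta)\subseteq T$ for every $\beta\in T$, and the directness part derives a contradiction with the hypothesis that $T$ is not a sum of a strict subset of the $C_{q^r}(\beta_i)$. The only cosmetic difference is that the paper picks a single nonzero element of $C_{q^r}(\beta_i)\cap\sum_{j\neq i}C_{q^r}(\beta_j)$ and applies minimality to its cyclotomic space, whereas you treat the whole intersection as a root space; both variants rely on the same (explicitly acknowledged) forward reference to the closure of root spaces under sums, so your version is a faithful, slightly more detailed rendering of the same argument.
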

\begin{proof}
Take $ L(x) \in \mathcal{L}_{q^r} \mathbb{F}_{q^m}[x] $ such that $ T = Z(L) $. For every $ \beta \in T $, if $ F(x) $ is its minimal $ q^r $-polynomial over $ \mathbb{F}_{q^{m}} $, then by Lemma \ref{minimal polynomial}, $ F(x) $ divides $ L(x) $ and, therefore, $ C_{q^r}(\beta) = Z(F) \subseteq Z(L) = T $. This means that $ T = \sum_{\beta \in T} C_{q^r}(\beta) $. Since the sum is finite, the result follows.

Finally, assume that the $ C_{q^r}(\beta_i) $ are minimal and $ T $ is not a sum of a strict subset of them. If there exists $ \beta \in C_{q^r}(\beta_i) \cap (\sum_{j \neq i} C_{q^r}(\beta_j)) $ that is not zero, then by minimality of $ C_{q^r}(\beta_i) $, we have that $ C_{q^r} (\beta) = C_{q^r}(\beta_i) $, and therefore $ C_{q^r}(\beta_i) \subseteq \sum_{j \neq i} C_{q^r}(\beta_j) $. However, this means that $ T $ is the sum of the spaces $ C_{q^r}(\beta_j) $, with $ j \neq i $, which contradicts the assumptions.
\end{proof}

\section{The lattices of $ q^r $-cyclic codes and $ q^r $-root spaces} \label{section lattices}

It is straightforward to see that sums and intersections of $ q^r $-cyclic codes are again $ q^r $-cyclic. In this section we will see that the same holds for $ q^r $-root spaces. By Corollary \ref{corollary lattices}, both lattices are anti-isomorphic. We will also prove this directly by showing that intersections of $ q^r $-cyclic codes correspond to sums of $ q^r $-root spaces and viceversa. We will also study the concept of $ q^r $-cyclic complementary of a $ q^r $-cyclic code, rank equivalences and lattice morphisms.

\subsection{The lattice anti-isomorphism}

\begin{theorem} \label{lattice}
Let $ C_1(x) $ and $ C_2(x) $ be two $ q^r $-cyclic codes with minimal generators $ G_1(x) $ and $ G_2(x) $, respectively. Set $ T_1 = Z(G_1) $ and $ T_2 = Z(G_2) $. We have that
\begin{enumerate}
\item
$ C_1(x) \cap C_2(x) $ is the $ q^r $-cyclic code whose minimal generator is given by $ M(x) = {\rm lcm}(G_1(x), G_2(x)) $ (on the right), and $ Z(M) = T_1 + T_2 $.
\item
$ C_1(x) + C_2(x) $ is the $ q^r $-cyclic code whose minimal generator is given by $ D(x) = {\rm gcd}(G_1(x), G_2(x)) $ (on the right), and $ Z(D) = T_1 \cap T_2 $.
\end{enumerate}
In particular, sums and intersections of $ q^r $-root spaces are again $ q^r $-root spaces, and they form a lattice anti-isomorphic to the lattice of $ q^r $-cyclic codes by the map $ \rho_r $ in Definition \ref{def root space}. Moreover, the lattice of $ q^r $-root spaces is generated by the subclass of $ q^r $-cyclotomic spaces.
\end{theorem}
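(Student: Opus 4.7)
The plan is to prove item 2 first, deduce item 1 from it by a degree/dimension count, and then harvest the corollaries about the lattice structure. Since we are working in the (right) Euclidean domain $\mathcal{L}_{q^r}\mathbb{F}_{q^m}[x]$, Bezout-type identities on the symbolic product will be the main tool.

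For item 2, I would use the extended right Euclidean algorithm to write $D(x) = A(x) \otimes G_1(x) + B(x) \otimes G_2(x)$ for suitable $q^r$-polynomials $A$ and $B$. Then $D \in (G_1) + (G_2) = C_1(x) + C_2(x)$, while the right-divisibility of each $G_i$ by $D$ gives $(G_i) \subseteq (D)$, so $(G_1) + (G_2) = (D)$ and $D$ is the minimal generator by Theorem \ref{generator}. The root-space identity $Z(D) = T_1 \cap T_2$ then follows from right-divisibility in one direction and from evaluating the Bezout identity, $D(\beta) = A(G_1(\beta)) + B(G_2(\beta)) = 0$, at any $\beta \in T_1 \cap T_2$ in the other.

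For item 1, the characterization ``$F \in C_1(x) \cap C_2(x)$ iff both $G_1$ and $G_2$ right-divide $F(x)$'' identifies the intersection with the principal left ideal generated by the right-lcm $M(x)$, which is also the generator of minimal degree. The quantitative input I need is the classical identity
$$ \deg_{q^r}(M) + \deg_{q^r}(D) = \deg_{q^r}(G_1) + \deg_{q^r}(G_2), $$
valid in any Euclidean domain and transferable to $\mathcal{L}_{q^r}\mathbb{F}_{q^m}[x]$ through the extended algorithm. Combined with the Grassmann formula $\dim(T_1 + T_2) = \dim T_1 + \dim T_2 - \dim(T_1 \cap T_2)$, with Lemma \ref{root structure}, and with $\dim(T_1 \cap T_2) = \deg_{q^r}(D)$ from item 2, this yields $\dim Z(M) = \dim(T_1 + T_2)$. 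Since $T_1 + T_2 \subseteq Z(M)$ is immediate from the fact that $M$ is right-divisible by each $G_i$, equality follows.

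The remaining statements are formal consequences. Both $T_1 + T_2 = Z(M)$ and $T_1 \cap T_2 = Z(D)$ are zero sets of $q^r$-polynomials over $\mathbb{F}_{q^m}$, so the family of $q^r$-root spaces is closed under sums and intersections; bijectivity of $\rho_r$ from Corollary \ref{map bijective} together with its order-reversing property from Corollary \ref{corollary lattices} then upgrades to a lattice anti-isomorphism that swaps sums with intersections; and Proposition \ref{sums cyclotomic} writes every $q^r$-root space as a sum of $q^r$-cyclotomic spaces, so the latter generate the lattice. The main obstacle I expect is the degree relation for the right-lcm and right-gcd in the non-commutative ring $\mathcal{L}_{q^r}\mathbb{F}_{q^m}[x]$; once it is in hand, everything else reduces to routine Bezout manipulations and dimension bookkeeping.
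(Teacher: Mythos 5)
Your overall strategy matches the paper's: a right B\'ezout identity for the sum, the divisibility characterization from Theorem \ref{generator} for the intersection, and a dimension count to upgrade the easy inclusions $T_1+T_2 \subseteq Z(M)$ and $T_1 \cap T_2 \subseteq Z(D)$ to equalities. The one real divergence is the quantitative input. You lean on the degree formula $\deg_{q^r}(M)+\deg_{q^r}(D)=\deg_{q^r}(G_1)+\deg_{q^r}(G_2)$ for the right-lcm and right-gcd, which you rightly flag as the main obstacle but leave unproved; it is a true theorem of Ore, but it is not ``valid in any Euclidean domain'' by formal transfer --- in the commutative case it comes from unique factorization, and in the non-commutative ring $\mathcal{L}_{q^r}\mathbb{F}_{q^m}[x]$ it needs its own argument. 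The paper sidesteps it entirely: having identified $C_1(x)\cap C_2(x)=(M)$ and $C_1(x)+C_2(x)=(D)$, it applies the Grassmann formula to the codes themselves, $\dim(C_1\cap C_2)+\dim(C_1+C_2)=\dim C_1+\dim C_2$, and converts via $\deg_{q^r}(G)=n-\dim C$ (Theorem \ref{generator}) and Lemma \ref{root structure} into $\dim Z(M)+\dim Z(D)=\dim(T_1+T_2)+\dim(T_1\cap T_2)$, which together with the two inclusions forces both equalities at once; your degree identity then drops out as a byproduct rather than being assumed. I would adopt that route. Two smaller points: (i) to conclude that $D$ is the \emph{minimal} generator of $C_1(x)+C_2(x)$ you must observe that $D(x)$ right-divides $x^{[rn]}-x$ (it right-divides $G_1(x)$, which right-divides $x^{[rn]}-x$), so that every element of the ideal, even after reduction modulo $x^{[rn]}-x$, is still right-divisible by $D(x)$ --- the paper spells this out; the same observation is what guarantees that $D$ and $M$ have simple roots in $\mathbb{F}_{q^{rn}}$, the hypothesis of Lemma \ref{root structure}. (ii) Your direct derivation of $Z(D)=T_1\cap T_2$ by evaluating the B\'ezout identity is correct and slightly more self-contained than the paper's, which extracts that equality from the combined dimension count.
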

\begin{proof}
Define $ M(x) $ as the minimal generator of $ C_1(x) \cap C_2(x) $. We have that $ G_1(x) $ and $ G_2(x) $ both divide $ M(x) $ on the right by Theorem \ref{generator}, item 1, since $ M \in (G_1) $ and $ M \in (G_2) $. Now, if $ F \in C_1(x) \cap C_2(x) $, then $ M(x) $ divides $ F(x) $ on the right for the same reason. In conclusion, $ M(x) $ is the least common multiple on the right of $ G_1(x) $ and $ G_2(x) $.

On the other hand, define $ D(x) $ as the greatest common divisor of $ G_1(x) $ and $ G_2(x) $ on the right. By the Euclidean algorithm, we may find a B{\'e}zout's identity on the right $ D(x) = Q_1(x) \otimes G_1(x) + Q_2(x) \otimes G_2(x) $. This implies that $ (D) \subseteq C_1(x) + C_2(x) $. Moreover, by definition $ D(x) $ divides both $ G_1(x) $ and $ G_2(x) $ on the right, and therefore $ C_1(x) + C_2(x) \subseteq (D) $, and hence they are equal.

To see that $ D(x) $ is the minimal generator, take $ F \in (D) $, then $ F(x) = Q(x) \otimes D(x) + P(x) \otimes (x^{[rn]}-x) $. But since $ D(x) $ divides both $ G_1(x) $ and $ G_2(x) $, and these divide $ x^{[rn]} - x $, then $ D(x) $ divides $ x^{[rn]} - x $ and hence, it divides $ F(x) $.

Finally, we see that $ T_1 \cup T_2 \subseteq Z(M) $ by Theorem \ref{roots}, item 3, since $ M \in C_1(x) \cap C_2(x) $. Therefore, $ T_1 + T_2 \subseteq Z(M) $. On the other hand, since $ D \in C_1(x) + C_2(x) $, we see that $ T_1 \cap T_2 \subseteq Z(D) $ also by Theorem \ref{roots}, item 3. By the same theorem, we have that
$$ \dim(T_1 + T_2) + \dim(T_1 \cap T_2) = \dim(T_1) + \dim(T_2) = (n-\dim(C_1)) + (n-\dim(C_2)) $$
$$ = (n- \dim(C_1 \cap C_2)) + (n - \dim(C_1 + C_2)) = \dim(Z(M)) + \dim(Z(D)). $$
Hence, $ Z(M) = T_1 + T_2 $ and $ Z(D) = T_1 \cap T_2 $ and we are done.

The last statement of the theorem follows from Proposition \ref{sums cyclotomic}. 
\end{proof}

\subsection{Skew cyclic complementaries and idempotent generators}

The existence and/or uniqueness of complementaries is an important property of lattices. In the theory of classical cyclic codes, every cyclic code has a unique complementary cyclic code when the length and $ q $ are coprime \cite[Exercise 243]{pless}. In this case, every cyclic code also has an idempotent generator \cite[Theorem 4.3.2]{pless}, which describes very easily the complementary cyclic code (see \cite[Theorem 4.4.6]{pless}). 

In this subsection we investigate the existence and uniqueness of $ q^r $-cyclic complementaries and idempotent generators of $ q^r $-cyclic codes, and relate both. 

Observe that, by the fact that the map $ \rho_r $ in Definition \ref{roots} is a lattice anti-isomorphism, two $ q^r $-cyclic codes are complementary if, and only if, their corresponding $ q^r $-root spaces are complementary.

\begin{proposition} \label{pre complementaries}
Given $ q^r $-cyclic codes $ C_1(x) $ and $ C_2(x) $ with minimal generators $ G_1(x) $ and $ G_2(x) $, we have that they are complementary, that is, $ \mathbb{F}_{q^m}^n = C_1 \oplus C_2 $ if, and only if, $ G_1(x) $ and $ G_2(x) $ are coprime (on the right) and $ \deg_{q^r}(G_1(x)) + \deg_{q^r}(G_2(x)) = n $. 
\end{proposition}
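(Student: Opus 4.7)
The plan is to split the direct sum condition into its two usual parts and translate each via the dictionary between minimal generators, root spaces and the lattice operations already established in Theorems \ref{generator}, \ref{roots} and \ref{lattice}.

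First I would invoke the standard dimension formula for subspaces: $\mathbb{F}_{q^m}^n = C_1 \oplus C_2$ if, and only if, $C_1 + C_2 = \mathbb{F}_{q^m}^n$ and $\dim(C_1) + \dim(C_2) = n$. This reduces the proof to characterising each of the two conditions in terms of $G_1(x)$ and $G_2(x)$.

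For the dimension equation, item 3 of Theorem \ref{generator} (equivalently, item 2 of Theorem \ref{roots}) gives $\dim(C_i) = n - \deg_{q^r}(G_i(x))$, so $\dim(C_1) + \dim(C_2) = n$ is equivalent to $\deg_{q^r}(G_1(x)) + \deg_{q^r}(G_2(x)) = n$, which is exactly the degree condition in the statement. For the sum condition, item 2 of Theorem \ref{lattice} identifies the minimal generator of $C_1(x) + C_2(x)$ with $D(x) = \gcd(G_1(x), G_2(x))$ on the right. The whole space $\mathbb{F}_{q^m}^n$ corresponds, under $\gamma_r$, to the full ring $\mathcal{L}_{q^r}\mathbb{F}_{q^m}[x]/(x^{[rn]}-x)$, whose minimal (monic, minimum $q^r$-degree) generator is $x$, because $x$ is the two-sided unit for $\otimes$ and has $q^r$-degree $0$. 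By uniqueness of the minimal generator, $C_1 + C_2 = \mathbb{F}_{q^m}^n$ is therefore equivalent to $D(x) = x$, that is, to $G_1(x)$ and $G_2(x)$ being coprime on the right.

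Putting the two equivalences together yields the claim. The only point that requires a small verification, rather than direct quotation, is the identification of $x$ as the minimal generator of the trivial code $\mathbb{F}_{q^m}^n$; this is the place I expect a careful reader to pause, but it is immediate from the definition of $\otimes$ (so that $x \otimes F(x) = F(x)$ for every $q^r$-polynomial $F(x)$, whence the left ideal generated by $x$ is everything) together with the uniqueness statement in Theorem \ref{generator}. No non-trivial use of the non-commutative Euclidean structure beyond Theorem \ref{lattice} is needed, so there is no real obstacle: the proof is essentially a two-line bookkeeping argument once the lattice anti-isomorphism and the degree-dimension formula are in hand.
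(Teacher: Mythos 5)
Your proof is correct and follows essentially the same route as the paper: both reduce the sum condition to coprimality of $G_1(x)$ and $G_2(x)$ via Theorem \ref{lattice} (the minimal generator of $C_1(x)+C_2(x)$ is $\gcd(G_1(x),G_2(x))$, which equals $x$ exactly for the full space) and the intersection condition to the degree equation via $\dim(C_i) = n - \deg_{q^r}(G_i(x))$ from Theorem \ref{generator}. The only cosmetic difference is that you factor out the linear-algebra equivalence ($\mathbb{F}_{q^m}^n = C_1 \oplus C_2$ iff $C_1 + C_2 = \mathbb{F}_{q^m}^n$ and $\dim(C_1)+\dim(C_2)=n$) up front, whereas the paper runs the same dimension count inline.
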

\begin{proof}
By Theorem \ref{lattice}, the condition $ C_1(x) + C_2(x) = \mathcal{L}_{q^r} \mathbb{F}_{q^m}[x] / (x^{[rn]} - x) $ is equivalent to $ D(x) = x $, which means that $ G_1(x) $ and $ G_2(x) $ are coprime. By Theorem \ref{generator}, if $ C_1 $ and $ C_2 $ are complementary, then 
$$ \deg_{q^r}(G_1(x)) + \deg_{q^r}(G_2(x)) = n - \dim(C_1) + n - \dim(C_2) $$
$$ = n - (\dim(C_1) + \dim(C_2) - \dim(C_1 + C_2)) = n - \dim(C_1 \cap C_2) = n. $$
Conversely, if $ D(x) = x $ and $ \deg_{q^r}(G_1(x)) + \deg_{q^r}(G_2(x)) = n $, then $ C_1 + C_2 = \mathbb{F}_{q^m}^n $ by Theorem \ref{lattice} and $ \dim(C_1 \cap C_2) = 0 $ by Theorem \ref{generator} as before, and the theorem follows.
\end{proof}

In \cite[Theorem 6]{gursoy}, the existence of an idempotent generator is proven when $ n $ is coprime with $ q $ and also with the order of the automorphism $ \alpha \mapsto \alpha^{[r]} $. We next prove the existence in other cases (see Example \ref{example idempotent} below), and give other properties.

\begin{theorem} \label{general idempotent}
Let $ C(x) $ be a left ideal with minimal generator $ G(x) $ and check $ q^r $-polynomial $ H(x) $. The following holds
\begin{enumerate}
\item
An element $ E \in C(x) $ is idempotent (that is, $ E \otimes E = E $) and generates $ C(x) $ if, and only if, it is a unit on the right in this ideal.
\item
Given a $ q^r $-polynomial $ F(x) $ and an idempotent generator $ E $ of $ C(x) $, it holds that $ F \in C(x) $ if, and only if, $ F = F \otimes E $. In particular, $ x - E(x) $ is a check polynomial for $ C(x) $.
\item
For any idempotent generator $ E $ of $ C(x) $, the $ q^r $-polynomial $ x-E $ is also idempotent and $ (x-E) $ is a complementary for $ C(x) $.
\item
Assume that $ G $ and $ H $ are coprime on both sides. That is, we may obtain B{\'e}zout identities on both sides
$$ x = G \otimes G_1 + H \otimes H_1 = G_2 \otimes G + H_2 \otimes H, $$
in the ring $ \mathcal{L}_{q^r} \mathbb{F}_{q^m}[x] / (x^{[rn]}-x) $. Let $ E = x - H_2 \otimes H $ and $ E^\prime = x - H \otimes H_1 $. It holds that $ E=E^\prime $, and it is an idempotent generator for $ C(x) $.
\end{enumerate}
\end{theorem}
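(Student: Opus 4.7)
Items 1--3 are formal manipulations in the quotient ring $ R = \mathcal{L}_{q^r}\mathbb{F}_{q^m}[x]/(x^{[rn]}-x) $ and should be short. For item 1, if $ E $ is idempotent and $ (E) = C(x) $, then any $ F \in C(x) $ has the form $ A \otimes E $ and hence $ F \otimes E = A \otimes E \otimes E = A \otimes E = F $; conversely, $ F \otimes E = F $ for all $ F \in C(x) $ gives $ E \otimes E = E $ on taking $ F = E $, and exhibits each $ F $ as an element of $ R \otimes E = (E) $. Item 2 then follows from item 1 and the rewriting $ F \otimes (x - E) = F - F \otimes E $. For item 3, the identity $ (x - E) \otimes (x - E) = x - 2E + E \otimes E = x - E $ handles idempotence of $ x - E $, the splitting $ F = F \otimes E + F \otimes (x - E) $ gives $ C(x) + (x - E) = R $, and $ (x - E) \otimes E = E - E \otimes E = 0 $ forces $ C(x) \cap (x - E) $ to be trivial.

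The heart of the theorem is item 4, which I plan to prove by passing from $ R $ to its faithful action on $ \mathbb{F}_{q^{rn}} $ by $ \mathbb{F}_{q^r} $-linear endomorphisms. Faithfulness holds because every class in $ R $ has a representative of $ q^r $-degree at most $ n - 1 $, hence at most $ q^{(n-1)r} < q^{rn} $ roots. Set $ T = Z(G) $ and $ T' = Z(H) $, of $ \mathbb{F}_{q^r} $-dimensions $ n - k $ and $ k $ by Lemma \ref{root structure} (applicable to $ H $ since $ H $ divides $ x^{[rn]} - x $, forcing its roots to be simple and to lie in $ \mathbb{F}_{q^{rn}} $). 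From $ H \otimes G \equiv 0 $ one has $ \mathrm{Im}(G) \subseteq T' $, and a rank count then gives $ \mathrm{Im}(G) = T' $; symmetrically $ \mathrm{Im}(H) = T $. The left-B{\'e}zout identity, evaluated pointwise on any $ \beta \in T \cap T' $, yields $ \beta = G_2(G(\beta)) + H_2(H(\beta)) = 0 $; combined with $ \dim T + \dim T' = n $, this establishes $ \mathbb{F}_{q^{rn}} = T \oplus T' $.

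Let $ \pi $ denote the $ \mathbb{F}_{q^r} $-linear projection onto $ T' $ along $ T $. I then plan to check that both $ E $ and $ E' $ act as $ \pi $ on $ \mathbb{F}_{q^{rn}} $. For $ E $, the form $ E = G_2 \otimes G $ kills $ T = \ker G $, while $ E = x - H_2 \otimes H $ combined with $ H $ vanishing on $ T' $ shows $ E|_{T'} = \mathrm{id} $. For $ E' $, evaluating the right-B{\'e}zout identity at $ \beta $ gives $ \beta = G(G_1(\beta)) + H(H_1(\beta)) $, the unique decomposition of $ \beta $ in $ T' \oplus T $; hence $ E'(\beta) = G(G_1(\beta)) $ is the $ T' $-component of $ \beta $, so $ E'|_T = 0 $ and $ E'|_{T'} = \mathrm{id} $ as well. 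Faithfulness then delivers $ E = E' $ in $ R $; idempotence follows from $ \pi \circ \pi = \pi $; and $ Z(E) = \ker \pi = T $ together with item 5 of Theorem \ref{roots} show that $ E $ generates $ C(x) $.

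The principal obstacle, unsurprisingly, lies in item 4. The classical proof for cyclic codes that $ E = a G $ satisfies $ E^2 = E $ relies on sliding $ H $ past $ a $ in $ H \cdot a G = a \cdot H G $, which is unavailable in the non-commutative ring $ R $; and direct algebraic massage of $ G_2 \otimes G $ into $ G \otimes G_1 $ using a single B{\'e}zout identity appears to go in circles. Recognizing $ E $ and $ E' $ as the same geometric object, namely the projection onto $ \mathrm{Im}(G) $ along $ Z(G) $, and pulling the equality back via the faithful action on $ \mathbb{F}_{q^{rn}} $, is the key device I expect to need.
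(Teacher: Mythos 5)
Your items 1--3 match the paper, which simply cites the classical argument for 1--2 and gives the same intersection computation for 3. For item 4 you take a genuinely different route. The paper's proof is a two-line algebraic computation: writing $E = G_2 \otimes G = x - H_2 \otimes H$ and $E' = G \otimes G_1 = x - H \otimes H_1$, it evaluates $E \otimes E'$ in both ways, using $G \otimes H = H \otimes G = 0$ in the quotient ring to get $E \otimes E' = E \otimes (x - H \otimes H_1) = E$ and $E \otimes E' = (x - H_2 \otimes H) \otimes E' = E'$; hence $E = E'$, idempotence follows from $E \otimes E = E \otimes E' = E$, and $(E) = (G)$ from $E = G_2 \otimes G$ and $G = G \otimes E'$. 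So the ``direct algebraic massage'' you dismissed as circular does work --- the trick is to use \emph{both} B\'ezout identities at once via the mixed product $E \otimes E'$, which sidesteps the non-commutativity obstruction you correctly identified in the naive classical argument. Your alternative --- establishing faithfulness of the action on $\mathbb{F}_{q^{rn}}$, proving $\mathbb{F}_{q^{rn}} = Z(G) \oplus Z(H)$ with $\mathrm{Im}(G) = Z(H)$ and $\mathrm{Im}(H) = Z(G)$, and identifying both $E$ and $E'$ with the projection onto $Z(H)$ along $Z(G)$ --- is correct and checks out in every step (including the dimension count for $Z(H)$ via conventional divisibility of $x^{[rn]}-x$ by $H$). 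It is longer, but it buys a concrete semantic interpretation of the idempotent as a projection operator, which meshes nicely with the root-space lattice picture of Section \ref{section roots} and makes the complementarity in item 3 geometrically transparent; the paper's argument is shorter and purely formal but offers no such insight.
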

\begin{proof}
Items 1 and 2 are proven as in the classical case (see \cite[Section 4.3]{pless}). For item 3, we have that $ (x - E) + (E) $ is the whole quotient ring. On the other hand, take $ F \in (x - E) \cap (E) $. By item 1, $ E $ and $ x - E $ are units on the right in the ideals that they generate. Therefore, $ F = F \otimes E $ and $ F = F \otimes (x - E) = F - F \otimes E = F - F = 0 $. It follows that $ (x - E) \cap (E) = \{ 0 \} $, and item 3 is proven. 

We now prove item 4. We have that $ E = G_2 \otimes G $, $ E^\prime = G \otimes G_1 $ and $ G \otimes H = H \otimes G = 0 $ by Theorem \ref{generator}. Therefore $ E^\prime = E \otimes E^\prime = E $, and it is idempotent. On the other hand, $ E \in (G) $ and $ G = G \otimes E^\prime \in (E^\prime) $, and therefore $ C(x) = (G) = (E) $. 
\end{proof}

From the previous theorem and proposition, we deduce the following for a left ideal $ C(x) $ with minimal generator $ G(x) $ and check $ q^r $-polynomial $ H(x) $:

\begin{corollary} \label{complementaries}
The $ q^r $-cyclic codes $ (G) $ and $ (H) $ are complementary if, and only if, $ G(x) $ and $ H(x) $ are coprime. In that case, if $ E $ is the idempotent described in item 4 in the previous theorem, then $ (x- E) = (H) $.
\end{corollary}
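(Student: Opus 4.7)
The plan is to handle the two assertions separately, combining the structural results already established: Proposition \ref{pre complementaries} for the first equivalence, and items 3 and 4 of Theorem \ref{general idempotent} for the statement about the idempotent.

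For the first equivalence, I would start by pinning down the minimal generator of $(H)$. By item 5 of Theorem \ref{generator}, the $\mathbb{F}_{q^m}$-linear endomorphism $\phi$ of $R = \mathcal{L}_{q^r}\mathbb{F}_{q^m}[x]/(x^{[rn]}-x)$ defined by $\phi(F) = F \otimes H$ has kernel exactly $(G)$, a subspace of $\mathbb{F}_{q^m}$-dimension $k$ by item 3 of the same theorem; its image is $(H)$, so rank-nullity gives $\dim_{\mathbb{F}_{q^m}}(H) = n-k$. Applying item 3 once more to $(H)$, its minimal generator has $q^r$-degree $k$, and since $H$ itself belongs to $(H)$ with $\deg_{q^r}(H) = k$, the minimal generator of $(H)$ must agree with $H$ up to a non-zero scalar factor. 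This makes the degree condition in Proposition \ref{pre complementaries} automatic, $(n-k) + k = n$, so that proposition reduces to $G$ and $H$ being coprime on the right---a condition invariant under scalar normalisation.

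For the second statement, assume $G$ and $H$ coprime, let $H_2$ be as in the B\'ezout identity of item 4 of Theorem \ref{general idempotent}, and consider $E = x - H_2 \otimes H$. By construction, $x - E = H_2 \otimes H \in (H)$, so $(x - E) \subseteq (H)$. Item 3 of Theorem \ref{general idempotent} yields $(E) \oplus (x - E) = R$; since $E$ generates $C(x) = (G)$, this reads $(G) \oplus (x - E) = R$. The first part of the corollary supplies $(G) \oplus (H) = R$ as well, so $(x - E)$ and $(H)$ are both complements of $(G)$ and hence have the same $\mathbb{F}_{q^m}$-dimension. The inclusion $(x - E) \subseteq (H)$ therefore upgrades to equality.

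The mildly subtle ingredient, which I would take care to spell out, is that $H$ is not asserted to be monic in Theorem \ref{generator} and so need not itself be the minimal generator of $(H)$. The rank-nullity count removes this issue by identifying the $q^r$-degree of that minimal generator with $\deg_{q^r}(H)$ and forcing it to coincide with $H$ up to a scalar---exactly what is needed for a clean application of Proposition \ref{pre complementaries}.
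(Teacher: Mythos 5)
Your proof is correct and follows essentially the same route as the paper, which simply deduces the corollary from Proposition \ref{pre complementaries} together with items 3 and 4 of Theorem \ref{general idempotent}. The rank--nullity argument you add (via item 5 of Theorem \ref{generator}) to confirm that $H$ is, up to a scalar, the minimal generator of $(H)$ of degree $k$ is a worthwhile verification of a detail the paper leaves implicit; the same conclusion also falls out directly from $G(x)\otimes H(x)=x^{[rn]}-x$, which forces $\deg_{q^r}(G)+\deg_{q^r}(H)=n$.
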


\begin{remark}
Recall from Theorem \ref{roots}, item 5, that in particular, the minimal generator of a left ideal can be efficiently obtained from the idempotent generator.
\end{remark}

\begin{example} \label{example idempotent}
Let $ q=2 $, $ n=m=3 $ and $ r=1 $, consider the primitive element $ \alpha \in \mathbb{F}_{2^3} $ such that $ \alpha^3 + \alpha + 1 = 0 $, and the $ q $-polynomials $ G(x) = x^{[2]} + \alpha^4 x^{[1]} + \alpha^6 x $ and $ H(x) = x^{[1]} + \alpha x $, as in \cite[Example 2]{qcyclic}. By Euclidean division on both sides, we find that
$$ x = x \otimes G(x) + (x^{[1]} + \alpha x) \otimes H(x) = G(x) \otimes x + H(x) \otimes  (x^{[1]} + \alpha x). $$
Then $ E = E^\prime = G $. In this case the idempotent generator coincides with the minimal generator. Observe also that here the order of the automorphism $ \alpha \mapsto \alpha^{[1]} $ is $ 3 $, and hence is not coprime with $ n $. Therefore, Theorem \ref{general idempotent} covers other cases than \cite[Theorem 6]{gursoy}.

On the other hand, we see that the $ q $-polynomial $ x - E = x^{[2]} + \alpha^4 x^{[1]} + \alpha^2 x = (x^{[1]} + \alpha x) \otimes H(x) $ is an idempotent generator of $ (H) $, which is a complementary for $ C(x) $, as stated in the previous corollary.
\end{example}

\subsection{Rank equivalences and lattice automorphisms}

To conclude the section, we study rank equivalences and automorphisms of lattices of the family of $ q^r $-cyclic codes. A rank equivalence $ \varphi : \mathbb{F}_{q^m}^n \longrightarrow \mathbb{F}_{q^m}^n $ is an $ \mathbb{F}_{q^m} $-linear vector space isomorphism with $ {\rm wt_R}(\varphi(\mathbf{c})) = {\rm wt_R}(\mathbf{c}) $ (see \cite{similarities} for more details on rank equivalences). For convenience, we define the rank weight of $ F \in \mathcal{L}_{q^r} \mathbb{F}_{q^m}[x] / (x^{[rn]}-x) $ as 
\begin{equation} \label{definition weight polynomial}
{\rm wt_R}(F) = {\rm wt_R}(F_0,F_1, \ldots, F_{n-1}) = {\rm wt_R}(\gamma_r^{-1}(F)),
\end{equation}
where $ \gamma_r $ is as in (\ref{definition gamma}). Since the map $ \rho_r $ in Definition \ref{roots} is a lattice anti-isomorphism by Theorem \ref{lattice}, every automorphism of the lattice of $ \mathbb{F}_{q^m} $-linear $ q^r $-cyclic codes induces an automorphism of the lattice of $ q^r $-root spaces over $ \mathbb{F}_{q^m} $. In particular, every ring automorphism of $ \mathcal{L}_{q^r} \mathbb{F}_{q^m}[x]/(x^{[rn]} - x) $ induces such a lattice automorphism.

We study the following class of ring automorphisms:

\begin{definition}
For every $ a = 0,1,2, \ldots, rn-1 $, we define the morphism $ \varphi_a : \mathcal{L}_{q^r} \mathbb{F}_{q^m}[x] / $ $ (x^{[rn]}-x) \longrightarrow \mathcal{L}_{q^r} \mathbb{F}_{q^m}[x] / (x^{[rn]}-x) $ by $ \varphi_a (F) = x^{[rn-a]} \otimes F \otimes x^{[a]} $.
\end{definition}

We observe that this map is well-defined and corresponds to rising to the power $ q^{rn-a} $ in $ \mathbb{F}_{q^m}^n $ (and $ \varphi_0 $ is the identity). That is, if $ F = F_0 x + F_1 x^{[r]} + \cdots + F_{n-1} x^{[(n-1)r]} $, then
$$ x^{[rn-a]} \otimes F \otimes x^{[a]} = F_0^{[rn-a]} x + F_1^{[rn-a]} x^{[r]} + \cdots + F_{n-1}^{[rn-a]} x^{[(n-1)r]}. $$
We gather the main properties of the maps $ \varphi_a $ in the next proposition:

\begin{proposition} \label{maps a}
For every $ a, a^\prime = 0,1,2, \ldots, rn-1 $, the map $ \varphi_a $ satisfies:
\begin{enumerate}
\item
$ \varphi_a $ is a ring isomorphism. Viewed as map $ \varphi_a : \mathbb{F}_{q^m}^n \longrightarrow \mathbb{F}_{q^m}^n $, it is $ \mathbb{F}_q $-linear and $ \mathbb{F}_{q^m} $-semilinear.
\item
$ \varphi_a = \varphi_{a^\prime} $ if, and only if, $ a $ and $ a^\prime $ are congruent modulo $ m $.
\item
$ \varphi_0 = {\rm Id} $ and $ \varphi_a \circ \varphi_{a^\prime} = \varphi_{a^\prime} \circ \varphi_a = \varphi_{a + a^\prime} $. In particular, $ \varphi_a \circ \varphi_{n-a} = \varphi_{n-a} \circ \varphi_a = {\rm Id} $.
\item
For every $ q^r $-polynomial $ F(x) $, it holds that $ {\rm wt_R}(F) = {\rm wt_R}(\varphi_a(F)) $ (see (\ref{definition weight polynomial})), that is, $ \varphi_a $ is a rank equivalence.
\item
$ \varphi_a $ maps left ideals to left ideals and, in general, maps $ q^r $-cyclic codes to $ q^r $-cyclic codes.
\item
$ \varphi_a $ maps idempotents to idempotents.
\end{enumerate}
\end{proposition}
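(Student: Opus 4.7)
The plan is to first derive an explicit coefficient-level formula for $\varphi_a$, and then to read off each of the six items from that formula together with a short abstract argument or two.

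For $F = \sum_{i=0}^{n-1}F_i x^{[ir]}$, post-composing with $x^{[a]}$ shifts every exponent by $a$, and pre-composing with $x^{[rn-a]}$ both raises each coefficient to the $q^{rn-a}$ power and shifts the exponents by a further $rn-a$, giving $\sum_{i=0}^{n-1} F_i^{[rn-a]} x^{[rn+ir]}$. Since $m \mid rn$, in the quotient we have $x^{[rn]} \equiv x$ (so $x^{[rn+ir]} \equiv x^{[ir]}$) and simultaneously $c^{[rn]} = c$ for every $c \in \mathbb{F}_{q^m}$. Hence $\varphi_a(F) \equiv \sum_{i=0}^{n-1} F_i^{[rn-a]} x^{[ir]}$, a bona fide $q^r$-polynomial. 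This confirms that the map lands in the correct quotient and that, on coordinates in $\mathbb{F}_{q^m}^n$, $\varphi_a$ is simply componentwise Frobenius $c \mapsto c^{[rn-a]}$. Well-definedness on the quotient follows because $x^{[rn]} - x$ commutes with $x^{[a]}$ (both are pure power maps), so if $F - F' = A \otimes (x^{[rn]}-x)$, the same form persists after sandwiching.

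With this closed form, items 2, 3, 4 and 6 are short. Item 2 reduces to: two Frobenii on $\mathbb{F}_{q^m}$ coincide iff their exponents are congruent modulo $m$. Item 3 follows by noting that $\varphi_0$ raises each coefficient to $q^{rn}$, acting as the identity on $\mathbb{F}_{q^m}$, while the composition law $\varphi_a \circ \varphi_{a'} = \varphi_{a+a'}$ is obvious from the explicit formula once one uses $c^{[rn]} = c$ to collapse the double Frobenius; the stated inverse statement is then just $\varphi_a \circ \varphi_{rn-a} = \varphi_{rn} = \varphi_0$. Item 4 is immediate because the componentwise Frobenius is $\mathbb{F}_q$-linear and preserves the $\mathbb{F}_q$-span of the coordinate set, hence rank weight. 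Item 6 is a one-line consequence of item 1: $E \otimes E = E$ gives $\varphi_a(E) \otimes \varphi_a(E) = \varphi_a(E)$.

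For item 1, additivity is clear. For multiplicativity I would sandwich the identity $x^{[a]} \otimes x^{[rn-a]} = x^{[rn]} \equiv x$ into the middle of a product:
\[
\varphi_a(F) \otimes \varphi_a(G) = x^{[rn-a]} \otimes F \otimes \bigl(x^{[a]} \otimes x^{[rn-a]}\bigr) \otimes G \otimes x^{[a]} \equiv x^{[rn-a]} \otimes F \otimes G \otimes x^{[a]} = \varphi_a(F \otimes G).
\]
Bijectivity follows from item 3, which exhibits $\varphi_{rn-a}$ as a two-sided inverse. The $\mathbb{F}_q$-linearity and $\mathbb{F}_{q^m}$-semilinearity on $\mathbb{F}_{q^m}^n$ are then immediate from the componentwise Frobenius description.

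Item 5 splits into two parts. A ring isomorphism carries left ideals to left ideals, so the $\mathbb{F}_{q^m}$-linear case follows from item 1 together with Lemma \ref{char qcyclic}. For a general (possibly non-linear) $q^r$-cyclic code $C$, I would verify on $\mathbb{F}_{q^m}^n$ that $\sigma_{r,n} \circ \varphi_a = \varphi_a \circ \sigma_{r,n}$; this is transparent because the cyclic permutation commutes with any componentwise map and the componentwise Frobenii $c \mapsto c^{[r]}$ and $c \mapsto c^{[rn-a]}$ commute with each other. The main technical obstacle in the whole proof is the opening computation: $x^{[a]}$ does not in general belong to $\mathcal{L}_{q^r}\mathbb{F}_{q^m}[x]$, so one must check carefully that the sandwich $x^{[rn-a]} \otimes F \otimes x^{[a]}$ nevertheless represents a $q^r$-polynomial in the quotient, and this is precisely where the assumption $m \mid rn$ is indispensable.
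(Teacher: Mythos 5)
Your proof is correct and takes essentially the same route as the paper's: both hinge on the observation (stated in the paper just before the proposition) that $\varphi_a$ acts coordinatewise as the Frobenius $c \mapsto c^{[rn-a]}$, deduce item 4 from the fact that this Frobenius is an $\mathbb{F}_q$-linear automorphism of $\mathbb{F}_{q^m}$ and hence preserves $\mathbb{F}_q$-dimensions, and obtain items 5 and 6 from the ring-isomorphism property, with your write-up simply supplying the ``straightforward calculations'' the paper omits. Your inverse $\varphi_{rn-a}$ in item 3 is the correct reading of the statement's $\varphi_{n-a}$.
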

\begin{proof}
The first three items are straightforward calculations. The last two items follow from these first three items.

Finally, if $ \mathbf{c} = (c_0, c_1, \ldots, c_{n-1}) \in \mathbb{F}_{q^m}^n $, then the dimension of the vector space generated by $ c_0,c_1, \ldots, c_{n-1} $ in $ \mathbb{F}_{q^m} $ is the same as the dimension (over $ \mathbb{F}_q $) of the vector space generated by $ c_0^q, c_1^q, \ldots, c_{n-1}^q $, since rising to the power $ q $ is an $ \mathbb{F}_q $-linear automorphism of $ \mathbb{F}_{q^m} $. Therefore, $ {\rm wt_R}(c_0,c_1, \ldots, c_{n-1}) = {\rm wt_R}(c_0^q, c_1^q, \ldots, c_{n-1}^q) $. 

Since $ \varphi_a $ corresponds to rising to the power $ q^{rn-a} $, we see that it also preserves rank weights, and item 4 follows. 
\end{proof}

\begin{remark}
By item 6 in the previous proposition and Theorem \ref{roots}, item 5, we may obtain the minimal generator of a $ q^r $-cyclic code equivalent to a given one if we know the minimal generator or an idempotent of this latter code.
\end{remark}

On the other hand, these are the only maps coming from ring automorphisms of $ \mathcal{L}_{q^r} \mathbb{F}_{q^m}[x] / (x^{[rn]}-x) $ having the following reasonable properties: they commute with the $ q^r $-shifting operators (\ref{shifting operator}), are $ \mathbb{F}_q $-linear and leave the field $ \mathbb{F}_{q^m} $ invariant ($ \mathbb{F}_{q^m} $ is a subring of $ \mathcal{L}_{q^r} \mathbb{F}_{q^m}[x] / (x^{[rn]}-x) $ by considering any $ \alpha \in \mathbb{F}_{q^m} $ as the polynomial $ \alpha x $).

\begin{proposition}
For $ a = 0,1,2, \ldots, rn-1 $, if we view $ \varphi_a $ as a map $ \varphi_a : \mathbb{F}_{q^m}^n \longrightarrow \mathbb{F}_{q^m}^n $, then it holds that
$$ \sigma_{r,n} \circ \varphi_a = \varphi_a \circ \sigma_{r,n}, $$
where $ \sigma_{r,n} $ is as in (\ref{shifting operator}). Moreover, if $ \varphi $ is an $ \mathbb{F}_q $-linear ring automorphism of $ \mathcal{L}_{q^r} \mathbb{F}_{q^m}[x] / $ $ (x^{[rn]}-x) $ satisfying this condition and leaving $ \mathbb{F}_{q^m} $ invariant, then $ \varphi = \varphi_a $ for some $ a= 0,1,2, \ldots, rn-1 $.
\end{proposition}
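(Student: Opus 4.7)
The key observation is that the shift $\sigma_{r,n}$ corresponds, under $\gamma_r$, to left $\otimes$-multiplication by $x^{[r]}$ in the quotient ring: a direct computation using $x^{[rn]} \equiv x$ gives $\gamma_r(\sigma_{r,n}(\mathbf{c})) = x^{[r]} \otimes \gamma_r(\mathbf{c})$. The first assertion then reduces to observing that $x^{[r]} \otimes x^{[rn-a]} = x^{[rn-a]} \otimes x^{[r]}$ (both equal $x^{[rn-a+r]}$), so
$$\sigma_{r,n}(\varphi_a(F)) = x^{[r]} \otimes x^{[rn-a]} \otimes F \otimes x^{[a]} = x^{[rn-a]} \otimes x^{[r]} \otimes F \otimes x^{[a]} = \varphi_a(\sigma_{r,n}(F)).$$

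For the second assertion, let $\varphi$ be as in the statement. Since $\varphi$ is a ring automorphism, $\varphi(x) = x$. The commuting hypothesis rewrites, via the identification above, as $\varphi(x^{[r]} \otimes F) = x^{[r]} \otimes \varphi(F)$; setting $F = x$ yields $\varphi(x^{[r]}) = x^{[r]}$, and by induction on $j$, using that $x^{[jr]}$ is the $j$-fold $\otimes$-power of $x^{[r]}$, I would obtain $\varphi(x^{[jr]}) = x^{[jr]}$ for every $j$. Combined with $\mathbb{F}_q$-linearity, the invariance of $\mathbb{F}_{q^m}$ (viewed as the subring $\{\alpha x : \alpha \in \mathbb{F}_{q^m}\}$) forces $\varphi$ to restrict to an $\mathbb{F}_q$-algebra automorphism of $\mathbb{F}_{q^m}$, which by standard Galois theory has the form $\alpha \mapsto \alpha^{q^{a'}}$ for some $a' \in \{0, 1, \ldots, m-1\}$.

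To conclude, I would choose any $a \in \{0, 1, \ldots, rn-1\}$ with $rn - a \equiv a' \pmod m$ (possible since $m \mid rn$). A direct expansion gives $\varphi_a(\alpha x) = x^{[rn-a]} \otimes \alpha x \otimes x^{[a]} = \alpha^{q^{rn-a}} x = \alpha^{q^{a'}} x$, so $\varphi$ and $\varphi_a$ agree on every $\alpha x$; they also agree on each $x^{[jr]}$ by the previous step (applied both to $\varphi$ and, via the first part of the proposition, to $\varphi_a$). Since every element of the quotient ring decomposes as $F = \sum_{j=0}^{n-1} (F_j x) \otimes x^{[jr]}$, two ring homomorphisms that coincide on these two generating families must agree on $F$, giving $\varphi = \varphi_a$. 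No single step looks genuinely hard; the only subtlety I expect is the bookkeeping between $a$ and $a'$, which is consistent with item 2 of Proposition \ref{maps a} telling us that $\varphi_a$ only depends on $a$ modulo $m$.
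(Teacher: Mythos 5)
Your proof is correct and follows essentially the same route as the paper's: identify $\sigma_{r,n}$ with left symbolic multiplication by $x^{[r]}$, observe that the restriction of $\varphi$ to the constants $\alpha x$ is a field automorphism of $\mathbb{F}_{q^m}$ fixing $\mathbb{F}_q$, hence a Frobenius power, and then match it with some $\varphi_a$. You merely make the final identification explicit (agreement on the $x^{[jr]}$ and on the constants, which together generate the ring) where the paper is terse, and you correctly work with $x^{[r]}$ where the paper's displayed commuting condition writes $x^{[1]}$ (a typo, since $x^{[1]}$ is not a $q^r$-polynomial for $r>1$).
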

\begin{proof}
The fact that a ring automorphism $ \varphi $ commutes with $ \sigma_{r,n} $ is equivalent to the condition
\begin{equation} \label{condition commutes}
 \varphi (x^{[1]} \otimes F) = x^{[1]} \otimes \varphi(F),
\end{equation}
for all $ F \in \mathcal{L}_{q^r} \mathbb{F}_{q^m}[x] / (x^{[rn]}-x) $, which is satisfied if $ \varphi = \varphi_a $. 

On the other hand, since $ \varphi (\alpha x + \beta x) = \varphi (\alpha x) + \varphi(\beta x) $ and $ \varphi (\alpha x \otimes \beta x) = \varphi (\alpha x) \otimes \varphi(\beta x) $, for all $ \alpha, \beta \in \mathbb{F}_{q^m} $, we have that $ \varphi $ is an automorphism of the field $ \mathbb{F}_{q^m} $ when restricted to constant polynomials $ \alpha x $. 

Moreover, if $ \alpha \in \mathbb{F}_q $, by $ \mathbb{F}_q $-linearity it holds that $ \varphi (\alpha x) = \alpha x \otimes \varphi(x) = \alpha x $. Hence $ \mathbb{F}_q $ is fixed by the automorphism induced by $ \varphi $ in $ \mathbb{F}_{q^m} $. Therefore, there exists an $ a = 0,1,2, \ldots, m-1 $ such that $ \varphi (\alpha x) = \alpha^{[nr - a]} x $, for all $ \alpha \in \mathbb{F}_{q^m} $. This together with (\ref{condition commutes}) means that $ \varphi = \varphi_a $ and we are done.
\end{proof}

Finally, we see that the lattice automorphism induced by $ \varphi_a $ in the lattice of $ q^r $-spaces over $ \mathbb{F}_{q^m} $ corresponds to the one induced by the field automorphism of $ \mathbb{F}_{q^{rn}} $ given by $ \beta \mapsto \beta^{[a]} $. In particular, by item 2 in Proposition \ref{maps a}, two of these automorphisms of the lattice of $ q^r $-root spaces over $ \mathbb{F}_{q^m} $, for $ a $ and $ a^\prime $, respectively, are equal if, and only if, $ a $ and $ a^\prime $ are congruent modulo $ m $. In short: 

\begin{proposition}
For all $ a = 0,1,2, \ldots, nr -1 $ and all $ F \in \mathcal{L}_{q^r} \mathbb{F}_{q^m}[x] / (x^{[rn]}-x) $, it holds that $ Z(\varphi_a(F)) = Z(F)^{[a]} $. In particular, $ Z(F)^{[a]} = Z(F)^{[a^\prime]} $ if $ a $ and $ a^\prime $ are congruent modulo $ m $. 
\end{proposition}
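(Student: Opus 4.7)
My plan is to unravel the definition of $\varphi_a$ as a composition of $\mathbb{F}_{q^r}$-linear maps on $\mathbb{F}_{q^{rn}}$, reduce the vanishing of $\varphi_a(F)$ at a point to the vanishing of $F$ at a Frobenius-conjugate point, and then exploit the bijectivity of Frobenius on $\mathbb{F}_{q^{rn}}$. The calculation is essentially a single unfolding of $\otimes$ as composition, and the only real subtlety is careful bookkeeping of the Frobenius exponents in the quotient $\mathcal{L}_{q^r}\mathbb{F}_{q^m}[x]/(x^{[rn]}-x)$, where $x^{[rn]}$ collapses to $x$; I take this modest bookkeeping to be the main obstacle.

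Concretely, fix $\gamma \in \mathbb{F}_{q^{rn}}$. Using $F \otimes G = F(G)$ and the fact that $x^{[s]}$ acts on $\mathbb{F}_{q^{rn}}$ as the Frobenius $y \mapsto y^{[s]}$, I would unfold
\[ \varphi_a(F)(\gamma) = \bigl(x^{[rn-a]} \otimes F \otimes x^{[a]}\bigr)(\gamma) = \bigl(F(\gamma^{[a]})\bigr)^{[rn-a]}. \]
Since $y \mapsto y^{[rn-a]}$ is injective on $\mathbb{F}_{q^{rn}}$, the right-hand side vanishes if and only if $F(\gamma^{[a]}) = 0$, i.e., $\gamma^{[a]} \in Z(F)$. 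Hence
\[ Z(\varphi_a(F)) = \{\gamma \in \mathbb{F}_{q^{rn}} \, : \, \gamma^{[a]} \in Z(F)\}, \]
which, because $\sigma_a : \gamma \mapsto \gamma^{[a]}$ is a bijection of $\mathbb{F}_{q^{rn}}$ with inverse a further Frobenius power (using $\gamma^{[rn]}=\gamma$), agrees with the Frobenius image $Z(F)^{[a]}$ of $Z(F)$ inside $\mathbb{F}_{q^{rn}}$, as claimed.

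For the ``in particular'' statement, the cleanest route is to appeal to item~2 of Proposition~\ref{maps a}: if $a \equiv a^\prime \pmod{m}$, then $\varphi_a = \varphi_{a^\prime}$ as ring endomorphisms, and applying the identity just established to both sides forces $Z(F)^{[a]} = Z(\varphi_a(F)) = Z(\varphi_{a^\prime}(F)) = Z(F)^{[a^\prime]}$. A self-contained alternative is to observe that since the coefficients of $F$ lie in $\mathbb{F}_{q^m}$, one has $F(\gamma^{[m]}) = F(\gamma)^{[m]}$, so $Z(F)$ is stable under $\gamma \mapsto \gamma^{[m]}$; this gives $Z(F)^{[m]} = Z(F)$ and hence the asserted $m$-periodicity of $Z(F)^{[a]}$ in $a$.
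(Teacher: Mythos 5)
Your calculation is sound up to the identity $Z(\varphi_a(F)) = \{\gamma \in \mathbb{F}_{q^{rn}} \,:\, \gamma^{[a]} \in Z(F)\}$, but the final step is a genuine gap: that set is the \emph{preimage} $\sigma_a^{-1}(Z(F))$, and the bijectivity of $\sigma_a$ only converts it into the image of $Z(F)$ under the \emph{inverse} Frobenius power, namely $Z(F)^{[rn-a]}$, not the image $Z(F)^{[a]}$ that you assert. Since a root space over $\mathbb{F}_{q^m}$ is only guaranteed to be stable under $\gamma \mapsto \gamma^{[m]}$ (your own correct observation), the equality $Z(F)^{[rn-a]} = Z(F)^{[a]}$ would require $Z(F)^{[2a]} = Z(F)$, which fails in general. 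Concretely, take $q=2$, $r=1$, $m=n=3$, $a=1$ and $F(x) = x^{[1]} - \beta^{q-1}x$ with $\beta$ a generator of $\mathbb{F}_{q^3}^{*}$, so that $Z(F) = \langle \beta \rangle_{\mathbb{F}_q}$; then $\varphi_1(F) = x^{[1]} - (\beta^{[2]})^{q-1}x$ has root space $\langle \beta^{[2]} \rangle_{\mathbb{F}_q} = Z(F)^{[rn-1]}$, which is different from $Z(F)^{[1]} = \langle \beta^{[1]} \rangle_{\mathbb{F}_q}$.

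In fairness, the sign you lose is also absent from the statement itself and from the sentence preceding it in the paper (which asserts that $\varphi_a$ induces $\beta \mapsto \beta^{[a]}$ on root spaces): since $\varphi_a$ raises coefficients to the power $q^{rn-a}$, the induced map on root spaces is $\beta \mapsto \beta^{[rn-a]}$, and the identity should read $Z(\varphi_a(F)) = Z(F)^{[rn-a]}$. The paper states the proposition only as a summary of the surrounding discussion and offers no explicit proof, so there is no written argument to compare yours against step by step; but as it stands your justification (``because $\sigma_a$ is a bijection'') does not prove the asserted set equality, and you should either establish the corrected identity $Z(\varphi_a(F)) = Z(F)^{[rn-a]}$ or explain why image and preimage coincide in your situation (they do not, in general). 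Your handling of the ``in particular'' clause, via $F(\gamma^{[m]}) = F(\gamma)^{[m]}$ and hence $Z(F)^{[m]} = Z(F)$, is correct and survives the correction, since $Z(F)^{[rn-a]}$ still depends only on $a$ modulo $m$.
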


\section{Bounds on the minimum rank distance} \label{section bounds}

In this section we will give lower bounds on the minimum rank distance of $ q^r $-cyclic codes. The simplest bound on the minimum Hamming distance of classical cyclic codes is the BCH bound, which has been adapted to a bound on the minimum rank distance of $ q^r $-cyclic codes in \cite[Proposition 1]{skewcyclic3}. In this section, we will give two extensions of this bound analogous to the Hartmann-Tzeng bound \cite{hartmann} in the form of \cite[Theorem 2]{vanlint}, and another one analogous to the bound in \cite[Theorem 11]{vanlint}, also known as the shift bound.

\subsection{The rank-shift and rank-Hartmann-Tzeng bounds}

We start by giving the definition of independent sequence of $ \mathbb{F}_{q^r} $-linear vector subspaces of $ \mathbb{F}_{q^{rn}} $ with respect to some $ \mathbb{F}_{q^r} $-linear subspace $ S \subseteq \mathbb{F}_{q^{rn}} $.

\begin{definition} \label{def independent}
Given $ \mathbb{F}_{q^r} $-linear subspaces $ S, I_0, I_1, I_2, \ldots \subseteq \mathbb{F}_{q^{rn}} $, we say that the sequence $ I_0, I_1, I_2, \ldots $ is independent with respect to $ S $ if the following hold:
\begin{enumerate}
\item
$ I_0 = \{ \mathbf{0} \} $.
\item
For $ i > 0 $, either 
\begin{enumerate}
\item
$ I_i = I_j \oplus \langle \beta \rangle $, for some $ 0 \leq j < i $, $ I_j \subseteq S $ and $ \beta \notin S $, or
\item
$ I_i = I_j^{[br]} $, for some $ 0 \leq j < i $ and some integer $ b \geq 0 $.
\end{enumerate}
\end{enumerate}
We say that a subspace $ I \subseteq \mathbb{F}_{q^{rn}} $ is independent with respect to $ S $ if it is a space in a sequence that is independent with respect to $ S $.
\end{definition}

The van Lint-Wilson or shift bound \cite[Theorem 11]{vanlint} for the rank metric becomes then as follows. Observe that it is a bound on the rank weight (see (\ref{definition weight polynomial})) of a given $ q^r $-polynomial in $ \mathcal{L}_{q^r} \mathbb{F}_{q^m}[x] / (x^{[rn]} - x) $ in terms of its roots.

\begin{theorem} [\textbf{Rank-shift bound}] \label{van lint}
Let $ F \in \mathcal{L}_{q^r} \mathbb{F}_{q^m}[x] / (x^{[rn]} - x) $ and $ S = Z(F) = \{ \beta \in \mathbb{F}_{q^{rn}} \mid F(\beta) = 0 \} $, as in Definition \ref{def root space}. If $ I \subseteq \mathbb{F}_{q^{rn}} $ is an $ \mathbb{F}_{q^r} $-linear subspace independent with respect to $ S $, then
$$ {\rm wt_R}(F) \geq \dim_{\mathbb{F}_{q^r}}(I), $$
where $ {\rm wt_R}(F) $ is as in (\ref{definition weight polynomial}).
\end{theorem}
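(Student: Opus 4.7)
The plan is to adapt the classical van Lint--Wilson argument to the rank-metric setting by constructing an appropriate rank analogue of the evaluation matrix. The case $F=0$ is trivial, since then $S=\mathbb{F}_{q^{rn}}$, which by induction on the defining sequence forces $\dim I=0$. Assume $F\neq 0$, set $w={\rm wt_R}(F)$, fix an $\mathbb{F}_q$-basis $\gamma_1,\ldots,\gamma_w$ of the coefficient span $\langle F_0,\ldots,F_{n-1}\rangle_{\mathbb{F}_q}\subseteq\mathbb{F}_{q^m}$, and write $F_i=\sum_{j=1}^w a_{ij}\gamma_j$ with $a_{ij}\in\mathbb{F}_q$. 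This yields the decomposition
$$F(x) \;=\; \sum_{j=1}^w \gamma_j\, L_j(x), \qquad \textrm{where} \qquad L_j(x)=\sum_{i=0}^{n-1}a_{ij}\,x^{[ir]}\in\mathcal{L}_{q^r}\mathbb{F}_q[x].$$
Since each $L_j$ has coefficients in $\mathbb{F}_q$, a direct calculation gives the commutation identity $L_j(\alpha^{[br]})=L_j(\alpha)^{[br]}$ for every $\alpha\in\mathbb{F}_{q^{rn}}$ and every $b\geq 0$.

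Given an $\mathbb{F}_{q^r}$-basis $\beta_1,\ldots,\beta_d$ of $I$, associate to $I$ the $d\times w$ matrix $N_I=(L_j(\beta_i))_{i,j}$ with entries in $\mathbb{F}_{q^{rn}}$. The central claim is that $N_I$ has $\mathbb{F}_{q^{rn}}$-rank exactly $d=\dim_{\mathbb{F}_{q^r}}I$ whenever $I$ is independent with respect to $S$. Since $N_I$ has only $w$ columns this immediately forces $w\geq d$, proving the theorem. The claim is established by induction on the position of $I$ in a defining sequence $I_0,I_1,I_2,\ldots$, the base case $I_0=\{0\}$ being vacuous.

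For the shifting step $I_p=I_q^{[br]}$, a basis of $I_p$ is $\beta_1^{[br]},\ldots,\beta_d^{[br]}$ where $\beta_1,\ldots,\beta_d$ is a basis of $I_q$, and the commutation identity shows that $N_{I_p}$ is obtained from $N_{I_q}$ by applying the $q^{br}$-power Frobenius entrywise; since this Frobenius is a field automorphism of $\mathbb{F}_{q^{rn}}$ it preserves rank, and the inductive hypothesis yields the desired equality. For the extension step $I_p=I_q\oplus\langle\beta\rangle$ with $I_q\subseteq S$ and $\beta\notin S$, assume towards a contradiction that the new row $(L_j(\beta))_{j}$ of $N_{I_p}$ is an $\mathbb{F}_{q^{rn}}$-linear combination $\sum_i c_i(L_j(\beta_i))_{j}$ of the rows inherited from $N_{I_q}$; multiplying these equalities by $\gamma_j$ and summing over $j$ produces $F(\beta)=\sum_i c_i F(\beta_i)$, and $F(\beta_i)=0$ for all $i$ because $\beta_i\in I_q\subseteq S$, so $F(\beta)=0$, contradicting $\beta\notin S$.

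The main obstacle is pinpointing the correct matrix: the ``support'' analogue in the rank metric must be taken as the $\mathbb{F}_q$-span of the coefficients of $F$, and only after writing $F$ as $\sum_j\gamma_j L_j$ do the scalars $\gamma_j\in\mathbb{F}_{q^m}$ decouple from the $\mathbb{F}_q$-valued $q^r$-polynomials $L_j$ cleanly enough for $\mathbb{F}_{q^{rn}}$-linear dependence arguments on the rows of $N_I$ to replace the classical linear-dependence arguments and make the shift proof go through.
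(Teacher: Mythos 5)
Your proof is correct and follows essentially the same strategy as the paper's: both associate to each subspace $J$ in the independent sequence a matrix of evaluations with only $w={\rm wt_R}(F)$ columns, prove by induction that its rank equals $\dim_{\mathbb{F}_{q^r}}(J)$, handle the shift step via the Frobenius being a rank-preserving semilinear automorphism, and handle the extension step by deriving $F(\beta)=0$ from a hypothetical linear dependence. The only (cosmetic) difference is that you decompose $F=\sum_j\gamma_j L_j$ over a basis of the coefficient span and sum against the $\gamma_j$, whereas the paper works with a generator matrix $A$ of the $\mathbb{F}_q$-row span and a right inverse $A'$ satisfying $\mathbf{F}A'^{T}A=\mathbf{F}$ — two views of the same rank-weight decomposition.
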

\begin{proof}
Define the vector $ \mathbf{F} = (F_0, F_1, \ldots, F_{n-1}) \in \mathbb{F}_{q^m}^n $ if $ F = F_0 x + F_1 x^{[r]} + \cdots + F_{n-1} x^{[(n-1)r]} $ (recall (\ref{definition gamma})). Now write $ \mathbf{F} = \sum_{i=0}^{m-1} \alpha_i \mathbf{F}_i $, where $ \mathbf{F}_i \in \mathbb{F}_q^n $, for $ i = 0,1, \ldots, m-1 $ and $ \alpha_0, \alpha_1, \ldots, \alpha_{m-1} $ is a basis of $ \mathbb{F}_{q^m} $ over $ \mathbb{F}_q $. Define $ w = {\rm wt_R}(F) $, and recall from Subsection \ref{subsec rank codes} that $ w = \dim_{\mathbb{F}_q}(\langle \mathbf{F}_0, \mathbf{F}_1, \ldots, \mathbf{F}_{m-1} \rangle_{\mathbb{F}_q}) $.

Let $ A $ be a $ w \times n $ matrix over $ \mathbb{F}_q $ whose rows generate the vector space $ \langle \mathbf{F}_0, \mathbf{F}_1, \ldots, $ $ \mathbf{F}_{m-1} \rangle_{\mathbb{F}_q} $. Since $ A $ is full-rank, there exists a $ w \times n $ matrix $ A^\prime $ over $ \mathbb{F}_q $ such that $ A A^{\prime T} = I $. On the other hand, by definition of $ A $, there exist $ \mathbf{x}_i \in \mathbb{F}_q^w $ with $ \mathbf{F}_i = \mathbf{x}_i A $, for $ i = 0,1, \ldots, m-1 $. It follows that 
$$ \mathbf{F} (A^{\prime T} A) = \sum_{i=0}^{m-1} \alpha_i \mathbf{x}_i A (A^{\prime T} A) = \sum_{i=0}^{m-1} \alpha_i \mathbf{x}_i (A A^{\prime T}) A = \sum_{i=0}^{m-1} \alpha_i \mathbf{x}_i A = \mathbf{F}. $$

On the other hand, for an $ \mathbb{F}_{q^r} $-linear subspace $ J \subseteq \mathbb{F}_{q^{rn}} $, define the $ \mathbb{F}_{q^{rn}} $-linear subspace of $ \mathbb{F}_{q^{rn}}^w $ given by
$$ V(J) = \langle \{ (\beta, \beta^{[r]}, \beta^{[2r]}, \ldots, \beta^{[(n-1)r]}) A^T \mid \beta \in J \} \rangle_{\mathbb{F}_{q^{rn}}} \subseteq \mathbb{F}_{q^{rn}}^w. $$
We will prove that $ \dim_{\mathbb{F}_{q^{rn}}}(V(I)) = \dim_{\mathbb{F}_{q^r}}(I) $, and hence it will follow that $ w \geq \dim_{\mathbb{F}_{q^r}}(I) $.

By definition, there exists a sequence $ I_0, I_1, I_2, \ldots \subseteq \mathbb{F}_{q^{rn}} $ of $ \mathbb{F}_{q^r} $-linear subspaces that is independent with respect to $ S $ and $ I = I_i $, for some $ i $. We will prove by induction on $ i $ that $ \dim_{\mathbb{F}_{q^{rn}}}(V(I_i)) = \dim_{\mathbb{F}_{q^r}}(I_i) $.

For $ i = 0 $, we have that $ I_0 = \{ 0 \} $ and $ V(I_0) = \{ \mathbf{0} \} $, and the statement is true.

Fix $ i > 0 $ and assume that it is true for all $ 0 \leq j < i $. The space $ I_i $ may be obtained in two different ways, according to Definition \ref{def independent}:

First, assume that $ I_i = I_j \oplus \langle \beta \rangle $, with $ 0 \leq j < i $, $ I_j \subseteq S $ and $ \beta \notin S $. Therefore, $ \dim_{\mathbb{F}_{q^r}}(I_i) = \dim_{\mathbb{F}_{q^r}}(I_j) + 1 $. It follows that $ \dim_{\mathbb{F}_{q^{rn}}}(V(I_i)) \leq \dim_{\mathbb{F}_{q^{rn}}}(V(I_j)) + 1 $. Assume that $ \dim_{\mathbb{F}_{q^{rn}}}(V(I_i)) = \dim_{\mathbb{F}_{q^{rn}}}(V(I_j)) $. This means that 
$$ (\beta, \beta^{[r]}, \beta^{[2r]}, \ldots, \beta^{[(n-1)r]}) A^T \in V(I_j). $$
On the other hand, for every $ \gamma \in S $, it holds that
$$ 0 = F(\gamma) = \mathbf{F} (\gamma, \gamma^{[r]}, \ldots, \gamma^{[(n-1)r]})^T = (\mathbf{F}A^{\prime T}) (A (\gamma, \gamma^{[r]}, \ldots, \gamma^{[(n-1)r]})^T). $$
Since $ (\beta, \beta^{[r]}, \beta^{[2r]}, \ldots, \beta^{[(n-1)r]}) A^T $ is a linear combination (over $ \mathbb{F}_{q^{rn}} $) of vectors in $ V(I_j) $, it follows that
$$ 0 = (\mathbf{F}A^{\prime T}) (A (\beta, \beta^{[r]}, \ldots, \beta^{[(n-1)r]})^T) = \mathbf{F} (\beta, \beta^{[r]}, \ldots, \beta^{[(n-1)r]})^T = F(\beta), $$
which means that $ \beta \in S $, a contradiction. Thus $ \dim_{\mathbb{F}_{q^{rn}}}(V(I_i)) = \dim_{\mathbb{F}_{q^{rn}}}(V(I_j)) + 1 $ and the result holds in this case.

Now assume that $ I_i = I_j^{[br]} $, for some integer $ b \geq 0 $ and $ 0 \leq j < i $. Since rising to the power $ q^r $ in $ \mathbb{F}_{q^{rn}} $ is an $ \mathbb{F}_{q^r} $-linear vector space automorphism, we have that  $ \dim_{\mathbb{F}_{q^r}}(I_i) = \dim_{\mathbb{F}_{q^r}}(I_j) $. On the other hand, rising to the power $ q^r $ in $ \mathbb{F}_{q^{rn}}^w $ is an $ \mathbb{F}_{q^{rn}} $-semilinear vector space automorphism, which also preserve dimensions over $ \mathbb{F}_{q^{rn}} $. Since $ V(I_i) = V(I_j)^{[br]} $, we have that $ \dim_{\mathbb{F}_{q^{rn}}}(V(I_i)) = \dim_{\mathbb{F}_{q^{rn}}}(V(I_j)) $ and the result holds also in this case.
\end{proof}

Future research on other possible generalizations of the rank-BCH bound could be trying to obtain rank versions of the bounds in \cite{duursmasymmetric,pellikaanshift}, to cite some. We next give a toy example to illustrate the previous bound:

\begin{example}
Let $ r=1 $, $ n = m = 2 $. Take a vector $ \mathbf{F} = (F_0, F_1) \in \mathbb{F}_{q^2}^2 $. We next see that the previous bound gives the exact value of $ {\rm wt_R}(\mathbf{F}) $. Observe that $ {\rm wt_R}(\gamma \mathbf{F}) = {\rm wt_R}(\mathbf{F}) $, for all non-zero $ \gamma \in \mathbb{F}_{q^2} $, and hence we may assume $ \mathbf{F} = (1, \alpha) $ for some $ \alpha \in \mathbb{F}_{q^2} $. Let $ S = Z(F) \subseteq \mathbb{F}_{q^2} $, for $ F(x) = x^{[1]} + \alpha x $, and distinguish two cases:
\begin{enumerate}
\item
$ {\rm wt_R}(\mathbf{F}) = 1 $, that is, $ \alpha \in \mathbb{F}_q $: We have that $ S = \{ \mathbf{0} \} $ if $ \alpha = 0 $, and  $ S = \langle \beta \rangle $, for some non-zero $ \beta \in \mathbb{F}_{q^2} $ if $ \alpha \neq 0 $. We may start constructing an independent sequence by $ I_1 = \langle \gamma \rangle $, for some  $ \gamma \in \mathbb{F}_{q^2} \setminus S $. We see that these (and $ I_0 = \{ 0 \} $) are all subspaces independent with respect to $ S $, and hence we may only construct an independent space of dimension $ 1 $.
\item
$ {\rm wt_R}(\mathbf{F}) = 2 $, that is, $ \alpha \in \mathbb{F}_{q^2} \setminus \mathbb{F}_q $: In this case, $ S = \langle \beta \rangle $, for some $ \beta \in \mathbb{F}_{q^2} $. Then $ \beta $ and $ \beta^q $ are linearly independent over $ \mathbb{F}_q $ since $ \beta^q + \alpha \beta = 0 $ and $ \alpha \notin \mathbb{F}_q $.

Define $ I_1 = \langle \beta^q \rangle $, then $ I_2 = I_1^q = \langle \beta \rangle $ and finally $ I_3 = I_2 \oplus \langle \beta^q \rangle = \langle \beta, \beta^q \rangle $. It holds that $ \dim(I_3) = 2 $, hence the previous bound is an equality: $ 2 = {\rm wt_R}(\mathbf{F}) \geq \dim(I_3) = 2 $.
\end{enumerate}
\end{example}

As a consequence of the previous theorem, we may give the following bound, analogous to the Hartmann-Tzeng bound as it appears in \cite[Theorem 2]{vanlint}:

\begin{corollary} [\textbf{Rank-HT bound}] \label{hartmann bound}
Take integers $ c > 0 $, $ \delta > 0 $ and $ s \geq 0 $, with $ \delta + s \leq \min \{ m,n \} $ and $ d = {\rm gcd}(c,n) < \delta $, and let $ \alpha \in \mathbb{F}_{q^{rn}} $ be such that $ A = \{ \alpha^{[(i + jc)r]} \mid 0 \leq i \leq \delta -2, 0 \leq j \leq s \} $ is a linearly independent (over $ \mathbb{F}_{q^r} $) set of vectors, not necessarily pairwise distinct. 

If $ F \in \mathcal{L}_{q^r} \mathbb{F}_{q^m}[x] / (x^{[rn]}-x) $ satisfies that $ A \subseteq T = Z(F) $, then $ {\rm wt_R}(F) \geq \delta + s $ (recall (\ref{definition weight polynomial})). In particular, if $ C = \rho_r^{-1}(T) $, with $ \rho_r $ as in Definition \ref{def root space}, then 
$$ d_R(C) \geq \delta + s. $$
\end{corollary}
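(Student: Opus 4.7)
The plan is to reduce the bound to Theorem~\ref{van lint} (the rank-shift bound). It suffices to show that for every nonzero $F \in C(x)$ there exists an $\mathbb{F}_{q^r}$-linear subspace $I \subseteq \mathbb{F}_{q^{rn}}$, independent with respect to $S := Z(F)$, of dimension $\delta + s$: by item~3 of Theorem~\ref{roots} we have $A \subseteq T \subseteq S$ for every such $F$, and then Theorem~\ref{van lint} gives ${\rm wt_R}(F) \ge \delta + s$, whence $d_R(C) \ge \delta + s$.

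Setting $\gamma_p := \alpha^{[pr]}$, the hypothesis becomes $\gamma_p \in S$ for every $p$ in the Hartmann--Tzeng grid $P := \{i + jc : 0 \le i \le \delta-2,\ 0 \le j \le s\}$, and these $(\delta-1)(s+1)$ elements are $\mathbb{F}_{q^r}$-linearly independent. The key algebraic observation is that Frobenius step~(2b) of Definition~\ref{def independent} acts on ``position labels'' by translation: $\langle \gamma_{p_1}, \ldots, \gamma_{p_k}\rangle^{[br]} = \langle \gamma_{p_1 + b}, \ldots, \gamma_{p_k + b}\rangle$. Hence the construction of $I$ reduces to a combinatorial scheduling of shifts and extensions on $\mathbb{Z}/n\mathbb{Z}$, as in the recovery of the classical Hartmann--Tzeng bound from the van Lint--Wilson shift bound (cf.\ \cite[Theorem~2]{vanlint}).

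I would build $I$ inductively, starting from $I_0 = \{\mathbf{0}\}$ and alternating Frobenius shifts (which translate the current configuration) with extension steps $I_{\text{new}} = I_{\text{old}} \oplus \langle \gamma_{p^{\ast}}\rangle$. At each extension, the Frobenius shifts are arranged so that $I_{\text{old}}$ is spanned by $\gamma_p$'s with $p \in P$ (hence $I_{\text{old}} \subseteq S$), while $p^{\ast}$ is chosen outside the already-visited translates of $P$, so that $\gamma_{p^{\ast}} \notin S$. After $\delta + s$ extension steps, the accumulated $\gamma_{p^{\ast}}$'s span the desired $I$; their linear independence is automatic from the direct-sum structure of the extension rule.

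The main obstacle is guaranteeing the existence of a fresh $p^{\ast}$ at every extension. This is precisely where the assumption $\gcd(c, n) = d < \delta$ enters: it prevents the translates by multiples of $c$ of an interval of length $\delta - 1$ from covering all of $\mathbb{Z}/n\mathbb{Z}$, so a legal $p^{\ast}$ always exists. The hypothesis $\delta + s \le \min\{m, n\}$ then ensures that the construction has enough room to run to completion within the available dimension.
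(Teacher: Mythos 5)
Your overall strategy is the paper's: reduce to Theorem~\ref{van lint} by building a sequence that is independent with respect to $S = Z(F)$, in which the Frobenius steps of Definition~\ref{def independent} translate the exponent labels and the extension steps adjoin one new element each, so that the final dimension equals the number of extensions. The decisive step, however, is not justified. You claim that choosing $p^{\ast}$ ``outside the already-visited translates of $P$'' guarantees $\alpha^{[p^{\ast}r]} \notin S$. That is a non sequitur: $S = Z(F)$ is an $\mathbb{F}_{q^r}$-subspace that is only known to \emph{contain} $A$; nothing prevents it from containing $\alpha^{[pr]}$ for many positions $p$ outside the grid $P$ and all of its translates. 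Since Definition~\ref{def independent} requires the adjoined element to lie outside $S$ (this is also exactly what makes the sum direct and the dimension count work), your construction cannot be certified to perform even one legal extension step. The paper closes this hole with a maximality argument: one may assume without loss of generality that $A$ cannot be enlarged inside $T$ while keeping the Hartmann--Tzeng structure, i.e.\ there exist $0 \le i \le \delta-2$ with $\alpha^{[(i+(s+1)c)r]} \notin T$ and $0 \le j \le s$ with $\alpha^{[(\delta-1+jc)r]} \notin T$, and after repeatedly raising to the power $q^r$ one may take $i = \delta-2$ and $j = 0$. These two boundary positions are precisely the elements adjoined in the two phases of extension steps; without them the induction has nothing legal to adjoin.

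A second, smaller discrepancy concerns the role of the hypotheses. You assign $\gcd(c,n) = d < \delta$ the job of guaranteeing a ``fresh'' $p^{\ast}$, but in the paper this condition (together with $\delta + s \le n$) is used to show that the positions $jc$, $j = 0, \ldots, s$, are pairwise distinct modulo $n$ and that the $\delta+s$ elements finally accumulated, $\{ \alpha^{[ir]} \mid 0 \leq i \leq \delta - 1 \} \cup \{ \alpha^{[jcr]} \mid 1 \leq j \leq s \}$, are pairwise distinct; the existence of positions outside $T$ comes from maximality, not from the gcd condition. Your observation that the linear independence of the accumulated elements is automatic from the direct-sum structure of the extension rule is correct, and is in fact slightly cleaner than the paper's closing appeal to the independence hypothesis on $A$, but it does not compensate for the missing maximality step, which is the heart of the proof.
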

\begin{proof}
First, since $ \delta + s \leq n $, we have that $ d s < \delta s \leq n $, and $ n/d $ is the order of $ c $ modulo $ n $. Hence, the elements $ jcr $, for $ j = 0,1,2, \ldots, s $, are all distinct modulo $ rn $. 

On the other hand, we may assume that $ A $ is maximal with the given structure. That is, there exists $ 0 \leq i \leq \delta-2 $ with $ \alpha^{[(i + (s+1)c)r]} \notin T $ and there exists $ 0 \leq j \leq s $ such that $ \alpha^{[(\delta - 1 + jc)r]} \notin T $. From the proof, we will see that we may assume for simplicity that $ j = 0 $, and by repeatedly raising to the power $ q^{r} $, we will also see that we may assume that $ i = \delta - 2 $.

We will now define a suitable sequence $ I_0,I_1, I_2, \ldots \subseteq \mathbb{F}_{q^{rn}} $ of $ \mathbb{F}_{q^r} $-linear spaces independent with respect to $ S = T $, and with $ \dim_{\mathbb{F}_{q^r}}(I_i) \geq \delta + s $ for some $ i \geq 0 $. We start by $ I_0 = \{ \mathbf{0} \} $, and $ I_{2i+1} = I_{2i} \oplus \langle \alpha^{[(\delta - 2 + (s+1)c)r]} \rangle $ and $ I_{2i+2} = I_{2i+1}^{[(n - c)r]} $, for $ i = 0,1,2, \ldots, s $. 

We see by induction that $ J_1 = I_{2s + 2} $ is generated by the set 
$$ \{ \alpha^{[(\delta - 2 + jc)r]} \mid 0 \leq j \leq s \}. $$
Next, define $ J_{2i + 1} = J_{2i} \oplus \langle \alpha^{[(\delta - 1)r]} \rangle $ and $ J_{2i} = J_{2i-1}^{[(n-1)r]} $, for $ i = 1,2, \ldots, \delta - 1 $. 

Finally, again by induction we see that $ J_{2\delta - 1} $ is generated by the set 
\begin{equation} \label{sets}
\{ \alpha^{[ir]} \mid 0 \leq i \leq \delta - 1 \} \cup \{ \alpha^{[jcr]} \mid 1 \leq j \leq s \},
\end{equation}
whose elements are all distinct by the first two paragraphs in the proof: First, these two sets are disjoint. If $ \alpha^{[jcr]} = \alpha^{[ir]} $, for some $ 1 \leq i \leq \delta - 1 $ and $ 1 \leq j \leq s $, then by considering $ jc, jc+1, \ldots, jc + \delta -2 $, we see that $ \alpha^{[(\delta - 1)r]} \in T $, a contradiction. Now, if two elements in the set on the left are equal, then we see again that $ \alpha^{[(\delta - 1)r]} \in T $. Finally, if two elements in the set on the right are equal, we may now see that $ \alpha^{[(\delta -2 + (s+1)c)r]} \in T $, which is again a contradiction.

Since there are $ \delta + s $ elements in the set (\ref{sets}) and they are linearly independent by hypothesis, the result follows from the previous theorem.
\end{proof}

By taking $ s=0 $ and $ c = 1 $, we see that the rank version of the BCH bound obtained in \cite[Proposition 1]{skewcyclic3} is a corollary of the previous bound:

\begin{corollary} [\textbf{Rank-BCH bound \cite[Proposition 1]{skewcyclic3}}] \label{rank BCH bound}
Take an integer $ \delta > 0 $, with $ \delta \leq \min \{ m,n \} $, and let $ \alpha \in \mathbb{F}_{q^{rn}} $ be such that $ \alpha, \alpha^{[r]}, \alpha^{[2r]}, \ldots, $ $ \alpha^{[(\delta - 2)r]} $ are linearly independent over  $ \mathbb{F}_{q^r} $. 

If $ F \in \mathcal{L}_{q^r} \mathbb{F}_{q^m}[x] / (x^{[rn]}-x) $ satisfies that $ T = Z(F) $ contains the previous elements, then $ {\rm wt_R}(F) \geq \delta $ (recall (\ref{definition weight polynomial})). In particular, if $ C = \rho_r^{-1}(T) $, with $ \rho_r $ as in Definition \ref{def root space}, then 
$$ d_R(C) \geq \delta. $$
\end{corollary}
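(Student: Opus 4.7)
The plan is to derive the rank-BCH bound as a direct specialization of the rank-Hartmann-Tzeng bound just proved in Corollary \ref{hartmann bound}. Setting $s = 0$ and $c = 1$, the set $A$ in that corollary becomes $\{ \alpha^{[(i+j \cdot 1)r]} \mid 0 \leq i \leq \delta - 2, \ 0 \leq j \leq 0 \} = \{ \alpha^{[ir]} \mid 0 \leq i \leq \delta - 2 \}$, which is precisely the set assumed to be linearly independent over $\mathbb{F}_{q^r}$ in the present hypothesis.

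Next, I would verify that the side conditions of Corollary \ref{hartmann bound} are satisfied: here $d = \gcd(c,n) = \gcd(1,n) = 1$, so the condition $d < \delta$ holds as soon as $\delta \geq 2$ (the case $\delta = 1$ is trivially handled, since any nonzero $F$ has $\mathrm{wt_R}(F) \geq 1$ and any nontrivial code has $d_R(C) \geq 1$); also $\delta + s = \delta \leq \min\{m,n\}$ is exactly the standing assumption. Invoking the rank-HT bound with these choices immediately gives $\mathrm{wt_R}(F) \geq \delta + s = \delta$, which is the first claim.

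For the distance statement, let $G(x)$ be the minimal generator of $C(x)$, so that $T = Z(G)$ by the definition of $\rho_r$. By Theorem \ref{generator}, item 1, every $F \in C(x)$ is right-divisible by $G(x)$, hence $T = Z(G) \subseteq Z(F)$. Given any nonzero codeword $\mathbf{c} \in C$, the polynomial $F = \gamma_r(\mathbf{c})$ lies in $C(x)$ and satisfies $T \subseteq Z(F)$; by (\ref{definition weight polynomial}), $\mathrm{wt_R}(\mathbf{c}) = \mathrm{wt_R}(F)$. Applying the first part to $F$ yields $\mathrm{wt_R}(\mathbf{c}) \geq \delta$, and taking the minimum over nonzero codewords gives $d_R(C) \geq \delta$.

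There is no real obstacle, as this is an unpacking of the rank-HT bound rather than an independent argument; the only point worth flagging is the trivial edge case $\delta = 1$, which must be separated out since the hypothesis $\gcd(c,n) < \delta$ of Corollary \ref{hartmann bound} would otherwise fail for $c = 1$.
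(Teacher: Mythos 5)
Your proposal is correct and follows exactly the route the paper takes: the paper introduces this corollary with the remark that it is obtained from the rank-HT bound (Corollary \ref{hartmann bound}) by taking $s=0$ and $c=1$, which is precisely your specialization. Your explicit verification of the side condition $\gcd(c,n)<\delta$ (and the trivial $\delta=1$ case) and your unpacking of the distance statement via the minimal generator are just careful spellings-out of what the paper leaves implicit.
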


Thanks to the lattice study of the previous two sections and, in particular, thanks to Proposition \ref{cyclotomic normal basis}, we can see that it is not difficult to find examples where the rank-HT bound beats the rank-BCH bound, as in the classical case:

\begin{example}
Consider $ r=1 $, $ n=2m $ and $ m = 31 $, and take a normal basis $ \alpha, \alpha^{[1]}, \ldots, $ $ \alpha^{[61]} $ of $ \mathbb{F}_{q^{62}} $ over $ \mathbb{F}_{q} $. Take $ c = 5 $, $ \delta = 4 $ and $ s = 3 $, and the $ q $-root space
$$ T = (C_q(\alpha) \oplus C_q(\alpha^{[1]}) \oplus C_q(\alpha^{[2]})) \oplus (C_q(\alpha^{[5]}) \oplus C_q(\alpha^{[6]}) \oplus C_q(\alpha^{[7]})) $$
$$ \oplus (C_q(\alpha^{[10]}) \oplus C_q(\alpha^{[11]}) \oplus C_q(\alpha^{[12]})) \oplus (C_q(\alpha^{[15]}) \oplus C_q(\alpha^{[16]}) \oplus C_q(\alpha^{[17]})). $$
By Proposition \ref{cyclotomic normal basis}, we have that $ C_q(\alpha^{[i]}) $ has $ \{ \alpha^{[i]}, \alpha^{[31 + i]} \} $ as a basis, and hence has dimension $ 2 $. Therefore, the code $ C = \rho_r^{-1}(T) $ has dimension $ 62 - 24 = 38 $. The rank-BCH bound states that $ d_R(C) \geq 4 $, whereas the rank-HT bound improves it giving $ d_R(C) \geq 7 $. 
\end{example}

\subsection{Rank-BCH codes from normal bases are generalized Gabidulin codes}

As a consequence of the bound in Corollary \ref{rank BCH bound}, a family of $ q^r $-cyclic codes with a designed minimum rank distance is defined in \cite[Section 3]{skewcyclic3}, in analogy with classical BCH codes. By means of difference equations and Casoratian determinants, rank-BCH codes are defined in \cite{skewcyclic3} as $ q^r $-cyclic codes with prescribed minimum rank distance and generator polynomial of minimal degree. 

We will give an alternative description in terms of $ q^r $-cyclotomic spaces, which will allow us to prove that, when $ m=n $ and $ r $ and $ n $ are coprime, rank-BCH codes from normal bases are generalized Gabidulin codes, as in Subsection \ref{subsec rank codes}, which are MRD. 

\begin{definition} \label{def BCH}
Given $ 1 \leq \delta \leq m $, we say that the $ q^r $-cyclic code $ C(x) $ over $ \mathbb{F}_{q^m} $ is a rank-BCH code of designed minimum rank distance $ \delta $ if the corresponding $ q^r $-root space $ T $ over $ \mathbb{F}_{q^m} $ (see Definition \ref{def root space}) is
$$ T = C_{q^r}(\alpha) + C_{q^r}(\alpha^{[r]}) + C_{q^r}(\alpha^{[2r]}) + \cdots + C_{q^r}(\alpha^{[(\delta-2)r]}), $$
where $ \alpha \in \mathbb{F}_{q^{rn}} $ and $ \alpha, \alpha^{[r]}, \alpha^{[2r]}, \ldots , \alpha^{[(\delta-2)r]} $ are linearly independent over $ \mathbb{F}_{q^r} $.
\end{definition}

The following result follows immediately from Corollary \ref{rank BCH bound}:

\begin{proposition}
The rank-BCH code $ C(x) $ in the previous definition satisfies that
$$ d_R(C) \geq \delta. $$
\end{proposition}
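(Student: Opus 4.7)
The statement follows immediately from Corollary \ref{rank BCH bound}, so my plan is essentially a short verification that the definition of a rank-BCH code places us exactly in the hypotheses of that bound. First I would check that each of the $\delta - 1$ elements $\alpha, \alpha^{[r]}, \ldots, \alpha^{[(\delta-2)r]}$ lies in the root space $T$ associated to $C$ via Definition \ref{def BCH}. For any index $i \in \{ 0, 1, \ldots, \delta - 2 \}$, the element $\alpha^{[ir]}$ is trivially a root of its own minimal $q^r$-polynomial over $\mathbb{F}_{q^m}$, so by the definition of the $q^r$-cyclotomic space one has $\alpha^{[ir]} \in C_{q^r}(\alpha^{[ir]})$; since this cyclotomic space appears as a summand in the decomposition of $T$, we obtain $\alpha^{[ir]} \in T$.

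Second, the hypothesis in Definition \ref{def BCH} provides the linear independence of these $\delta - 1$ elements over $\mathbb{F}_{q^r}$, and the remaining size constraint $\delta \leq n$ required by Corollary \ref{rank BCH bound} is implicitly forced, since $\mathbb{F}_{q^{rn}}$ has dimension $n$ over $\mathbb{F}_{q^r}$ and therefore cannot contain more than $n$ linearly independent vectors. Applying Corollary \ref{rank BCH bound} to this $T$ and this $\alpha$, and using the identification $C = \rho_r^{-1}(T)$ supplied by the lattice bijection (Corollary \ref{map bijective}), yields $d_R(C) \geq \delta$ directly. I do not anticipate any substantive obstacle: the rank-BCH code is defined precisely so as to fit into the hypotheses of the rank-BCH bound, and the argument is a one-line unraveling of those definitions.
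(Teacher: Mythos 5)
Your proposal is correct and matches the paper exactly: the paper gives no argument beyond stating that the proposition follows immediately from Corollary \ref{rank BCH bound}, and your verification that $\alpha^{[ir]} \in C_{q^r}(\alpha^{[ir]}) \subseteq T$ together with the assumed linear independence is precisely the intended unraveling of Definition \ref{def BCH}. The only microscopic quibble is that linear independence of $\delta-1$ elements only forces $\delta \leq n+1$ rather than $\delta \leq n$, but the boundary case is vacuous since the code would then be trivial.
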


If $ m=n $ and $ r $ and $ n $ are coprime, the Gabidulin codes $ {\rm Gab}_{k,r}(\boldsymbol\beta) $ defined using a normal basis (see Subsection \ref{subsec rank codes}) are rank-BCH codes also using normal bases, and viceversa, and all of them are MRD codes. Hence the family of rank-BCH codes include MRD codes. We will use \cite[Lemma 2]{new-construction}, which is the following:

\begin{lemma} [\textbf{\cite[Lemma 2]{new-construction}}]
If $ r $ and $ n $ are coprime and $ \alpha_0, \alpha_1, \ldots, \alpha_{n-1} \in \mathbb{F}_{q^n} $ are linearly independent over $ \mathbb{F}_q $, then they are also linearly independent over $ \mathbb{F}_{q^r} $, considered as elements in $ \mathbb{F}_{q^{rn}} $.
\end{lemma}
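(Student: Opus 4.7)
The plan is a dimension count that exploits the linear disjointness of $\mathbb{F}_{q^n}$ and $\mathbb{F}_{q^r}$ over $\mathbb{F}_q$ inside $\mathbb{F}_{q^{rn}}$ when $\gcd(r,n)=1$. Concretely, I would show that an $\mathbb{F}_q$-basis of $\mathbb{F}_{q^n}$ also generates the compositum $\mathbb{F}_{q^n}\cdot\mathbb{F}_{q^r}$ as an $\mathbb{F}_{q^r}$-vector space, and that under coprimality this compositum is all of $\mathbb{F}_{q^{rn}}$, so the $n$ generators are forced to be $\mathbb{F}_{q^r}$-linearly independent for dimension reasons.

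First I would identify the compositum. Since $\mathbb{F}_{q^n}\cap\mathbb{F}_{q^r}$ is a subfield of $\mathbb{F}_{q^{rn}}$ whose $\mathbb{F}_q$-degree divides both $n$ and $r$, coprimality forces this intersection to equal $\mathbb{F}_q$. The classical compositum formula for subfields of a finite (cyclic) extension then yields $[\mathbb{F}_{q^n}\cdot\mathbb{F}_{q^r}:\mathbb{F}_q]=\operatorname{lcm}(n,r)=nr$, so the compositum coincides with $\mathbb{F}_{q^{rn}}$.

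Next I would let $V\subseteq\mathbb{F}_{q^{rn}}$ denote the $\mathbb{F}_{q^r}$-linear span of $\alpha_0,\ldots,\alpha_{n-1}$. Since these $n$ elements are $\mathbb{F}_q$-linearly independent inside the $n$-dimensional space $\mathbb{F}_{q^n}$, they form an $\mathbb{F}_q$-basis of $\mathbb{F}_{q^n}$; because $\mathbb{F}_q\subseteq\mathbb{F}_{q^r}$, every element of $\mathbb{F}_{q^n}$ is then also an $\mathbb{F}_{q^r}$-combination of the $\alpha_i$, which gives $\mathbb{F}_{q^n}\subseteq V$. Using that $V$ is stable under multiplication by $\mathbb{F}_{q^r}$, this upgrades to $V\supseteq\mathbb{F}_{q^n}\cdot\mathbb{F}_{q^r}=\mathbb{F}_{q^{rn}}$, hence $V=\mathbb{F}_{q^{rn}}$. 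A final dimension count closes the argument: $\dim_{\mathbb{F}_{q^r}}V=[\mathbb{F}_{q^{rn}}:\mathbb{F}_{q^r}]=n$ matches exactly the number of generators, so they can carry no non-trivial $\mathbb{F}_{q^r}$-dependence.

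The only step with any genuine content is the compositum formula $[\mathbb{F}_{q^n}\cdot\mathbb{F}_{q^r}:\mathbb{F}_q]=\operatorname{lcm}(n,r)$, which is entirely standard for subfields of a cyclic finite extension whose degrees are coprime. I do not expect any serious obstacle beyond that; everything else is pure bookkeeping.
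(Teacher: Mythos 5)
Your argument is correct. Note, however, that the paper does not prove this lemma at all: it is imported verbatim from \cite[Lemma 2]{new-construction}, so there is no in-paper proof to compare against. Your compositum/dimension-count argument is a valid self-contained proof: $\mathbb{F}_{q^n}\cdot\mathbb{F}_{q^r}=\mathbb{F}_{q^{\operatorname{lcm}(n,r)}}=\mathbb{F}_{q^{rn}}$, the $\mathbb{F}_{q^r}$-span $V$ of the $\alpha_i$ contains $\mathbb{F}_{q^n}$ and hence (being closed under multiplication by $\mathbb{F}_{q^r}$ and using that products of elements of $\mathbb{F}_{q^n}$ remain in $\mathbb{F}_{q^n}$, so the span of all products $ab$ with $a\in\mathbb{F}_{q^n}$, $b\in\mathbb{F}_{q^r}$ is already a field) equals $\mathbb{F}_{q^{rn}}$, which has $\mathbb{F}_{q^r}$-dimension $n$; thus $n$ spanning vectors must be independent. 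The only step worth spelling out a little more is the parenthetical one just given, namely why the $\mathbb{F}_{q^r}$-span of $\mathbb{F}_{q^n}$ is the whole compositum field rather than merely a subspace of it; once that is said, everything closes up.
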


\begin{theorem}
Assume $ m=n $ and $ r $ and $ n $ are coprime. Take a normal basis $ \alpha, \alpha^{[1]}, \ldots, $ $ \alpha^{[n-1]} \in \mathbb{F}_{q^n} = \mathbb{F}_{q^m} $ and $ 1 \leq \delta \leq n $. Then the corresponding rank-BCH code $ C(x) $, as in Definition \ref{def BCH}, is the generalized Gabidulin code $ {\rm Gab}_{k,r}(\boldsymbol\alpha) $ (see Subsection \ref{subsec rank codes}), where $ \boldsymbol\alpha = (\alpha, \alpha^{[r]}, \ldots, \alpha^{[(n-1)r]}) $ and $ k = n - \delta + 1 $. 
\end{theorem}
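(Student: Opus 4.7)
The plan is to compute the $q^r$-root space $T$ of the rank-BCH code $C(x)$ explicitly, and then observe that the parity check matrix of $C$ furnished by Theorem \ref{roots}, item 4, is literally the matrix $\mathcal{H}_{k,r}(\boldsymbol\alpha)$ defining the Gabidulin code.

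First, I would identify each $q^r$-cyclotomic space $C_{q^r}(\alpha^{[ir]})$ for $0 \le i \le \delta-2$. Since $\alpha \in \mathbb{F}_{q^n} = \mathbb{F}_{q^m}$ and $\alpha \ne 0$, the $q^r$-polynomial
$$F_i(x) = x^{[r]} - (\alpha^{[(i+1)r]} / \alpha^{[ir]}) x$$
lies in $\mathcal{L}_{q^r}\mathbb{F}_{q^m}[x]$, is monic of $q^r$-degree $1$, and vanishes at $\alpha^{[ir]}$. No monic $q^r$-polynomial of $q^r$-degree $0$ (i.e.\ $x$ itself) can annihilate the nonzero element $\alpha^{[ir]}$, so $F_i(x)$ is the minimal $q^r$-polynomial of $\alpha^{[ir]}$ over $\mathbb{F}_{q^m}$. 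By Lemma \ref{root structure}, $\dim_{\mathbb{F}_{q^r}} C_{q^r}(\alpha^{[ir]}) = 1$, and since the line $\mathbb{F}_{q^r} \cdot \alpha^{[ir]}$ is clearly contained in this root space, the two coincide.

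Summing over $i$ yields $T = \mathbb{F}_{q^r} \alpha + \mathbb{F}_{q^r} \alpha^{[r]} + \cdots + \mathbb{F}_{q^r} \alpha^{[(\delta-2)r]}$. Because $\gcd(r,n) = 1$, the elements $\alpha^{[ir]}$ for $0 \le i \le n-1$ are a reordering of the normal basis $\alpha, \alpha^{[1]}, \ldots, \alpha^{[n-1]}$ of $\mathbb{F}_{q^n}$ over $\mathbb{F}_q$, so the first $\delta - 1 \le n$ of them are $\mathbb{F}_q$-linearly independent. The lemma from \cite{new-construction} quoted above then upgrades this to $\mathbb{F}_{q^r}$-linear independence inside $\mathbb{F}_{q^{rn}}$, giving $\dim_{\mathbb{F}_{q^r}} T = \delta - 1$. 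By Theorem \ref{roots}, item 2, this forces $\dim_{\mathbb{F}_{q^m}}(C) = n - \delta + 1 = k$, as required.

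Finally, I would invoke Theorem \ref{roots}, item 4, with the basis $\alpha, \alpha^{[r]}, \ldots, \alpha^{[(\delta-2)r]}$ of $T$. The resulting parity check matrix $\mathcal{M}(\boldsymbol\beta)$ of $C$ over $\mathbb{F}_{q^{rn}}$ has $(i,j)$-entry $(\alpha^{[ir]})^{[jr]} = \alpha^{[(i+j)r]}$, and, since $n - k - 1 = \delta - 2$, this is exactly the matrix $\mathcal{H}_{k,r}(\boldsymbol\alpha)$ for $\boldsymbol\alpha = (\alpha, \alpha^{[r]}, \ldots, \alpha^{[(n-1)r]})$. All entries lie in $\mathbb{F}_{q^m}$, so for any $\mathbf{c} \in \mathbb{F}_{q^m}^n$ the equation $\mathcal{M}(\boldsymbol\beta)\mathbf{c}^T = \mathbf{0}$ has the same meaning over $\mathbb{F}_{q^m}$ and over $\mathbb{F}_{q^{rn}}$; equivalently, by Corollary \ref{char root spaces}, $C = \widetilde{C} \cap \mathbb{F}_{q^m}^n$ equals the $\mathbb{F}_{q^m}$-linear code with parity check matrix $\mathcal{H}_{k,r}(\boldsymbol\alpha)$, which is ${\rm Gab}_{k,r}(\boldsymbol\alpha)$ by definition. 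The only mildly delicate point in the argument is confirming that each $C_{q^r}(\alpha^{[ir]})$ is truly one-dimensional (i.e.\ that $F_i(x)$ is minimal); once this is in hand, the rest is a direct matching of two explicit matrices.
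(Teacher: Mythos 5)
Your proof is correct and follows essentially the same route as the paper: identify each $q^r$-cyclotomic space $C_{q^r}(\alpha^{[ir]})$ as the line $\langle\alpha^{[ir]}\rangle_{\mathbb{F}_{q^r}}$ (so $T = \langle\alpha,\alpha^{[r]},\ldots,\alpha^{[(\delta-2)r]}\rangle_{\mathbb{F}_{q^r}}$ has dimension $\delta-1$ by the Kshevetskiy--Gabidulin lemma), then match the parity check matrix from Theorem \ref{roots}, item 4, with $\mathcal{H}_{k,r}(\boldsymbol\alpha)$. You merely supply two details the paper leaves implicit --- the explicit minimal $q^r$-polynomial $x^{[r]}-(\alpha^{[(i+1)r]}/\alpha^{[ir]})x$ showing each cyclotomic space is one-dimensional, and the observation that the kernel condition is the same over $\mathbb{F}_{q^m}$ and $\mathbb{F}_{q^{rn}}$ --- both of which are welcome.
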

\begin{proof}
Since $ m=n $, we have that $ \alpha \in \mathbb{F}_{q^m} $, and hence $ C_{q^r}(\alpha^{[i]}) = \langle \alpha^{[i]} \rangle_{\mathbb{F}_{q^r}} $, for all $ i=0,1,2, \ldots, n-1 $. Therefore, the $ q^r $-root space $ T $ corresponding to $ C(x) $ is $ T = \langle \alpha, \alpha^{[r]}, \ldots, \alpha^{[(\delta - 2)r]} \rangle_{\mathbb{F}_{q^r}} $, whose dimension over $ \mathbb{F}_{q^r} $ is $ \delta - 1 $ by the previous lemma.

Hence, by item 4 in Theorem \ref{roots}, the matrix $ \mathcal{M}(\alpha, \alpha^{[r]}, \ldots, \alpha^{[(\delta -2)r]}) $ is a parity check matrix of $ C $ over $ \mathbb{F}_{q^m} $. However, this is also the parity check matrix of the above mentioned Gabidulin code of dimension $ k $, $ \mathcal{H}_{k,r}(\boldsymbol\alpha) $, if $ k = n - \delta + 1 $ (see Subsection \ref{subsec rank codes}). Therefore both are equal and the theorem follows.
\end{proof}

\section{General $ \mathbb{F}_q $-linear skew cyclic codes: Connecting Hamming-metric cyclic codes and rank-metric skew cyclic codes} \label{general cyclic}

To conclude, we will give some first steps in the general study of $ \mathbb{F}_q $-linear $ q^r $-cyclic codes in $ \mathbb{F}_{q^m}^n $. 

Its main interest for our purposes is that they include both the family of skew cyclic codes in the rank metric, which are the main topic of this paper, and the classical family of cyclic codes in the Hamming metric, as we will prove in the first subsection. 

Moreover, as we will see in the second subsection, some $ \mathbb{F}_{q^m} $-linear $ q^r $-cyclic codes in the rank metric with $ m=n $ actually are obtained from cyclic codes in the Hamming metric via $ \mathbb{F}_q $-linear $ q^r $-cyclic codes, which will allow us to compare their parameters and give a negative criterion of MRD skew cyclic codes in terms of MDS cyclic codes.

\subsection{Hamming-metric cyclic codes are rank-metric skew cyclic codes}

Assume in this subsection that $ m=n $, fix a basis $ \alpha_0, \alpha_1, \ldots, \alpha_{n-1} $ of $ \mathbb{F}_{q^n} $ over $ \mathbb{F}_q $ and define the map $ E : \mathbb{F}_q^n \longrightarrow \mathbb{F}_{q^n}^n $ by 
\begin{equation} \label{map E}
E(c_0,c_1, \ldots, c_{n-1}) = (c_0 \alpha_0, c_1 \alpha_1, \ldots, c_{n-1} \alpha_{n-1}).
\end{equation}
This map is one to one, $ \mathbb{F}_q $-linear and $ {\rm wt_H}(\mathbf{c}) = {\rm wt_R}(E(\mathbf{c})) $, where $ {\rm wt_H}(\mathbf{c}) $ denotes the Hamming weight of the vector $ \mathbf{c} $. Therefore, the codes $ C \subseteq \mathbb{F}_q^n $ and $ E(C) \subseteq \mathbb{F}_{q^n}^n $ behave equally, where we consider the Hamming metric for $ C $ and the rank metric for $ E(C) $.

Assume also in this subsection that $ n $ and $ r $ are coprime and $ \alpha_0, \alpha_1, \ldots, \alpha_{n-1} $ satisfies that $ \alpha_i = \alpha^{[ir]} $, for $ i = 0,1,2, \ldots, n-1 $, where $ \alpha, \alpha^{[1]}, \ldots, \alpha^{[n-1]} $ is a normal basis. In this case, classical cyclic codes correspond to $ q^r $-cyclic codes.

\begin{theorem} \label{classical skew}
With the assumptions as in the previous paragraph, an arbitrary (linear or non-linear) code $ C \subseteq \mathbb{F}_q^n $ is cyclic if, and only if, the code $ E(C) \subseteq \mathbb{F}_{q^n}^n $ is $ q^r $-cyclic. 

Moreover, $ C $ is $ \mathbb{F}_q $-linear if, and only if, so is $ E(C) $, and the Hamming-metric behaviour of $ C $ is the same as the rank-metric behaviour of $ E(C) $, since $ {\rm wt_H}(\mathbf{c}) = {\rm wt_R}(E(\mathbf{c})) $, for all $ \mathbf{c} \in \mathbb{F}_q^n $.
\end{theorem}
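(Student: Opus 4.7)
The plan is to reduce everything to a single intertwining identity: $\sigma_{r,n} \circ E = E \circ \sigma$, where $\sigma$ denotes the ordinary cyclic shift on $\mathbb{F}_q^n$. Once this is in hand, the condition $\sigma(C) \subseteq C$ (cyclicity of $C$) transforms, via the injective map $E$, directly into $\sigma_{r,n}(E(C)) \subseteq E(C)$ (i.e.\ the $q^r$-cyclicity of $E(C)$), and conversely. No linearity is required for this step, which handles the cyclicity equivalence for arbitrary (linear or non-linear) $C$.

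I would verify the intertwining by direct computation. For $\mathbf{c} = (c_0, \ldots, c_{n-1}) \in \mathbb{F}_q^n$, we have $E(\mathbf{c}) = (c_0 \alpha^{[0]}, c_1 \alpha^{[r]}, \ldots, c_{n-1} \alpha^{[(n-1)r]})$. Applying $\sigma_{r,n}$ produces, for $1 \le i \le n-1$, the entry $(c_{i-1} \alpha^{[(i-1)r]})^{[r]} = c_{i-1} \alpha^{[ir]}$, using that $c_{i-1} \in \mathbb{F}_q$ is fixed by $[r]$. The wraparound entry is $(c_{n-1} \alpha^{[(n-1)r]})^{[r]} = c_{n-1} \alpha^{[nr]}$, and here the key observation is that $\alpha \in \mathbb{F}_{q^n}$ (since $m = n$) forces $\alpha^{[nr]} = \alpha = \alpha^{[0]}$. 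The resulting vector is exactly $E(\sigma(\mathbf{c}))$.

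For the remaining claims, $E$ is $\mathbb{F}_q$-linear and injective by inspection of (\ref{map E}), so $\mathbb{F}_q$-linearity of $C$ transfers to $E(C)$ and conversely. The weight identity $\mathrm{wt_H}(\mathbf{c}) = \mathrm{wt_R}(E(\mathbf{c}))$ follows because the nonzero entries of $E(\mathbf{c})$ are of the form $c_i \alpha_i$ with $c_i \in \mathbb{F}_q^*$; these are $\mathbb{F}_q$-linearly independent because $\alpha_0, \ldots, \alpha_{n-1}$ is a basis of $\mathbb{F}_{q^n}$ over $\mathbb{F}_q$, and the dimension of their $\mathbb{F}_q$-span equals the number of indices $i$ with $c_i \ne 0$. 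This is the one place the hypothesis $\gcd(n,r) = 1$ enters: it guarantees that $i \mapsto ir \bmod n$ permutes $\{0, \ldots, n-1\}$, so that $\alpha_0, \ldots, \alpha_{n-1}$ is merely a reordering of the given normal basis and hence itself a basis.

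The only subtlety is handling the wraparound in the $q^r$-shift, which is immediately resolved by the assumption $m = n$ via $\alpha^{[nr]} = \alpha$; I anticipate no serious obstacle.
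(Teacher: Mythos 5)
Your proof is correct and follows essentially the same route as the paper: both rest on the direct computation of the intertwining identity $E(\sigma(\mathbf{c})) = \sigma_{r,n}(E(\mathbf{c}))$, using that entries of $\mathbf{c}$ lie in $\mathbb{F}_q$ (hence are fixed by $[r]$) and that $\alpha^{[nr]} = \alpha$ since $\alpha \in \mathbb{F}_{q^n}$. Your additional verification of $\mathrm{wt_H}(\mathbf{c}) = \mathrm{wt_R}(E(\mathbf{c}))$ and of where $\gcd(n,r)=1$ enters is a welcome bit of extra care (the paper asserts these facts in the surrounding discussion rather than in the proof itself), but it does not change the argument.
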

\begin{proof}
Let $ \mathbf{c} = (c_0, c_1, \ldots, c_{n-1}) \in C $ and $ E(\mathbf{c}) = (d_0, d_1, \ldots, d_{n-1}) \in E(C) $. Then 
$$ E(c_{n-1}, c_0, c_1, \ldots, c_{n-2}) = (c_{n-1} \alpha, c_0 \alpha^{[r]}, \ldots, c_{n-2} \alpha^{[(n-1)r]}) $$
$$ =((c_{n-1} \alpha^{[(n-1)r]})^{q^r}, (c_0 \alpha)^{q^r}, \ldots, (c_{n-2} \alpha^{[(n-2)r]})^{q^r}) = (d_{n-1}^{q^r}, d_0^{q^r}, \ldots, d_{n-2}^{q^r}), $$
and the result follows, since the linearity claim is trivial from the linearity of $ E $.
\end{proof}

\subsection{MRD skew cyclic codes and MDS cyclic codes}

We will now relate MRD $ \mathbb{F}_{q^n} $-linear $ q^r $-cyclic codes in $ \mathbb{F}_{q^n}^n $ with classical MDS $ \mathbb{F}_q $-linear cyclic codes in $ \mathbb{F}_q^n $. We first need some properties of $ \mathbb{F}_q $-linear $ q^r $-cyclic codes. The following lemma is proven in the same way as Lemma \ref{char qcyclic}:

\begin{lemma} \label{char general}
A code $ C \subseteq \mathbb{F}_{q^m}^n $ is $ \mathbb{F}_q $-linear and $ q^r $-cyclic if, and only if, $ C(x) $ satisfies that $ G - H \in C(x) $ and $ F \otimes G \in C(x) $, for all $ F(x) \in \mathcal{L}_{q^r}\mathbb{F}_{q}[x] $ and all $ G,H \in C(x) $.
\end{lemma}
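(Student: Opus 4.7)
The plan is to mirror the proof of Lemma \ref{char qcyclic} once we make two identifications explicit. Writing a class $G = G_0 x + G_1 x^{[r]} + \cdots + G_{n-1} x^{[(n-1)r]}$ in $\mathcal{L}_{q^r}\mathbb{F}_{q^m}[x]/(x^{[rn]}-x)$ corresponds via $\gamma_r$ to the vector $(G_0, \ldots, G_{n-1})$. First I would verify the key correspondence: computing $x^{[r]} \otimes G = G(x)^{[r]} = \sum_{j=0}^{n-1} G_j^{[r]} x^{[(j+1)r]}$ and reducing modulo $x^{[rn]} - x$ (using $x^{[nr]} \equiv x$) gives $(G_{n-1}^{[r]}, G_0^{[r]}, \ldots, G_{n-2}^{[r]}) = \sigma_{r,n}(G_0,\ldots,G_{n-1})$. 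Hence left multiplication by $x^{[r]}$ in the quotient ring realises the shift operator $\sigma_{r,n}$, and iterating, $x^{[ir]} \otimes G$ realises $\sigma_{r,n}^i$. Similarly, for $\lambda \in \mathbb{F}_{q^m}$, the element $\lambda x \otimes G = \lambda G$ realises scalar multiplication of the vector by $\lambda$.

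For the forward direction, suppose $C$ is $\mathbb{F}_q$-linear and $q^r$-cyclic. Closure under subtraction of vectors immediately yields $G - H \in C(x)$ for all $G, H \in C(x)$. Now take $F(x) = \sum_{i=0}^d F_i x^{[ir]} \in \mathcal{L}_{q^r}\mathbb{F}_q[x]$ with each $F_i \in \mathbb{F}_q$ and $G \in C(x)$. By distributivity of the symbolic product,
\begin{equation*}
F \otimes G = \sum_{i=0}^d F_i \cdot (x^{[ir]} \otimes G).
\end{equation*}
Each $x^{[ir]} \otimes G$ lies in $C(x)$ by repeated application of $q^r$-cyclicity (using the first identification). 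Each scaling by $F_i \in \mathbb{F}_q$ preserves membership in $C$ because $C$ is $\mathbb{F}_q$-linear, and the final sum stays in $C(x)$ by additive closure. Hence $F \otimes G \in C(x)$.

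For the converse, assume $C(x)$ is closed under subtraction and under left multiplication by elements of $\mathcal{L}_{q^r}\mathbb{F}_q[x]$. Specialising $F = \lambda x$ with $\lambda \in \mathbb{F}_q$ gives $\lambda G \in C(x)$, i.e.\ closure of $C$ under $\mathbb{F}_q$-scalar multiplication; combined with additive closure from the subtraction property this gives $\mathbb{F}_q$-linearity. Specialising $F = x^{[r]}$ gives $x^{[r]} \otimes G \in C(x)$, which by the correspondence above is exactly $\sigma_{r,n}(\mathbf{c}) \in C$, so $C$ is $q^r$-cyclic.

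There is no real obstacle here; the only point requiring a moment of care is the reduction of $x^{[nr]}$ modulo $x^{[rn]} - x$ when checking that left multiplication by $x^{[r]}$ is $\sigma_{r,n}$, and the observation that coefficients $F_i \in \mathbb{F}_q$ are fixed by all powers of the Frobenius $[{\cdot}]$ so that the symbolic product with them behaves like ordinary scalar multiplication.
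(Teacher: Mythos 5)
Your proof is correct and takes exactly the route the paper intends: the paper gives no written proof, only the remark that the lemma ``is proven in the same way as Lemma \ref{char qcyclic}'', and your argument is precisely that standard one, resting on the identification of left multiplication by $x^{[r]}$ with the shift $\sigma_{r,n}$ and of multiplication by $\lambda x$, $\lambda\in\mathbb{F}_q$, with $\mathbb{F}_q$-scalar multiplication. The decomposition $F\otimes G=\sum_i F_i\cdot(x^{[ir]}\otimes G)$ and the two specialisations $F=\lambda x$ and $F=x^{[r]}$ in the converse are exactly what is needed; no gap.
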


\begin{definition}
A subset $ C(x) \subseteq \mathcal{L}_{q^r} \mathbb{F}_{q^m}[x] / (x^{[rn]} - x) $ satisfying the conditions in the previous lemma is called an $ \mathbb{F}_q $-left ideal.
\end{definition}

By Theorem \ref{classical skew} and Lemma \ref{char general}, classical cyclic codes for the Hamming metric can be seen as $ \mathbb{F}_q $-left ideals in $ \mathcal{L}_{q^r} \mathbb{F}_{q^n}[x] / (x^{[rn]} - x) $ for the rank metric, provided that $ n $ and $ r $ are coprime.

We observe that $ \mathbb{F}_q $-left ideals are finitely generated. That is, every $ \mathbb{F}_q $-left ideal is of the form $ C(x) = (G_1,G_2, \ldots, G_t)_{\mathbb{F}_q} $, where we define
$$ (G_1,G_2, \ldots, G_t)_{\mathbb{F}_q} = \left\lbrace \sum_{i=1}^t Q_i \otimes G_i \mid Q_i(x) \in \mathcal{L}_{q^r} \mathbb{F}_q[x] \right\rbrace. $$
However, not all $ \mathbb{F}_q $-left ideals are principal, that is, of the form $ (G)_{\mathbb{F}_q} $, for some $ G(x) \in \mathcal{L}_{q^r}\mathbb{F}_{q^m}[x] $. The following proposition relates the dimension of an $ \mathbb{F}_q $-left ideal and its number of generators. We also describe generators of the vector space $ C $ over $ \mathbb{F}_q $ as in Theorem \ref{generator}:

\begin{proposition} \label{generators general}
Let $ C(x) $ be an $ \mathbb{F}_q $-left ideal with $ C(x) = (G_1,G_2, \ldots, G_t)_{\mathbb{F}_q} $. It holds that:
\begin{enumerate}
\item
$ C(x) $ is generated by $ x^{[j]} \otimes G_i $ as an $ \mathbb{F}_q $-linear vector space, for $ j = 0,1, \ldots, n-1 $ and $ i = 1,2, \ldots, t $. In particular, a basis of $ C $ over $ \mathbb{F}_q $ may be obtained from the set of vectors 
$$ (G_{i,n-j}^{[jr]}, G_{i,n-j+1}^{[jr]}, \ldots, G_{i,n-j-1}^{[jr]}), $$
for the previous $ i $ and $ j $, where $ G_i(x) = G_{i,0} x + G_{i,1} x^{[r]} + \cdots + G_{i,n-1} x^{[n-1]} $.
\item
The dimension of $ C $ (over $ \mathbb{F}_q $) satisfies $ \dim(C(x)) \leq tn $.
\item
There exist $ F_1, F_2, \ldots, F_{mn} \in C(x) $ such that $ C(x) = (F_1, F_2, \ldots, F_{mn})_{\mathbb{F}_q} $. 
\end{enumerate}
\end{proposition}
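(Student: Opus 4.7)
The plan is to peel off the two layers built into the definition of $(G_1,\ldots,G_t)_{\mathbb{F}_q}$: first the ``left scalars'' coming from $\mathcal{L}_{q^r}\mathbb{F}_q[x]$, then the $\mathbb{F}_q$-coefficients of those scalars. For item 1, an arbitrary $F\in C(x)$ can be written as $\sum_{i=1}^{t} Q_i\otimes G_i$ with $Q_i(x)=\sum_{j\ge 0}q_{i,j}\,x^{[jr]}$, $q_{i,j}\in\mathbb{F}_q$. Distributing yields an $\mathbb{F}_q$-linear combination of the products $x^{[jr]}\otimes G_i$, and the quotient relation $x^{[(j+n)r]}\otimes G_i\equiv x^{[jr]}\otimes G_i$ modulo $x^{[rn]}-x$ lets me restrict to $j=0,1,\ldots,n-1$. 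To obtain the claimed vector form I would compute $x^{[jr]}\otimes G_i(x)=\sum_{k=0}^{n-1}G_{i,k}^{[jr]}x^{[(k+j)r]}$, reduce $k+j$ modulo $n$, and apply $\gamma_r^{-1}$; reindexing via $\ell\equiv k+j\pmod n$ produces exactly $(G_{i,n-j}^{[jr]},G_{i,n-j+1}^{[jr]},\ldots,G_{i,n-j-1}^{[jr]})$. Item 2 is then immediate, since the spanning set just constructed has cardinality $tn$.

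For item 3, I would use that $\mathcal{L}_{q^r}\mathbb{F}_{q^m}[x]/(x^{[rn]}-x)\cong\mathbb{F}_{q^m}^n$ as $\mathbb{F}_{q^m}$-vector spaces via $\gamma_r$, hence has $\mathbb{F}_q$-dimension $mn$, so $N:=\dim_{\mathbb{F}_q}C(x)\le mn$. Pick any $\mathbb{F}_q$-basis $F_1,\ldots,F_N$ of $C(x)$ and pad with zeros to obtain $F_1,\ldots,F_{mn}\in C(x)$. The inclusion $(F_1,\ldots,F_{mn})_{\mathbb{F}_q}\subseteq C(x)$ follows because $C(x)$ is an $\mathbb{F}_q$-left ideal (Lemma \ref{char general}), so it is closed under symbolic product by elements of $\mathcal{L}_{q^r}\mathbb{F}_q[x]$. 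Conversely, any $F\in C(x)$ is an $\mathbb{F}_q$-linear combination of the $F_i$, which is the special case of $\sum Q_i\otimes F_i$ in which each $Q_i=q_i\,x$ with $q_i\in\mathbb{F}_q$; this gives the reverse inclusion.

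The whole argument is essentially bookkeeping, and no step is a genuine obstacle. The one spot that requires care is the index computation in item 1: one must track simultaneously the cyclic shift of the coefficient sequence of $G_i$ and the Frobenius twist $[jr]$ applied to each coefficient. Once the symbolic product $x^{[jr]}\otimes G_i$ is expanded in the basis $x,x^{[r]},\ldots,x^{[(n-1)r]}$ of the quotient ring, the stated form of the vectors falls out by rewriting indices modulo $n$.
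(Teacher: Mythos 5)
Your proposal is correct and follows essentially the same route as the paper: the paper's (very terse) proof also rests on identifying $x^{[jr]}\otimes G_i$ with the twisted cyclic shift $(G_{i,n-j}^{[jr]},\ldots,G_{i,n-j-1}^{[jr]})$ and then counting, with item 3 reduced to $\dim_{\mathbb{F}_q} C \leq mn$. Your version simply makes the index bookkeeping and the padding argument explicit.
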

\begin{proof}
The first item follows from the fact that $ x^{[j]} \otimes G_j $ corresponds to the vector $ (G_{i,n-j}^{[jr]}, G_{i,n-j+1}^{[jr]}, \ldots, G_{i,n-j-1}^{[jr]}) $. The second item follows from this first item, and the third item follows from the fact that $ \dim(C) \leq mn $.
\end{proof}

Now we see that classical cyclic codes actually correspond to principal $ \mathbb{F}_q $-left ideals. For that purpose, let the assumptions be as in Theorem \ref{classical skew} and define the operators $ L,E : \mathbb{F}_q[x] / (x^n - 1) \longrightarrow \mathcal{L}_{q^r} \mathbb{F}_{q^n}[x]/(x^{[rn]} - x) $ as
$$ L (f_0 + f_1 x + \cdots + f_{n-1} x^{n-1}) = f_0x + f_1 x^{[r]} + \cdots + f_{n-1} x^{[(n-1)r]}, \textrm{ and} $$
$$ E (g_0 + g_1 x + \cdots + g_{n-1} x^{n-1}) = g_0 \alpha x + g_1 \alpha^{[r]} x^{[r]} + \cdots + g_{n-1} \alpha^{[(n-1)r]} x^{[(n-1)r]}, $$
where $ f_i,g_i \in \mathbb{F}_q $, for $ i = 0,1, \ldots, n-1 $.
 
\begin{proposition} \label{principal classical}
With the assumptions as in Theorem \ref{classical skew}, for all $ f(x), g(x) \in \mathbb{F}_q[x] $ $ /(x^n - 1) $, it holds that
\begin{equation} \label{eq 1 in principal}
L(f(x)) \otimes E(g(x)) = E(f(x) g(x)).
\end{equation}
In particular, if $ [ g(x) ] $ denotes the ideal in $ \mathbb{F}_q[x]/(x^n - 1) $ generated by $ g(x) $, then 
\begin{equation} \label{eq 2 in principal}
E([g(x)]) = (E(g(x)))_{\mathbb{F}_q}.
\end{equation}
This means that, if $ C \subseteq \mathbb{F}_q^n $ is cyclic, then $ E(C)(x) $ is a principal $ \mathbb{F}_q $-left ideal generated by $ E(g(x)) $ if $ g(x) $ generates the ideal in $ \mathbb{F}_q[x]/(x^n - 1) $ corresponding to $ C $.
\end{proposition}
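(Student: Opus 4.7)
The plan is to first establish (\ref{eq 1 in principal}) by direct computation, deduce (\ref{eq 2 in principal}) from it by a surjectivity argument, and then unwind the identifications between cyclic codes, their generating ideals, and their images under $E$.

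First, I compute $L(f) \otimes E(g)$ by definition, using that $L(f)$ acts as $y \mapsto \sum_{j=0}^{n-1} f_j y^{[jr]}$ and $E(g)$ sends $x$ to $\sum_{i=0}^{n-1} g_i \alpha^{[ir]} x^{[ir]}$. Substituting gives
\begin{displaymath}
L(f) \otimes E(g) \; = \; \sum_{j=0}^{n-1} f_j \bigg( \sum_{i=0}^{n-1} g_i \alpha^{[ir]} x^{[ir]} \bigg)^{[jr]} \; = \; \sum_{i,j=0}^{n-1} f_j g_i^{[jr]} \alpha^{[(i+j)r]} x^{[(i+j)r]}.
\end{displaymath}
Here the three crucial simplifications kick in: (i) $g_i \in \mathbb{F}_q$, hence $g_i^{[jr]} = g_i$; (ii) $\alpha \in \mathbb{F}_{q^n}$ gives $\alpha^{[nr]} = \alpha$, so $\alpha^{[(i+j)r]}$ depends only on $(i+j) \bmod n$; and (iii) in the quotient ring $x^{[rn]} - x = 0$, so $x^{[(i+j)r]} \equiv x^{[kr]}$ when $k \equiv i+j \pmod{n}$. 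Grouping by $k = i + j \bmod n$ gives
\begin{displaymath}
L(f) \otimes E(g) \; = \; \sum_{k=0}^{n-1} \bigg( \sum_{i+j \equiv k \pmod{n}} f_j g_i \bigg) \alpha^{[kr]} x^{[kr]},
\end{displaymath}
and the inner sum is exactly the $k$-th coefficient of the product $f(x) g(x)$ in $\mathbb{F}_q[x]/(x^n-1)$. This yields $L(f) \otimes E(g) = E(f(x)g(x))$, proving (\ref{eq 1 in principal}).

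For (\ref{eq 2 in principal}), the inclusion $E([g(x)]) \subseteq (E(g(x)))_{\mathbb{F}_q}$ is immediate from (\ref{eq 1 in principal}), since every element of $[g(x)]$ has the form $f(x)g(x)$. For the reverse inclusion, I need that every $Q \in \mathcal{L}_{q^r}\mathbb{F}_q[x]$, viewed modulo $x^{[rn]} - x$, is of the form $L(f)$ for some $f \in \mathbb{F}_q[x]/(x^n - 1)$. This holds because $x^{[rn]} - x$ itself lies in $\mathcal{L}_{q^r}\mathbb{F}_q[x]$, so the right Euclidean division of $Q(x)$ by $x^{[rn]} - x$ stays inside $\mathcal{L}_{q^r}\mathbb{F}_q[x]$ and produces a remainder of $q^r$-degree at most $n-1$ with coefficients in $\mathbb{F}_q$; such a remainder is precisely $L(f)$ for the obvious $f$. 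Then (\ref{eq 1 in principal}) converts $Q \otimes E(g)$ into $L(f) \otimes E(g) = E(f(x)g(x)) \in E([g(x)])$.

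Finally, for the concluding claim about cyclic codes: if $C \subseteq \mathbb{F}_q^n$ is cyclic with corresponding ideal $[g(x)] \subseteq \mathbb{F}_q[x]/(x^n - 1)$ under the standard coefficient identification, then because the vector-level map $E$ in (\ref{map E}) corresponds coordinate-by-coordinate to the polynomial-level map $E$ (owing to the choice $\alpha_i = \alpha^{[ir]}$), the composition $\gamma_r \circ E$ applied to vectors of $C$ matches the polynomial-level $E$ applied to $[g(x)]$. Thus $E(C)(x) = E([g(x)]) = (E(g(x)))_{\mathbb{F}_q}$, so $E(C)(x)$ is the principal $\mathbb{F}_q$-left ideal generated by $E(g(x))$. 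I expect the main obstacle to be bookkeeping in step one, especially making the reductions modulo $x^{[rn]} - x$ in the ring and the reductions modulo $x^n - 1$ in $\mathbb{F}_q[x]$ match up cleanly; the key ingredient throughout is that coefficients in $\mathbb{F}_q$ are fixed by every power of Frobenius.
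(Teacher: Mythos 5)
Your proposal is correct and follows essentially the same route as the paper: equation (\ref{eq 1 in principal}) by direct expansion of the symbolic product using that the $f_j, g_i \in \mathbb{F}_q$ are Frobenius-fixed and that $\alpha^{[nr]} = \alpha$, and then (\ref{eq 2 in principal}) and the final claim as consequences. Your justification of the reverse inclusion in (\ref{eq 2 in principal}) — reducing any $Q \in \mathcal{L}_{q^r}\mathbb{F}_q[x]$ modulo $x^{[rn]}-x$ to an $L(f)$ — is a correct filling-in of a step the paper dismisses as immediate.
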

\begin{proof}
If $ f(x) = f_0 + f_1 x + \cdots + f_{n-1} x^{n-1} $ and $ g(x) = g_0 + g_1 x + \cdots + g_{n-1} x^{n-1} $, then
$$ L(f(x)) \otimes E(g(x)) = \sum_{i=0}^{n-1} \left( \sum_{j=0}^{n-1} f_{i-j} g_j (\alpha^{[jr]})^{[(i-j)r]} \right) x^{[ir]} $$
$$ = \sum_{i=0}^{n-1} \left( \sum_{j=0}^{n-1} f_{i-j} g_j \right) \alpha^{[ir]} x^{[ir]} = E(f(x) g(x)), $$
and Equation (\ref{eq 1 in principal}) follows. The second part (\ref{eq 2 in principal}) follows immediately from (\ref{eq 1 in principal}).
\end{proof}

On the other hand, if $ C(x) = (G_1,G_2, \ldots, G_t)_{\mathbb{F}_q} $, then the $ \mathbb{F}_{q^m} $-linear code generated by $ C(x) $ is
$$ C(x)_{\mathbb{F}_{q^m}} = (G_1,G_2, \ldots, G_t) = (D), $$
where $ D $ is the greatest common divisor of $ G_1, G_2, \ldots, G_t $ in the quotient ring $ \mathcal{L}_{q^r} \mathbb{F}_{q^m}[x] / $ $ (x^{[rn]} - x) $. Therefore, $ d_R(C(x)) \geq d_R( (D) ) $, and the $ q^r $-root space $ T = Z(D) = Z(G_1) \cap Z(G_2) \cap \ldots \cap Z(G_t) $ may be used to give bounds on the minimum rank distance of $ C(x) $, using for example the bounds in Section \ref{section bounds}.

Now we come to the main result in this subsection, where we see that the $ \mathbb{F}_{q^n} $-linear code generated by a classical cyclic code is again principal, with the same minimal generator and corresponding dimension, but its minimum rank distance is lower than the minimum Hamming distance of the original cyclic code. In particular, this gives a negative criterion for MRD skew cyclic codes in terms of MDS cyclic codes. 

\begin{theorem} \label{classical minimal}
With the assumptions as in Theorem \ref{classical skew}, if $ g(x) \in \mathbb{F}_q[x] /(x^n - 1) $ is the minimal generator of the $ \mathbb{F}_q $-linear cyclic code $ C \subseteq \mathbb{F}_q^n $ and $ \widehat{C} = \langle E(C) \rangle_{\mathbb{F}_{q^n}} $, then $ \widehat{C} $ is the $ \mathbb{F}_{q^n} $-linear $ q^r $-cyclic code corresponding to
$$ \widehat{C}(x) = (E(g(x))). $$
Moreover, $ E(g(x)) $ is the minimal generator of $ \widehat{C}(x) $, and:
\begin{enumerate}
\item
$ d_R(\widehat{C}) \leq d_H(C) $, $ \dim_{\mathbb{F}_{q^n}}(\widehat{C}) = \dim_{\mathbb{F}_q}(C) $.
\item
If $ \widehat{C} $ is MRD, then $ C $ is MDS.
\end{enumerate}
\end{theorem}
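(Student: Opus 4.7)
The plan is to reduce everything to three ingredients already available in the paper: Proposition \ref{principal classical}, which computes $E$ on the classical cyclic ideal; Theorem \ref{generator}, item 3, which links $\mathbb{F}_{q^m}$-dimension to the $q^r$-degree of the minimal generator; and the Singleton-type bounds recalled in Subsection \ref{subsec rank codes}.

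First I would identify $\widehat{C}(x)$. If $[g(x)]$ denotes the ideal of $\mathbb{F}_q[x]/(x^n-1)$ corresponding to $C$, then Proposition \ref{principal classical} gives $E(C)(x) = (E(g(x)))_{\mathbb{F}_q}$. Taking the $\mathbb{F}_{q^n}$-linear span and observing that scalar multiplication by $a\in\mathbb{F}_{q^n}$ on $\mathcal{L}_{q^r}\mathbb{F}_{q^n}[x]/(x^{[rn]}-x)$ coincides with symbolic left multiplication by the constant $q^r$-polynomial $ax$, this span equals the principal left ideal $(E(g(x)))$. This yields $\widehat{C}(x)=(E(g(x)))$ as claimed.

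Next I would show that $E(g(x))$ has minimal degree among generators. Writing $d=\deg(g(x))$, so $\dim_{\mathbb{F}_q}(C)=n-d$, a direct computation shows $\deg_{q^r}(E(g(x)))=d$, because the leading coefficient $\alpha^{[dr]}$ is non-zero. By Theorem \ref{generator}, the minimal generator $G$ of $\widehat{C}(x)$ satisfies $\dim_{\mathbb{F}_{q^n}}(\widehat{C})=n-\deg_{q^r}(G)$, and $\deg_{q^r}(G)\leq d$ since $E(g(x))$ lies in the ideal, giving $\dim_{\mathbb{F}_{q^n}}(\widehat{C})\geq n-d$. In the opposite direction, $\widehat{C}$ is the $\mathbb{F}_{q^n}$-span of the $\mathbb{F}_q$-subspace $E(C)$, whose $\mathbb{F}_q$-dimension is $n-d$ (since $E$ is $\mathbb{F}_q$-linear and injective), so any $\mathbb{F}_q$-basis of $E(C)$ spans $\widehat{C}$ over $\mathbb{F}_{q^n}$ and $\dim_{\mathbb{F}_{q^n}}(\widehat{C})\leq n-d$. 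Combining gives the dimension identity of item 1 and forces $\deg_{q^r}(G)=d$, so that $E(g(x))$ agrees with the minimal generator up to a scalar.

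Finally, the two inequalities come out immediately. Pick $\mathbf{c}\in C$ with $\mathrm{wt_H}(\mathbf{c})=d_H(C)$; by Theorem \ref{classical skew}, $E(\mathbf{c})\in\widehat{C}$ and $\mathrm{wt_R}(E(\mathbf{c}))=\mathrm{wt_H}(\mathbf{c})$, so $d_R(\widehat{C})\leq d_H(C)$. For item 2, if $\widehat{C}$ is MRD then $d_R(\widehat{C})=n-\dim_{\mathbb{F}_{q^n}}(\widehat{C})+1=n-\dim_{\mathbb{F}_q}(C)+1$, whence $d_H(C)\geq n-\dim_{\mathbb{F}_q}(C)+1$, and the classical Singleton bound then forces equality, i.e. $C$ is MDS. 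The only slightly delicate step in the whole argument is justifying that the $\mathbb{F}_{q^n}$-span of the $\mathbb{F}_q$-left ideal $(E(g(x)))_{\mathbb{F}_q}$ equals the principal $\mathbb{F}_{q^n}$-left ideal $(E(g(x)))$; this is where one must carefully reconcile the $\mathbb{F}_{q^n}$-vector space structure of the quotient ring with its non-commutative multiplication, by using that constant $q^r$-polynomials act on the left as coefficient-wise scalar multiplication.
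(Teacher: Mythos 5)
Your proposal is correct and follows essentially the same route as the paper: both identify $\widehat{C}(x)=(E(g(x)))$ via Proposition \ref{principal classical}, pin down the minimal generator and dimension through Theorem \ref{generator}, and get $d_R(\widehat{C})\leq d_H(C)$ from the weight-preserving property of $E$, with item 2 then following from the two Singleton bounds. The only (immaterial) difference is that the paper establishes $\dim_{\mathbb{F}_{q^n}}(\widehat{C})=k$ by exhibiting the $k$ $q^r$-shifted images of $g$ as an echelon-form generating set (using $g_{n-k}\neq 0$), whereas you squeeze the dimension between the degree bound $\deg_{q^r}(E(g(x)))=n-k$ and the spanning bound from an $\mathbb{F}_q$-basis of $E(C)$.
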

\begin{proof}
It is well-known that the shifted vectors in $ \mathbb{F}_q^n $, 
$$ (g_0, g_1, \ldots, g_{n-k}, 0, \ldots, 0), (0,g_0, g_1, \ldots, g_{n-k}, 0, \ldots, 0), \ldots, $$
$$ (0, \ldots, 0, g_0, g_1, \ldots, g_{n-k}) $$
constitute a basis of $ C $, where $ g(x) = g_0 + g_1x + \cdots + g_{n-k} x^{n-k} $ and $ g_{n-k} \neq 0 $. By Proposition \ref{generators general} and Proposition \ref{principal classical}, the $ q^r $-shifted vectors in $ \mathbb{F}_{q^n}^n $, 
$$ (g_0 \alpha, g_1 \alpha^{[r]}, \ldots, g_{n-k} \alpha^{[(n-k)r]}, 0, \ldots, 0), $$
$$ (0, g_0 \alpha^{[r]}, g_1 \alpha^{[2r]}, \ldots, g_{n-k} \alpha^{[(n-k+1)r]}, 0, \ldots, 0), \ldots $$
$$ (0, \ldots, 0, g_0 \alpha^{[(n-k-1)r]}, g_1 \alpha^{[(n-k)r]}, \ldots, g_{n-k} \alpha^{[(n-1)r]}) $$
generate $ \widehat{C} $ as an $ \mathbb{F}_{q^n} $-linear vector space. Since $ g_{n-k} \neq 0 $, it follows that these vectors are linearly independent over $ \mathbb{F}_{q^n} $. Hence the result follows from Theorem \ref{generator} and the fact that $ d_H(C) = d_R(E(C)) \geq d_R(\langle E(C) \rangle_{\mathbb{F}_{q^n}}) = d_R(\widehat{C}) $.
\end{proof}

\begin{example}
Consider the repetition cyclic code $ C \subseteq \mathbb{F}_q^n $ generated by $ (1,1, \ldots, 1) $ and assume $ r=1 $. Then $ E(C) $ is the $ \mathbb{F}_q $-linear code generated by $ (\alpha, \alpha^{[1]}, \ldots, $ $ \alpha^{[n-1]}) $, and hence the $ \mathbb{F}_{q^n} $-linear code generated by $ E(C) $ is $ \widehat{C} $, also generated by the same vector.

It holds that $ \dim_{\mathbb{F}_q}(C) = 1 $, $ d_H(C) = n $ and $ C $ is MDS. On the other hand, $ \dim_{\mathbb{F}_{q^n}}(\widehat{C}) = 1 $, $ d_R(\widehat{C}) = n $ and $ \widehat{C} $ is MRD.
\end{example}

\begin{example}
Assume that $ r=1 $ and $ n $ is even, and consider the cyclic code $ C \subseteq \mathbb{F}_q^n $ generated by $ (1,0,1,0, \ldots, 0) $ and $ (0,1,0,1, \ldots, 1) $. Then $ \widehat{C} $ is the $ \mathbb{F}_{q^n} $-linear code generated by $ (\alpha,0, \alpha^{[2]},0, \ldots, 0) $ and $ (0,\alpha^{[1]},0, \alpha^{[3]}, \ldots, \alpha^{[n-1]}) $.

It holds that $ \dim_{\mathbb{F}_q}(C) = 2 $, $ d_H(C) = n/2 $. On the other hand, $ \dim_{\mathbb{F}_{q^n}}(\widehat{C}) $ $ = 2 $, $ d_R(\widehat{C}) = n/2 $. Hence both have the same parameters and none reach the Singleton bounds for the corresponding metrics. Moreover, the minimal generator of $ C $ is $ g(x) = 1 + x^2 + x^4 + \cdots + x^{n-2} $, whereas the minimal generator of $ \widehat{C} $ is $ E(g(x)) = \alpha x + \alpha^{[2]} x^{[2]} + \alpha^{[4]} x^{[4]} + \cdots + \alpha^{[n-2]}x^{[n-2]} $. 
\end{example}

\section*{Acknowledgement}

The author wishes to thank Olav Geil, Ruud Pellikaan and Diego Ruano for fruitful discussions and careful reading of the manuscript. The author also gratefully acknowledges the support from The Danish Council for Independent Research (Grant No. DFF-4002-00367).


\begin{thebibliography}{10}

\bibitem{skewcyclic1}
D.~Boucher, W.~Geiselmann, and F.~Ulmer.
\newblock Skew-cyclic codes.
\newblock {\em Applicable Algebra in Engineering, Communication and Computing},
  18(4):379--389, 2007.

\bibitem{skewcyclic2}
D.~Boucher and F.~Ulmer.
\newblock Coding with skew polynomial rings.
\newblock {\em Journal of Symbolic Computation}, 44(12):1644 -- 1656, 2009.
\newblock Gr{\"o}bner Bases in Cryptography, Coding Theory, and Algebraic
  Combinatorics.

\bibitem{skewcyclic3}
L.~Chaussade, P.~Loidreau, and F.~Ulmer.
\newblock Skew codes of prescribed distance or rank.
\newblock {\em Designs, Codes and Cryptography}, 50(3):267--284, 2009.

\bibitem{delsarte}
P.~Delsarte.
\newblock On subfield subcodes of modified reed-solomon codes.
\newblock {\em IEEE Transactions on Information Theory}, 21(5):575--576, September
  2006.

\bibitem{oggiercyclic}
J.~Ducoat and F.~Oggier.
\newblock Rank weight hierarchy of some classes of cyclic codes.
\newblock In {\em Information Theory Workshop (ITW), 2014 IEEE}, pages
  142--146, Nov 2014.

\bibitem{duursmasymmetric}
I.~M. Duursma and R.~Pellikaan.
\newblock A symmetric {R}oos bound for linear codes.
\newblock {\em Journal of Combinatorial Theory, Series A}, 113(8):1677 -- 1688,
  2006.
\newblock Special Issue in Honor of Jacobus H. van Lint.

\bibitem{gabidulin}
E.~M. Gabidulin.
\newblock Theory of codes with maximum rank distance.
\newblock {\em Problems Informormation Transmission}, 21, 1985.

\bibitem{qcyclic}
E.~M. Gabidulin.
\newblock Rank q-cyclic and pseudo-q-cyclic codes.
\newblock In {\em Proceedings IEEE International Symposium on Information Theory,
  2009. ISIT 2009.}, pages 2799--2802, June 2009.

\bibitem{gursoy}
F.~Gursoy, I.~Siap, and B.~Yildiz.
\newblock Construction of skew cyclic codes over $\mathbb{F}_q+v\mathbb{F}_q$.
\newblock {\em Advances in Mathematics of Communications}, 8(3):313--322, 2014.

\bibitem{hartmann}
C.R.P. Hartmann and K.K. Tzeng.
\newblock Generalizations of the {BCH} bound.
\newblock {\em Information and Control}, 20(5):489 -- 498, 1972.

\bibitem{pless}
W.~C. Huffman and V.~Pless.
\newblock {\em Fundamentals of error-correcting codes}.
\newblock Cambridge University Press, Cambridge, 2003.

\bibitem{new-construction}
A.~Kshevetskiy and E.~M. Gabidulin.
\newblock The new construction of rank codes.
\newblock In {\em Proceedings IEEE International Symposium on Information Theory,
  2005. ISIT 2005.}, pages 2105--2108, Sept 2005.

\bibitem{lidl}
R.~Lidl and H.~Niederreiter.
\newblock {\em Finite Fields}, volume~20.
\newblock Encyclopedia of Mathematics and its Applications. Addison-Wesley,
  Amsterdam, 1956.

\bibitem{similarities}
U.~Mart{\'i}nez-Pe{\~n}as.
\newblock On the similarities between generalized rank and {H}amming weights
  and their applications to network coding.
\newblock {\em IEEE Transactions on Information Theory}, 62(7):4081--4095, 2016. 

\bibitem{orespecial}
O.~Ore.
\newblock On a special class of polynomials.
\newblock {\em Transactions of the American Mathematical Society}, 35(3):559--584, 1933.

\bibitem{ore}
O.~Ore.
\newblock Theory of non-commutative polynomials.
\newblock {\em Annals of Mathematics (2)}, 34(3):480--508, 1933.

\bibitem{pellikaanshift}
R.~Pellikaan.
\newblock The shift bound for cyclic, {R}eed-{M}uller and geometric {G}oppa
  codes.
\newblock In {\em Arithmetic, Geometry and Coding Theory}, volume~4, pages
  155--174. Luminy, 1996.


\bibitem{on-metrics}
D.~Silva and F.~R. Kschischang.
\newblock On metrics for error correction in network coding.
\newblock {\em IEEE Transactions on Information Theory}, 55(12):5479--5490, 2009.

\bibitem{rankdistancecyclic}
U.~Sripati and B.~S. Rajan.
\newblock On the rank distance of cyclic codes.
\newblock In {\em Proceedings IEEE International Symposium on Information Theory,
  2003. ISIT 2003.}, pages 72--, June 2003.

\bibitem{stichtenoth}
H.~Stichtenoth.
\newblock On the dimension of subfield subcodes.
\newblock {\em IEEE Transactions on Information Theory}, 36(1):90--93, Jan 1990.

\bibitem{vanlint}
J.~van Lint and R.~Wilson.
\newblock On the minimum distance of cyclic codes.
\newblock {\em IEEE Transactions on Information Theory}, 32(1):23--40, Jan
  1986.

\end{thebibliography}

\end{document}